\newlength{\oldparindent}
\newcommand{\PE}{\mathbb{P}} 
\newcommand{\xrn}{\xrightarrow} 
\newcommand{\nrn}{\rightarrow +\infty} 
\newcommand{\ES}{\mathbb{E}} 
\newcommand{\DNT}{\mathcal{D}^{n}_t} 
\newcommand{\CNT}{\mathcal{C}^{n,i}_{t}}
\newcommand{\CNTS}{\mathcal{C}^{n,i}_{t,s}}
\newcommand{\CN}{\mathcal{C}^{n,i}}
\def\esp#1{\mathbb{E} \left[ {#1} \right]}
\newcommand{\1}{{\bf 1}} 
\def\pr#1#2{\mathbb{P}_{#1} \left[ {#2} \right]} 
\def\VA#1{\left| {#1} \right|}
\def\espcond#1#2{\mathbb{E}_{#1}^n \left[ {#2} \right]}
\newcommand{\Hy}{{\cal H}}
\newcommand{\PRB}{\mathbb{P}} 
\newcommand{\R}{\mathbb{R}}
\newcommand{\bL}{\mathbb{L}} 
\newcommand{\N}{\mathbb{N}}
\newcommand{\F}{{\cal F}} 
\newcommand{\G}{{\cal G}} 
\newcommand{\cLe}{\mathcal{L}^{\epsilon}} 
\newcommand{\cMe}{\mathcal{M}^{\epsilon}} 
\newcommand{\cNe}{\mathcal{N}^{\epsilon}} 
\newcommand{\demi}{\frac{1}{2}} 
\newcommand{\eps}{\varepsilon} 
\newtheorem{theorem}{Theorem}
\newtheorem{lemma}[theorem]{Lemma} 
\newtheorem{proposition}[theorem]{Proposition} 
\newtheorem{remark}[theorem]{Remark} 
\author{Alexander Alvarez \footnote{La Habana University, \texttt{alexalvarezh@gmail.com}},
Fabien Panloup\footnote{Institut de Mathématiques de Toulouse et INSA Toulouse,  \texttt{fabien.panloup@.math.univ-toulouse.fr}}, Monique Pontier \footnote{Institut de Mathématiques de Toulouse et Université de Toulouse, \texttt{monique.pontier@.math.univ-toulouse.fr}}, Nicolas Savy \footnote{Institut de Mathématiques de Toulouse et Université de Toulouse,  \texttt{nicolas.savy@.math.univ-toulouse.fr}}}
\title{\textbf{Estimation of the instantaneous volatility}}
\begin{document}

\maketitle

\begin{abstract}
This paper is concerned with the estimation of the volatility process in a stochastic volatility model of the following form: $dX_t=a_tdt+\sigma_tdW_t$, where $X$ denotes the log-price and $\sigma$ is a c\`adl\`ag semi-martingale.  In the spirit of a series of recent works on the estimation of 
the cumulated volatility, we here focus on the instantaneous volatility for which we study estimators built as finite differences of the \textit{power variations} of the log-price. We provide central limit theorems with an optimal rate depending on the local behavior of $\sigma$.
In particular, these theorems yield some confidence intervals for $\sigma_t$. 
\end{abstract}

\noindent \textit{Keywords}: Central limit theorem, Power variation, semimartingale.
\\
\noindent \textit{AMS Classification (2000)}: Primary 60F05; Secondary 91B70, 91B82.

\section{Introduction} \label{sec1}
The financial market objects offering a great complexity of modelling,  the development and the study of financial models    has attracted a lot of attention in recent years. For such models, a key parameter is the volatility, which is of paramount importance. The fact that the volatility is not constant has been observed for a long time. Thus, since the famous but too stringent Black and Scholes model, many stochastic volatility models have been introduced. Among them, models where jumps occur are now widely spread in the literature (see $e.g.$ \cite{ConTan} for a review and \cite{CPR08} for a list of recent studies on this topic), mainly because there are able to fit skews and smiles that can not be captured by continuous models.\\
In this paper, we deal with the following kind of model:
$$
X_t =x+\int_0^t a_s ds +\int_0^t  \sigma_s dW_s \quad \forall t\ge0,
$$
where $W$ is a Brownian motion and $\sigma$ is a c\`adl\`ag semi-martingale (assumptions will be made precise in the next section). At this stage, one can remark the main restriction of our model: jumps only occur in the volatility but not in the price. This restriction will be explained in the sequel.\\ 
When such a model is discretely observed, a (now) classical tool for the estimation of the volatility is to make use of the power variations of order $p$ (see next section for details) that have some convergence properties to the cumulated volatility process: $\int_0^t|\sigma_s|^pds$ (when $p=2$, this type of result is only the convergence of the quadratic variations to the angle bracket of the continuous semi-martingale $X$). The study of such estimators of the integrated volatility and its use for the detection of jumps have been deeply studied in the last years (see for instance  \cite{BNSh02,BarShep1,W05} for the continuous setting,  \cite{A04,AJ07,Mancini01,M04,W06,W05,jac1,jac3} for the discontinuous setting and  the more recent papers \cite{JPV08,AlVe10,RR09}).\\ 
Unlike these works, the aim of this paper is to estimate rather the instantaneous volatility. Then, the natural idea is to study estimators which are built as ``derivatives'' of the power variations. More precisely, the proposed estimator of the instantaneous volatility is a normalized relative increment of cumulative volatility estimator, this relative increment being taken on  a smaller and smaller interval. \\
We provide some central limit theorems for the $\sigma_t$-estimator and we exhibit an optimal rate depending on the local behavior of $\sigma$. More precisely, if  a Brownian component 
exists in $\sigma$, the best rate is of order $n^{1/4}$ and otherwise, it depends on the intensity of jumps. In particular, when the jump component has finite-variation, the optimal rate is of order $n^{1/3}$. These central limits lead in particular to some confidence intervals for $\sigma_t$ and to an asymptotic control of the relative error between the estimator and~$\sigma_t$.\\
When jumps occur in the log-price $X$, it seems that we could extend some of the previous announced results by exploiting the fact that convergence properties for the power variations to the cumulated volatility still hold when $p<2$. However, this extension generates some technicalities which are out of ours objectives.\\
The paper is organized as follows. In Section \ref{sec2}, we introduce the model we deal with, we present the different assumptions for this study and we state our main theorems: Central Limit Theorems for the instantaneous volatility.  Section \ref{sec3} is the proof of these theorems. From Theorem \ref{thprincipal}, we easily deduce a confidence interval for the instantaneous volatility, this is shown in  Section \ref{sec4}. Moreover, we stress the fact that 
 the confidence interval length increases with $p$.  Finally, the volatility estimator is tested on some simulations in Section \ref{sec6}.

\section{Setting and Main Results} \label{sec2}
\noindent We consider a stochastic process $(X_t)_{t\ge0}$ defined on a filtered probability space by: 
\begin{equation} \label{dif}
dX_t=a_tdt+\sigma_tdW_t, \quad t\ge0,
\end{equation}
where $W$ is an $(\mathcal{F}_t)$-adapted Wiener process on $(\Omega, \mathcal{F}, (\mathcal{F}_t), \PRB)$ which satisfies the usual conditions, $a:{\R}_+\rightarrow{\R}$ and $\sigma$ are some c\`adl\`ag $(\mathcal{F}_t)$-adapted processes. Furthermore, $\sigma$ is assumed to be a positive process.\\
Let $T$ be a positive number and assume that $X$ is observed at times $i\Delta_n$ for all $i=0,1,\ldots,[\frac{T}{\Delta_n}]$.
In the sequel, we will assume that $\Delta_n \xrightarrow[n \rightarrow +\infty]{} 0$.\\
In this paper, we want to estimate $\sigma_t$ using the asymptotic properties of the observed discrete increments of $X$. For $p>0$, we denote by $\hat{B}(p,\Delta_n)$, the process of \textit{power variations of order $p$}, $i.e.$ the stochastic process defined by
$$
\widehat{B}(p,\Delta_n)_t :=\sum_{i=1}^{[t/\Delta_n]} \VA{\Delta_i^n X}^p, \quad t \in [0,T]
$$
where $\Delta_i^n X := X_{i\Delta_n} - X_{(i-1)\Delta_n} $.\\

The sequence $(\widehat{B}(p,\Delta_n)_t)_n$ is now classically known as an estimator of $\int_0^t\sigma_s^p ds$. We are going to recall some existing results about the convergence of this sequence but before, we want  to precise the assumptions on $\sigma$ that will be necessary throughout the paper. We introduce the following assumptions depending on parameter $q\in[1,2]$ which is related to the  behavior of the small jumps of $(\sigma_t)$:
\begin{itemize}
\item[$\mathbf{(H^1_q)}$ :] 
$\sigma$ is a positive c\`adl\`ag  semimartingale such that $\sigma_t=|Y_t|$ where $(Y_t)$ satisfies: 
\begin{align*}
dY_s &= b_sds+\eta_1(s)dW_s+\eta_2(s)dW^2_s\\
     &\qquad + \int_{\R}y\1_{|y|\leq 1}(\mu(ds,dy)-\nu(ds,dy)) +\int_{\R}y\1_{|y|>1}\mu(ds,dy),
\end{align*}
where $b$, $\eta_1,\eta_2$ are adapted 
c\`adl\`ag  processes, $\mu$ denotes a random measure on $\R_+\times\R$ with predictable compensator $\nu$ satisfying:
$\nu(dt,dy)=dt F_t(dy)$ and $(\int(1\wedge |y|^q)F_t(dy))_{t\ge0}$ is a locally bounded predictable process.
\end{itemize}
The above assumption on the predictable compensator imply that $(\sigma_t)$ is quasi-left continuous and that the jump component has locally-finite $q$-variation. In order to obtain our main results, we actually need to introduce a little more constraining control of the jump component:
\begin{itemize}
\item[$\mathbf{(H_q^2)}$ :]  For every $ T>0$,
$$\lim_{\varepsilon\rightarrow0}\sup_{t\in[0,T]}\int_{\{|y|\le\varepsilon\}} |y|^q F_t(dy)=0\qquad a.s.$$
\end{itemize}

As an example, if $(Y_t)$ is a solution to the following SDE:
\begin{equation}
\label{exlevy}
dY_t=b(Y_{t^-})dt+\varsigma_1(Y_{t^-})d\tilde{W}_t+\varsigma_2(Y_{t^-})d\tilde{W}^2_t\kappa(Y_{t^-})dZ_t,
\end{equation}
where  $b:\R\rightarrow\R$, $\varsigma:\R\mapsto\mathbb \R$, $\varsigma_2:\R\mapsto\mathbb \R$ and $\kappa:\R\mapsto\mathbb \R$ are some continuous functions with sublinear growth, $(\tilde{W}_{t})_{t\ge0}$ is a Brownian motion and $(Z_{t})_{t\ge0}$ is a centered purely discontinuous L\'evy process independent of $(W_{t})_{t\ge0}$ with  L\'evy measure $\pi$ satisfying $ \int(|y|^q\wedge 1) \pi(dy)<\infty,~q\in[1,2],$ 
then Assumptions  $\mathbf{(H^1_q)}$ and $\mathbf{(H^2_q)}$ hold.

\noindent Before going further, we also need to remind the definition of stable convergence that we denote by ${\cal L}-s$. We say that a sequence of random variables $(Y_n)$ converges $stably$ to $Y$ or $Y_n\overset{{\cal L}-s}{\Rightarrow} Y$, if there exists an extension $(\tilde{\Omega},{\tilde{\cal F}},\tilde{\PE})$ of $(\Omega,{\cal F},\PE)$ and a random variable $Y$ defined on $(\tilde{\Omega},{\tilde{\cal F}},\tilde{\PE})$ such that for every bounded measurable random variable $H$, for every bounded continuous function $f$,
$\esp{H f(Y_n)} \rightarrow \tilde{\ES}[H f(Y)]$ when $n\rightarrow+\infty$ where $\tilde{\ES}$ denotes the expectation on the extension.\\

\noindent Now, we can recall two results (adapted to our context) about the asymptotic properties of $(\widehat{B}(p,\Delta_n)_t)_n$, from L\'epingle \cite{lepingle} (see also \cite{jac3}, Theorem 2.4) and A\"it Sahalia and Jacod \cite[Theorem 2]{jac1} respectively. On the same topic, we can also quote \cite{BGJPS,BarShep1}.

\begin{proposition} \label{lepi}
Assume $\mathbf{(H_2^2)}$. Let $p$ be a positive number and set $m_p:=\esp{|U|^p}$ where $U\sim{\cal N}(0,1).$ Then, 
locally uniformly in $t$,
$$\Delta_n^{1-\frac{p}{2}}\hat{B}(p,\Delta_n)_t \xrightarrow[n \rightarrow +\infty]{\PRB} m_p A(p)_t\quad\textit{with}\quad A(p)_t=\int_0^t\sigma_s^pds.$$
\end{proposition}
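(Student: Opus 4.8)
The plan is to localize and then reduce to the well-understood case where $\sigma$ is replaced by a suitable approximation. By a standard localization argument (using that the processes $a$, $b$, $\eta_1$, $\eta_2$ and the local characteristics appearing in $\mathbf{(H_2^2)}$ are only assumed locally bounded), it suffices to prove the convergence under the strengthened hypothesis that $a$, $\sigma$, $1/\sigma$ and the coefficients of $Y$ are uniformly bounded, and that $\sup_{t\in[0,T]}\int_{\{|y|\le\varepsilon\}}|y|^2F_t(dy)\to0$ uniformly (not merely a.s.). Under these reinforced assumptions we may also assume $\int|y|^2 \1_{|y|>1}F_t(dy)$ is bounded, so that $\sigma$ is in fact a square-integrable semimartingale with bounded characteristics.

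Next I would split $\Delta_i^n X = \sigma_{(i-1)\Delta_n}\Delta_i^n W + R_i^n$, where $R_i^n$ collects the drift contribution $\int_{(i-1)\Delta_n}^{i\Delta_n} a_s\,ds$ and the martingale error $\int_{(i-1)\Delta_n}^{i\Delta_n}(\sigma_s-\sigma_{(i-1)\Delta_n})\,dW_s$. The first term, being of size $O(\Delta_n)$, and the second, whose conditional $L^2$-norm is controlled via the modulus of continuity of $\sigma$ in $L^2$ (which tends to $0$ because $\sigma$ is càdlàg with bounded characteristics — here is where $q\le 2$ and $\mathbf{(H_2^2)}$ enter, guaranteeing $\mathbb{E}[\sup_{|s-r|\le\Delta_n}|\sigma_s-\sigma_r|^2\wedge 1]\to0$ along the grid in the relevant averaged sense), are both negligible after multiplication by $\Delta_n^{1-p/2}$. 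Using the elementary inequality $||x+y|^p-|x|^p|\le C_p(|x|^{p-1}|y|+|y|^p)$ for $p\ge1$ (and the analogous subadditive bound for $p<1$), together with the conditional moment bounds $\mathbb{E}_{(i-1)\Delta_n}[|\Delta_i^n W|^{p}]=m_p\Delta_n^{p/2}$ and Cauchy--Schwarz on the cross terms, I would show
$$
\Delta_n^{1-\frac p2}\sum_{i=1}^{[t/\Delta_n]}\Bigl(|\Delta_i^n X|^p-\sigma_{(i-1)\Delta_n}^p|\Delta_i^n W|^p\Bigr)\xrightarrow[n\to\infty]{\PRB}0,
$$
locally uniformly in $t$.

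It then remains to handle the principal term $\Delta_n^{1-p/2}\sum_{i=1}^{[t/\Delta_n]}\sigma_{(i-1)\Delta_n}^p|\Delta_i^n W|^p$. Write $|\Delta_i^n W|^p=\Delta_n^{p/2}|U_i^n|^p$ with $U_i^n=\Delta_i^n W/\sqrt{\Delta_n}\sim\mathcal N(0,1)$ independent of $\mathcal F_{(i-1)\Delta_n}$, so that the sum equals $\Delta_n\sum_{i}\sigma_{(i-1)\Delta_n}^p|U_i^n|^p$. Its predictable projection is $m_p\,\Delta_n\sum_i\sigma_{(i-1)\Delta_n}^p$, which is a Riemann sum converging in probability, locally uniformly in $t$, to $m_p\int_0^t\sigma_s^p\,ds=m_pA(p)_t$ by right-continuity of $\sigma$ and dominated convergence. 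The martingale remainder $\Delta_n\sum_i\sigma_{(i-1)\Delta_n}^p(|U_i^n|^p-m_p)$ has conditional variance bounded by $C\,\Delta_n^2\sum_i\sigma_{(i-1)\Delta_n}^{2p}\le C'\,\Delta_n\,T\to0$; combining this with Doob's inequality applied to the discrete martingale indexed by $[t/\Delta_n]$ gives uniform convergence to $0$ on $[0,T]$. Summing the three pieces yields the claim.

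\textbf{Main obstacle.} The delicate point is the estimate $\Delta_n^{1-p/2}\sum_i|\sigma_{(i-1)\Delta_n}^p|\Delta_i^n W|^p - |\Delta_i^nX|^p| \to 0$ when $p$ is large, since then $\sigma^{p-1}$-weighted cross terms require careful use of the local boundedness of $\sigma$ together with the $L^2$-continuity of $\sigma$ along the grid; this is precisely where the full strength of $\mathbf{(H_2^2)}$ — controlling the quadratic variation of the small jumps of $\sigma$ uniformly in time — is needed to ensure the semimartingale increments of $\sigma$ are small in the appropriate averaged sense, and it is the step I would expect to demand the most care.
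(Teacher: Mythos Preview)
The paper does not give its own proof of this proposition: it is explicitly presented as a result recalled from the literature, attributed to L\'epingle \cite{lepingle} and Jacod \cite[Theorem~2.4]{jac3}. So there is nothing in the paper to compare your argument against directly.

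That said, your outline is correct and follows the standard route used in those references: localize, split $\Delta_i^n X=\sigma_{(i-1)\Delta_n}\Delta_i^nW+R_i^n$, control $\bigl||\Delta_i^nX|^p-|\sigma_{(i-1)\Delta_n}\Delta_i^nW|^p\bigr|$ via the elementary inequality you cite, and handle the leading term as a Riemann sum plus a discrete martingale with $O(\Delta_n)$ total conditional variance. The paper itself uses exactly this machinery later when bounding the remainder terms in its CLT --- see Lemma~\ref{L4} for the control of $R_i^n$ and inequalities \eqref{difsigmap}, \eqref{majintsigmaW}, \eqref{majintsigmaW2} in Lemma~\ref{lemmecontrol} for the moment bounds you would invoke in your second paragraph.

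One remark on your ``main obstacle'': under the localized hypothesis $\mathbf{(SH)_2}$ (bounded $\sigma$ and bounded coefficients) the cross term needs nothing delicate about $\mathbf{(H_2^2)}$. The crude estimate $\ES\bigl[|\sigma_u-\sigma_{(i-1)\Delta_n}|^2\bigr]\le C\Delta_n$ from \eqref{difsigmap}, valid for any It\^o semimartingale with bounded characteristics, already suffices at the law-of-large-numbers scale. The hypothesis $\mathbf{(H_2^2)}$ is only genuinely exploited later in the paper (Proposition~\ref{lambda5}) to control the small-jump contribution of $\sigma$ at the finer CLT scale, so you are overstating its role here.
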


\begin{proposition} 
\label{aitjactheo}
Let $p\ge2$ and assume  Assumption 1 of \cite{jac1}. 
 Then, the sequence of continuous processes $(Y(n,p))_{n \in \N}$ defined for any $n \in \N$ by
$$
Y(n,p)_t:=\frac{1}{\sqrt{\Delta_n}}\Big(\Delta_n^{1-\frac{p}{2}}\hat{B}(p,\Delta_n)_t-m_p A(p)_t\Big), \quad t \geq 0,
$$
converges stably to a random variable $Y(p)$ on an extension $(\tilde{\Omega},\tilde{\cal F},
(\tilde{\cal F}_t),\tilde{\PRB})$ of the original filtered space $({\Omega},{\cal F},
({\cal F}_t),{\PRB})$ such that, for any $t \geq 0$, conditionally on ${\cal F}$, $Y(p)_t$ is a centered Gaussian variable 
with variance $\tilde{\mathbb{E}}[Y(p)_t^2 ~|~ {\cal F}]=(m_{2p}-m_p^2)A(2p)_t$.
\end{proposition}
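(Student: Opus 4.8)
Since this statement is, up to the change of framework, Theorem 2 of \cite{jac1}, the most economical proof consists in checking that model \eqref{dif} under Assumption $\mathbf{(H^1_q)}$ — together with the boundedness-type requirements that can always be arranged by localisation — falls within the scope of Assumption 1 of \cite{jac1}, and then quoting that theorem. For completeness I describe the mechanism one would use to prove it from scratch.

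The first step is localisation: by a standard stopping-time argument, and because the convergence asserted is stable, one may assume that $a$, $\sigma$, $\sigma^{-1}$ and the coefficients of the semimartingale $\sigma$ are uniformly bounded. The second step is a Taylor-type expansion of the increments. Writing $U_i^n:=\Delta_i^n W/\sqrt{\Delta_n}$, which are i.i.d.\ $\mathcal N(0,1)$, and $\Delta_i^n X=\sigma_{(i-1)\Delta_n}\sqrt{\Delta_n}\,U_i^n+r_i^n$, where $r_i^n$ gathers the drift term $\int a_s\,ds$ and the contribution of the fluctuations of $\sigma$ on $[(i-1)\Delta_n,i\Delta_n]$, one uses the elementary bound $\big||x|^p-|y|^p\big|\le C_p(|x|^{p-1}+|y|^{p-1})|x-y|$ (valid for $p\ge1$) together with $L^r$-estimates on $r_i^n$ provided by the semimartingale structure of $\sigma$ in $\mathbf{(H^1_q)}$, to show that
$$
\frac{1}{\sqrt{\Delta_n}}\Big(\Delta_n^{1-\frac{p}{2}}\widehat B(p,\Delta_n)_t-\Delta_n\sum_{i=1}^{[t/\Delta_n]}\sigma_{(i-1)\Delta_n}^p\,|U_i^n|^p\Big)\xrightarrow[n\to+\infty]{\PRB}0
$$
locally uniformly in $t$. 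I expect this replacement step — in particular obtaining a remainder of order $o(\sqrt{\Delta_n})$ in probability rather than merely $o(1)$, which forces one to exploit the martingale structure of the leverage-type terms and not just crude moment counts — to be the main obstacle.

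It then remains to treat $\zeta^n_t:=\sqrt{\Delta_n}\sum_{i=1}^{[t/\Delta_n]}\sigma_{(i-1)\Delta_n}^p\big(|U_i^n|^p-m_p\big)$, a sum of increments that are centered conditionally on $\mathcal F_{(i-1)\Delta_n}$, hence a martingale triangular array for the discrete filtration $(\mathcal F_{i\Delta_n})_i$. Applying the stable central limit theorem for such arrays (see e.g.\ \cite{jac3} or the Jacod--Shiryaev limit theorem) requires verifying: (i) convergence of the conditional variance $\Delta_n\sum_i\sigma_{(i-1)\Delta_n}^{2p}(m_{2p}-m_p^2)$ to $(m_{2p}-m_p^2)A(2p)_t$, which is just Riemann-sum convergence since $s\mapsto\sigma_s^{2p}$ is c\`adl\`ag; (ii) a Lyapunov condition, immediate because $|U_1^n|^p$ has moments of all orders; and (iii) asymptotic orthogonality to $W$ and to every bounded $\mathcal F$-martingale orthogonal to $W$, whose crucial ingredient is $\mathbb E[(|U|^p-m_p)U]=0$ (the map $u\mapsto|u|^pu$ is odd), which is what makes the limit $Y(p)$ conditionally on $\mathcal F$ a centered Gaussian variable independent of $W$ with the announced conditional variance. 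Finally, since $\Delta_n\sum_i\sigma_{(i-1)\Delta_n}^p m_p\to m_pA(p)_t$ (again Riemann sums, consistently with Proposition \ref{lepi}), combining this decomposition with the replacement step and the stable CLT for $\zeta^n$ yields the claimed stable convergence of $Y(n,p)$.
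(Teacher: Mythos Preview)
The paper does not prove this proposition at all: it is explicitly introduced as a recall of Theorem~2 in A\"it-Sahalia and Jacod \cite{jac1} (see the sentence preceding Propositions~\ref{lepi} and~\ref{aitjactheo}), and the hypothesis ``Assumption~1 of \cite{jac1}'' is that paper's own assumption, not one of the $\mathbf{(H^i_q)}$'s. So there is nothing to compare against; your identification of the statement as a direct quotation of \cite{jac1} is exactly what the paper does.

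Your additional sketch of the underlying mechanism (localisation, replacement of $|\Delta_i^n X|^p$ by $\sigma_{(i-1)\Delta_n}^p|U_i^n|^p$, stable martingale CLT for the triangular array with the oddness argument $\ES[(|U|^p-m_p)U]=0$ ensuring $\mathcal F$-conditional Gaussianity) is a correct and standard outline of how such results are established, and in fact mirrors closely the decomposition the paper uses in Section~\ref{sec3} for its own main theorems. One small slip: you phrase the verification as checking that ``model \eqref{dif} under Assumption $\mathbf{(H^1_q)}$'' satisfies Assumption~1 of \cite{jac1}, but the proposition as stated already \emph{assumes} Assumption~1 of \cite{jac1} directly, so no such verification is needed here --- the point would only arise if one wanted to apply the proposition under the paper's own hypotheses, which the paper in fact never does (Proposition~\ref{aitjactheo} serves only as motivation for the estimator $\Sigma(p,\Delta_n,h_n)$).
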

\noindent Looking at these results, it is natural to try to estimate $\sigma_{t^{}}^p$ by  the following statistic: $(\Sigma(p,\Delta_n,h_n)_t)$ defined for every $t \leq \bar{T}$ with $\bar{T} = T-h_1$ by:
\begin{equation}
\label{defSigma}
\Sigma(p,\Delta_n,h_n)_t :=\frac{\Delta_n^{1-\frac{p}{2}}\big(\hat{B}(p,\Delta_n)_{t+h_n}-\hat{B}(p,\Delta_n)_t\big)}{m_p h_n}.
\end{equation}
\noindent Actually, this estimator is the mean of $p$-variations in a window of length $h_n$ where $(h_n)$ is assumed to be a non-increasing sequence of positive numbers such that $h_n$ tends to 0.

\noindent We are now able to state our main results.

\begin{theorem}
 \label{thprincipal} 
Let $p=2$ or $p\ge3$ and let $(X_t)$ be a stochastic process solution to \eqref{dif}. Assume  $\mathbf{(H^1_2)}$ 
and $\mathbf{(H_2^2)}$. Assume that $\Delta_n=o(h_n)$. Then,\\
\noindent (i)
If $h_n/\sqrt{\Delta_n}\rightarrow0$, $\forall t\in[0,\bar{T}]$, 
\begin{equation}
\label{premiereconvergence}
\sqrt{\frac{h_n}{\Delta_n}}
(\Sigma(p,\Delta_n,h_n)_t-\sigma_t^p)
\xrightarrow[n\rightarrow+\infty]{{\cal L}-s}
\sqrt{\varphi_1(p,t,\sigma)} ~~U,
\end{equation}
where, conditionally on ${\cal F}$,  $U$ is a standard Gaussian random variable  and $\varphi_1(p,t,\sigma)=\frac{m_{2p}-m_p^2}{m_p^2}\sigma_t^{2p}$.\\
\noindent (ii) If $\sqrt{\Delta_n}/h_n\rightarrow\beta\in\R_+$, $\forall t\in[0,\bar{T}]$,
\begin{equation}
\label{convergebis}
\frac{1}{\sqrt{h_n}}
(\Sigma(p,\Delta_n,h_n)_t-\sigma_t^p)\xrightarrow[n\rightarrow+\infty]{{\cal L}-s}
\sqrt{\beta^2\varphi_1(p,t,\sigma)+\varphi_2(p,t)} ~~U,
\end{equation}
where $\varphi_1(p,t,\sigma)$ and $U$ are defined as before and,\\
$\varphi_2(p,t)=
\frac{p^2}{3} (\sigma_t)^{2p-2}\|\eta\|^2(t))$ with $\|\eta\|^2(t) = \eta_1^2(t)+\eta_2^2(t)$.\\
\end{theorem}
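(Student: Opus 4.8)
The plan is to rewrite the estimator error in terms of the centered process $Y(n,p)$ from Proposition \ref{aitjactheo} plus a bias coming from the fact that $(A(p)_{t+h_n}-A(p)_t)/h_n$ is not exactly $\sigma_t^p$. Starting from \eqref{defSigma} and \eqref{dif}, I would write
\begin{equation*}
\Sigma(p,\Delta_n,h_n)_t-\sigma_t^p
=\frac{1}{m_p h_n}\Big(\Delta_n^{1-\frac{p}{2}}\hat{B}(p,\Delta_n)_{t+h_n}-m_pA(p)_{t+h_n}-\big(\Delta_n^{1-\frac{p}{2}}\hat{B}(p,\Delta_n)_t-m_pA(p)_t\big)\Big)
+\frac{A(p)_{t+h_n}-A(p)_t}{h_n}-\sigma_t^p.
\end{equation*}
The first term is $\frac{\sqrt{\Delta_n}}{m_p h_n}\big(Y(n,p)_{t+h_n}-Y(n,p)_t\big)$, a normalized increment of the stably-convergent process; the second term is the bias $R_n$. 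The two assertions (i) and (ii) then correspond to whether the fluctuation term ($\asymp\sqrt{\Delta_n}/h_n$) dominates, is dominated by, or is comparable to the bias term.

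First I would handle the fluctuation term. By Proposition \ref{aitjactheo}, conditionally on $\mathcal F$, $Y(p)$ has the covariance structure of a time-changed Brownian motion with $\tilde{\mathbb E}[(Y(p)_{t+h}-Y(p)_t)^2\mid\mathcal F]=(m_{2p}-m_p^2)(A(2p)_{t+h}-A(2p)_t)$. Since $A(2p)$ is absolutely continuous with density $\sigma_s^{2p}$ and $\sigma$ is càdlàg, dividing by $h_n$ and using $h_n\to0$ gives, after normalization by $\sqrt{h_n}$ (resp.\ the relevant factor), that $\frac{1}{\sqrt{h_n}}(Y(p)_{t+h_n}-Y(p)_t)$ behaves like $\sqrt{(m_{2p}-m_p^2)\sigma_t^{2p}}\,U$. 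The subtle point here is that stable convergence of the \emph{process} $Y(n,p)$ does not immediately transfer to its rescaled increments over a shrinking window of size $h_n$; one needs a genuine argument — either re-deriving the CLT directly for the windowed sums $\sum_{i\Delta_n\in(t,t+h_n]}|\Delta_i^nX|^p$ via a martingale CLT with Lindeberg-type conditions (localizing $\sigma$, freezing it near $\sigma_t$ thanks to càdlàg regularity and $h_n\to0$, replacing $|\Delta_i^nX|^p$ by $\sigma_{t}^p|\Delta_i^nW|^p$ up to negligible terms using the $a_t\,dt$ drift and the Burkholder inequality), or invoking a suitable ``stable convergence with shrinking windows'' argument. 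I would take the direct route, since the windowed triangular array is exactly where the $\frac{m_{2p}-m_p^2}{m_p^2}$ constant and the asymptotic variance $\varphi_1$ come from.

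Next I would control the bias $R_n=\frac{A(p)_{t+h_n}-A(p)_t}{h_n}-\sigma_t^p=\frac{1}{h_n}\int_t^{t+h_n}(\sigma_s^p-\sigma_t^p)\,ds$. Here the semimartingale structure of $\sigma$ (Assumption $\mathbf{(H^1_2)}$) enters: writing $\sigma_s^p-\sigma_t^p$ via Itô's formula applied to $x\mapsto x^p$ (valid for $p=2$ or $p\ge3$ so the function is $C^2$ on a neighborhood of the positive value $\sigma_t$, modulo localization), the dominant contribution over $(t,t+h_n]$ is the martingale part $p\,\sigma_t^{p-1}\int_t^{s}(\eta_1(r)dW_r+\eta_2(r)dW^2_r)$, which after averaging and rescaling by $1/\sqrt{h_n}$ converges stably to a centered Gaussian with variance $\frac{p^2}{3}\sigma_t^{2(p-1)}\|\eta\|^2(t)$ — the factor $1/3$ is the classical $\int_0^1(1-u)^2\,du$ arising from averaging a Brownian integral. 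The drift part and the jump part contribute $O(h_n)$ and $o(\sqrt{h_n})$ respectively (using $\mathbf{(H^2_2)}$ and $q=2$, which makes the jump component's contribution to the bracket small after averaging), hence are negligible at the $1/\sqrt{h_n}$ scale in case (ii) and at the $\sqrt{h_n/\Delta_n}$ scale in case (i) since there $h_n/\sqrt{\Delta_n}\to0$ forces $\sqrt{h_n/\Delta_n}\cdot h_n\to0$ and $\sqrt{h_n/\Delta_n}\cdot\sqrt{h_n}=h_n/\sqrt{\Delta_n}\to0$.

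Finally I would assemble the pieces and check asymptotic (conditional) independence of the two Gaussian limits. In case (i), multiplying by $\sqrt{h_n/\Delta_n}$ kills the bias and leaves only the fluctuation term, giving \eqref{premiereconvergence} with variance $\varphi_1$. In case (ii), multiplying by $1/\sqrt{h_n}$, the fluctuation term contributes $\beta\sqrt{\varphi_1}$ (from $\frac{1}{\sqrt{h_n}}\cdot\frac{\sqrt{\Delta_n}}{h_n}\to\beta$ times the limit of $\frac{1}{\sqrt{h_n}}(Y(p)_{t+h_n}-Y(p)_t)$ suitably renormalized) and the bias contributes $\sqrt{\varphi_2}$; since the fluctuation comes from the discretization noise (the array $|\Delta_i^nW|^p-m_p\Delta_i^n$-type terms) while the bias comes from the increments of $W,W^2$ driving $\sigma$ over $(t,t+h_n]$, these live on ``different'' parts of the Brownian path, and a joint martingale CLT argument shows the limit is $\sqrt{\beta^2\varphi_1+\varphi_2}\,U$ with a single standard Gaussian $U$ (the two Gaussian components being conditionally independent, so their sum has the stated variance). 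I expect the main obstacle to be the first step: rigorously establishing the CLT for the windowed power-variation sums over the shrinking interval $(t,t+h_n]$ with the right constant and the right coupling to the volatility-driving Brownian motions, i.e.\ making the heuristic ``$Y(n,p)$ increments over a window of size $h_n$'' precise, including the localization that reduces the general semimartingale $\sigma$ to a bounded one staying away from $0$.
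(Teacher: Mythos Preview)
Your plan is essentially the same architecture as the paper's: split the error into a fluctuation piece (the paper's $\Lambda_2^n$) and a bias piece coming from It\^o on $|Y|^p$ (the paper's $\Lambda_4^n+\Lambda_5^n$), reduce to bounded coefficients by localization, prove a joint martingale CLT for the two martingale contributions over the shrinking window, and show the remaining terms are negligible. You also correctly flag that Proposition~\ref{aitjactheo} does not transfer to increments over a window of size $h_n\to0$ and that one must redo the triangular-array CLT directly; that is exactly what the paper does (its Lemma on stable convergence for martingale increments with the filtrations $\bar{\mathcal F}_{n,i}=\mathcal F_{(i+[t/\Delta_n])\Delta_n}$).

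There is one genuine soft spot in your outline. You justify the asymptotic independence of the fluctuation and bias Gaussians by saying they ``live on different parts of the Brownian path''. That is not true: the fluctuation term is built from $|\Delta_i^n W|^p-m_p$, and the martingale part of the bias contains $\int p\,\sigma_s^{p-1}\eta_1(s)\,dW_s$ driven by the \emph{same} $W$ over the \emph{same} interval $((i-1)\Delta_n,i\Delta_n]$. (Only the $\eta_2\,dW^2$ part is genuinely independent.) In the paper the cross bracket
\[
\sum_i \espcond{i-1}{\xi_i^{n,1}\xi_i^{n,2}}
\]
is shown to vanish by an explicit computation: one expands $|\Delta_i^n W|^p$ by It\^o and uses that $\espcond{i-1}{\mathrm{sgn}(W_s-W_{(i-1)\Delta_n})|W_s-W_{(i-1)\Delta_n}|^{p-1}}=0$ (an odd-moment/symmetry argument), together with a freezing of $\sigma^{p-1}\eta_1$ at $t$ and a Cauchy--Schwarz bound on the residual. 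So the vanishing covariance is an orthogonality phenomenon, not a disjoint-support phenomenon. Since you do propose running a \emph{joint} martingale CLT, you would be forced to compute this cross term anyway and would likely discover the correct mechanism; but the heuristic you wrote would not by itself give the result, and relying on it could lead you to skip the computation.

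A minor remark: the restriction $p=2$ or $p\ge3$ does not come from It\^o on $|x|^p$ (which is fine for all $p\ge2$ here) but from the control of the drift contribution of $a_t\,dt$ inside the power variation (the paper's bound on $\Lambda_3^n$ deteriorates to $\Delta_n^{(p-2)/2}$ for $2<p<3$, which is no longer $O(\sqrt{\Delta_n})$). Your plan should route that through the ``replacing $|\Delta_i^n X|^p$ by $\sigma_t^p|\Delta_i^n W|^p$'' step rather than through It\^o on the volatility.
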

\noindent Note that when the drift term $a$ is null,  the result is valid even if
 $2<p<3.$  Otherwise, the drift contributes in a bias for the estimator that is not
negligible in case  $2<p<3.$\\
In the second result, we assume that there is no Brownian component in the volatility, $i.e.$ that $\eta_1=\eta_2=0$ and that the jump component has locally $q$-finite variation. In this case, we show that we can alleviate the constraint on the sequence $(h_n)$ (see also Remark \ref{rk15}).
\begin{theorem}
 \label{thprincipal2} 
Let $p=2$ or $p\ge3$. and let $(X_t)$ be a stochastic process solution to \eqref{dif}. 
Assume  $\mathbf{(H^1_q)}$
and $\mathbf{(H^2_q)}$ with $q\in[1,2]$ and suppose that $\eta_1=\eta_2=0$. Assume that $\Delta_n=o(h_n)$. Then,\\ 
\noindent (i) If $q\in(1,2]$, if $\limsup_{n\rightarrow+\infty}h_n^{1/2+1/q}/\sqrt{\Delta_n}<+\infty$, $\forall t\in[0,\bar{T}]$, 
\begin{equation}\label{tioru}
\sqrt{\frac{h_n}{\Delta_n}}
(\Sigma(p,\Delta_n,h_n)_t-\sigma_t^p)\xrightarrow[n\rightarrow+\infty]{{\cal L}-s}
\sqrt{\varphi_1(p,t,\sigma)} U,
\end{equation}
where $\varphi_1(p,t,\sigma)$ and $U$ are defined as in Theorem \ref{thprincipal}.\\
\noindent (ii) Assume that $q=1$. If $\lim_{n\rightarrow+\infty}h_n^{3}/\Delta_n=0$, \eqref{tioru} holds.\\
If $\lim_{n\rightarrow+\infty}h_n^{3}/\Delta_n=\beta\in\R_+^*$ and if $(\int_{\{0<|y|\le1\}}y F_t(dy))_{t\ge0}$ is càglàd (left-continuous with right limits), then, $\forall t\in[0,\bar{T}]$,
\begin{equation}
\label{partconv}
\sqrt{\frac{h_n}{\Delta_n}}
(\Sigma(p,\Delta_n,h_n)_t-\sigma_t^p)\xrightarrow[n\rightarrow+\infty]{{\cal L}-s}
\sqrt{\varphi_1(p,t,\sigma)} U +\frac{\beta}{2}p\sigma_t^{p-1}\left(b_t-\lim_{s\searrow t}\int_{\{0<|y|\le1\}}y F_s(dy)\right).
\end{equation}
\end{theorem}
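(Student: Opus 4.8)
\noindent The plan is common to Theorems \ref{thprincipal} and \ref{thprincipal2}. The idea is to split
\[
\Sigma(p,\Delta_n,h_n)_t-\sigma_t^p=\underbrace{\Big(\Sigma(p,\Delta_n,h_n)_t-\frac{A(p)_{t+h_n}-A(p)_t}{h_n}\Big)}_{=:F_n}\;+\;\underbrace{\frac{1}{h_n}\int_t^{t+h_n}\big(\sigma_s^p-\sigma_t^p\big)\,ds}_{=:B_n},
\]
where $F_n$ is a purely ``statistical'' fluctuation of $\Sigma$ around the local average of $\sigma^p$ and $B_n$ is a path-dependent smoothing bias. The first step, which does not use $\eta_1=\eta_2=0$, is to show that $\sqrt{h_n/\Delta_n}\,F_n\xrightarrow[n\rightarrow+\infty]{{\cal L}-s}\sqrt{\varphi_1(p,t,\sigma)}\,U$. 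The point of the hypothesis $\eta_1=\eta_2=0$ is that it removes the $O(\sqrt{h_n})$ Brownian-martingale part of $B_n$ — the term that produces $\varphi_2$ and forces $h_n/\sqrt{\Delta_n}\to0$ in Theorem \ref{thprincipal} — so that $B_n$ is governed only by the jumps of $\sigma$, whose locally finite $q$-variation is what allows $h_n$ to decay more slowly.

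For $F_n$, I would first localize (the usual stopping-time argument) to reduce to the case where $\sigma$, $1/\sigma$, $b$, $a$ and $\int(1\wedge|y|^q)F_t(dy)$ are bounded. Then replace $\Delta_n^{1-p/2}|\Delta_i^nX|^p$ by $\Delta_n\,\sigma_{(i-1)\Delta_n}^p|U_i^n|^p$, with $U_i^n:=(W_{i\Delta_n}-W_{(i-1)\Delta_n})/\sqrt{\Delta_n}\sim\mathcal N(0,1)$, the error summed over the $\lfloor h_n/\Delta_n\rfloor$ indices of the window being controlled by the modulus of continuity of $\sigma$ over one mesh and by the drift; this last point is exactly where the restriction ``$p=2$ or $p\ge3$'' (or $a\equiv0$) is used, the drift$\times$Brownian cross term being otherwise non-negligible when $2<p<3$. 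It then remains to prove
\[
\sqrt{\frac{\Delta_n}{h_n}}\,\frac{1}{m_p}\sum_{t<i\Delta_n\le t+h_n}\sigma_{(i-1)\Delta_n}^p\big(|U_i^n|^p-m_p\big)\xrightarrow[n\rightarrow+\infty]{{\cal L}-s}\sqrt{\varphi_1(p,t,\sigma)}\,U,
\]
which is a triangular-array stable central limit theorem conditional on ${\cal F}$: the conditional variance $\frac{\Delta_n}{h_n}\frac{m_{2p}-m_p^2}{m_p^2}\sum_{t<i\Delta_n\le t+h_n}\sigma_{(i-1)\Delta_n}^{2p}$ tends to $\frac{m_{2p}-m_p^2}{m_p^2}\sigma_t^{2p}=\varphi_1(p,t,\sigma)$ since the averaging window collapses to $\{t\}$ and $\sigma$ is c\`adl\`ag, and the conditional Lindeberg condition is immediate since the $|U_i^n|^p$ are i.i.d.\ with all moments finite. (Equivalently, one may read this off a windowed version of Proposition \ref{aitjactheo}.)

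For $B_n$, since $\sigma$ is positive and c\`adl\`ag we have $\sigma_t>0$ a.s., so on a random right neighbourhood of $t$, $\sigma_s=|Y_s|$ with $Y$ of constant sign; It\^o's formula applied to $|Y|^p$ — plain Stieltjes calculus when $q=1$, $Y$ being then of locally finite variation — together with Fubini gives $B_n=\frac1{h_n}\int_t^{t+h_n}(t+h_n-u)\,d(\sigma^p)_u$, and one splits $d(\sigma^p)_u$ into its predictable $du$-drift, the purely discontinuous martingale coming from the small jumps, and the (locally finitely many) large jumps. Each piece is then estimated after multiplication by $\sqrt{h_n/\Delta_n}$: the large jumps contribute nothing since a jump of size $>1$ lies in $(t,t+h_n]$ with probability $O(h_n)$ only; the small-jump martingale is bounded, by a BDG-type inequality and after splitting the jumps at a threshold $\varepsilon$, by $C\,\delta_q(\varepsilon)^{1/q}$ uniformly in $n$ for the part $|y|\le\varepsilon$ — with $\delta_q(\varepsilon):=\sup_{u\le T}\int_{|y|\le\varepsilon}|y|^qF_u(dy)\to0$ by $\mathbf{(H^2_q)}$, the assumption on $h_n$ entering through $h_n^{1/2+1/q}/\sqrt{\Delta_n}$ — and by a quantity vanishing for each fixed $\varepsilon$ for the part $\varepsilon<|y|\le1$; the drift piece is $O(h_n)$. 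When $q\in(1,2]$ the hypothesis $\limsup h_n^{1/2+1/q}/\sqrt{\Delta_n}<\infty$ forces $h_n^{3}/\Delta_n\to0$, so the drift and the moderate-jump pieces are also negligible and only $\sqrt{\varphi_1}U$ survives, proving \eqref{tioru}. When $q=1$ and $h_n^3/\Delta_n\to0$ the same conclusion holds; when $q=1$ and $h_n^3/\Delta_n\to\beta>0$, the dominant surviving contribution is the drift piece $\frac1{h_n}\int_t^{t+h_n}(t+h_n-u)\gamma_u\,du=\tfrac12 h_n\,\gamma_{t^+}+o(h_n)$, where the right limit $\gamma_{t^+}=p\,\sigma_t^{p-1}\big(b_t-\lim_{s\searrow t}\int_{\{0<|y|\le1\}}yF_s(dy)\big)$ of the drift of $\sigma^p$ is well defined thanks to the c\`agl\`ad assumption on $s\mapsto\int_{\{0<|y|\le1\}}yF_s(dy)$; after multiplication by $\sqrt{h_n/\Delta_n}$ this produces the extra constant in \eqref{partconv}.

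The two limits are then combined by a stable version of Slutsky's lemma, the limit of $\sqrt{h_n/\Delta_n}\,B_n$ being ${\cal F}$-measurable (the explicit constant in the case $q=1$, $\beta>0$). I expect the main obstacle to be the bias analysis near $q=1$: one needs control of the purely discontinuous martingale part of $d(\sigma^p)$ over the shrinking window sharp enough to beat the factor $\sqrt{h_n/\Delta_n}$, which is exactly why the uniform smallness $\mathbf{(H^2_q)}$ of the $q$-mass of the small jumps — and not merely the local boundedness in $\mathbf{(H^1_q)}$ — is required; and, for $q=1$ with $\beta>0$, one must keep careful track of which part of the small-jump contribution is a genuine drift surviving in the limiting constant and which part cancels, the appearance of the \emph{right} limit $\lim_{s\searrow t}\int_{\{0<|y|\le1\}}yF_s(dy)$ being precisely what makes the c\`agl\`ad hypothesis necessary. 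A secondary technical point is to make the localization and the windowed central limit theorem of the second step uniform as the averaging window degenerates to a point.
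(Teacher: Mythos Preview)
Your proposal is correct and follows essentially the same strategy as the paper. The only cosmetic difference is that the paper inserts the \emph{discrete} average $\frac{1}{r_n}\sum_{i\in\DNT}\sigma_{i\Delta_n}^p$ between $\Sigma$ and $\sigma_t^p$ (yielding their $\Lambda_1,\ldots,\Lambda_5$ decomposition via Abel summation), whereas you insert the continuous average $\frac{1}{h_n}\int_t^{t+h_n}\sigma_s^p\,ds$ and use Fubini/integration by parts; the resulting CLT term, the drift piece, and the jump-threshold analysis under $\mathbf{(H^2_q)}$ are identical to the paper's Proposition~\ref{L5}(ii), Lemma~\ref{L4}, Proposition~\ref{P6}, and Proposition~\ref{lambda5}. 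One point the paper makes explicit and that you only flag as a ``secondary technical point'': the windowed stable CLT needs care because the triangular array lives on filtrations $\bar{\cal F}_{n,i}={\cal F}_{(i+[t/\Delta_n])\Delta_n}$ that are not nested across $n$; the paper handles this with a tailored lemma (their Lemma~\ref{fstableconvergence}) based on Eagleson's reverse-martingale argument, using that $\cap_n\bar{\cal F}_{n,k_n}={\cal F}_t$ by right-continuity of the filtration.
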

\noindent Following Tauchen and Todorov \cite{TT08} and according to concrete data, pure jump volatility process could be a more convenient model. In such a case, it seems that the right theorem to be applied is Theorem~\ref{thprincipal2}. \\
\noindent In cases (i) in both Theorems, we get for all $t>0$,
\begin{equation}
\label{CTL2}
\sqrt{\frac{h_n}{\Delta_n}}\left(\frac{\Sigma(p,\Delta_n,h_n)_t}{\sigma_{t}^p}-1\right)
 \xrightarrow[n \rightarrow +\infty]{{\cal L}-s} \frac{\sqrt{m_{2p}-m_p^2}}{m_p} U,
\end{equation}
where $U\sim {\cal N}(0,1)$ and $U$ is independent of ${\cal F}_{t}$.
This  result is  enough to obtain an estimation of $\sigma_t^p$ and to obtain a confidence  interval for it, together the convergence rate.

\begin{remark}
\label{rk15}
It must be stressed here that the convergence rate depends on the balance between the frequency of observations and the length $h_n$ of the window. Hence, the following considerations  justify the choice of a ``good pair"  $(h_n,\Delta_n)$:
let $p=2$ or $p\geq 3$ and assume   $\Delta_n=o(h_n)$.
Considering the window width $h_n,$
 $r_n:=\frac{h_n}{\Delta_n}$  corresponds to the number of observations on the interval $[t,t+h_n]$. Suppose $\Delta_n=\frac{1}{n}$ and  $r_n:=n^\rho,~0<\rho<1,$ then $h_n=n^{\rho-1}.$
In this scheme, assuming $\mathbf{(H^1_2)},$ 
$\mathbf{(H_2^2)}$,  Theorem \ref{thprincipal} yields the following convergence rates: 
\begin{itemize}
	\item[(i)] $\rho<\demi$ yields a convergence rate of order $n^{\rho/2},$
  \item[(ii)] $\rho\ge\demi$ yields a convergence rate of order  $n^{(1-\rho)/2}.$
\end{itemize}
\noindent 
In case $\eta_1=\eta_2=0,$ under Hypotheses $\mathbf{(H^1_q)}$ and
$\mathbf{(H_q^2)}$ with $~1\leq q\leq 2,$ Theorem \ref{thprincipal2} yields the following convergence rates: 
\begin{itemize}
	\item[(i)] if $1<q\leq2,$  $\rho\leq\frac{2}{2+q},$ yields a convergence rate of order  $n^{\rho/2},$
\item[(ii)] the same convergence rate occurs in case $q=1$, $~\rho\leq \frac{2}{3};$
the best convergence rate is of order $n^{1/3}$, obtained for $\rho=2/3$. 
As an example in such a case, let us choose $r_n=n^{2/3}\sim 300.$ It means $300$ data which can be the daily observations  and globally $n = 300^{3/2}\sim 5200.$ 
This may correspond to a realistic data set.
\end{itemize}
\end{remark}

\section{Proofs} \label{sec3}
In every proofs $C$ or $C_p$ are  constants which can change from a line to another.
In order to make the notations easier to handle, we will denote by:
\begin{align*}
\DNT &= \left\{ i \in \N,~\left[ \frac{t}{\Delta_n} \right]+1 \le i \le \left[ \frac{t+h_n}{\Delta_n} \right] \right\},&
\CN  &= \left\{ u \in \R,~ (i-1)\Delta_n \le u \le i\Delta_n \right\},\\
\CNT  &= \left\{ u \in \R,~(i-1)\Delta_n\vee t \le u \le i\Delta_n \right\},&
\CNTS  &= \left\{ u \in \R,~(i-1)\Delta_n\vee t \le u \le s \right\}.
\end{align*}

\subsection{Decomposition of the error}
\label{sectiondecomp}
\noindent Following \cite{jac3}, we first decompose $\Sigma(p,\Delta_n,h_n)_t-\sigma_t^p$~as follows:
\begin{equation} 
\label{decopre}
\Sigma(p,\Delta_n,h_n)_t-\sigma_t^p=\frac{Z^{(n,p)}_{t + h_n} - Z^{(n,p)}_t}{m_p h_n}+\Big(\frac{1}{r_n}\sum_{\DNT} \sigma^p_{i\Delta_n}-\sigma_{t}^p\Big),
\end{equation}
where $r_n = h_n / \Delta_n$ and 
$$
Z^{(n,p)}_{t} := \Delta_n^{1-\frac{p}{2}} \hat{B}(p,\Delta_n)_t - m_p  \sum^{[t/\Delta_n]}_{i=1} \Delta_n\sigma^p_{i\Delta_n}.
$$
On the one hand, denoting by $\espcond{i-1}{*}$ the conditional expectation with respect to $\F_{(i-1) \Delta_n}$, one can notice that  $
\espcond{i-1}{\VA{\sigma_{(i-1) \Delta_n} \frac{\Delta^n_i W}{\sqrt{\Delta_n}}}^p } = \sigma_{(i-1) \Delta_n}^p m_p,$
and it is easy to checks that
$$
\frac{Z^{(n,p)}_{t + h_n} - Z^{(n,p)}_t}{h_n}=\Lambda_1^n(t)+\Lambda_2^n(t)+\Lambda_3^n(t),
$$
with 
\begin{align*}
\Lambda_1^n(t)&:=\frac{\Delta_n}{h_n} \sum_{i\in\DNT} \left(\VA{\frac{\Delta_i^n X}{\sqrt{\Delta_n}}}^p- \VA{\sigma_{(i-1) \Delta_n} \frac{\Delta^n_i W}{\sqrt{\Delta_n}}}^p - \espcond{i-1}{\VA{\frac{\Delta_i^n X}{\sqrt{\Delta_n}}}^p - 
\VA{\sigma_{(i-1) \Delta_n} \frac{\Delta^n_i W}{\sqrt{\Delta_n}}}^p }\right),\\
\Lambda_2^n(t)&:=
\frac{\Delta_n}{h_n} \sum_{i\in\DNT} \sigma^p_{(i-1) \Delta_n} 
\left( \left| \frac{\Delta^n_i W}{\sqrt{\Delta_n}} \right|^p-m_p \right) ,\\
\Lambda_3^n(t)&:= \frac{\Delta_n}{h_n}  \sum_{i\in\DNT} \left(\espcond{i-1}{\VA{\frac{\Delta_i^n X}{\sqrt{\Delta_n}}}^p} - m_p  \sigma^p_{(i-1)\Delta_n}\right)
 +\frac{\Delta_n}{h_n}m_p \left(\sigma^p_{[(t+h_n)/\Delta_n]}-\sigma^p_{[t/\Delta_n]}\right).
\end{align*}
On the other hand, let us now decompose the second part of \eqref{decopre}. It\^o's formula applied to $x \to |x|^p$ with $p\ge1$ yields for every $i \ge [t/\Delta_n]+1$:
\begin{equation*}
|Y_{i\Delta_n}|^p = |Y_{t}|^p+ A_{i\Delta_n}-A_t + M_{i\Delta_n}-M_t, \qquad\text{with},
\end{equation*}
\begin{align*}
M_t 
&= \int_0^t p~{\rm sgn}(Y_s)|Y_{s}|^{p-1}\eta_1(s)dW_s+\int_0^tp~{\rm sgn}(Y_s)|Y_{s}|^{p-1}\eta_2(s)dW_s^2,\\
&A_t= \int_0^t \theta_s ds+ \int_0^t \int_{|y| \leq 1} p~{\rm sgn}(Y_s)|Y_{s^-}|^{p-1} (\mu-\nu)(ds,dy)\\
&\quad+ \sum_{0< s \le t} \left( |Y_{s-} + \Delta Y_s |^p - |Y_{s-}|^p -p~ {\rm sgn}(Y_s)|Y_{s-}|^{p-1} \Delta Y_s \1_{|\Delta Y_s| \leq 1} \right),
\end{align*}
and 
$$\theta_s= p~{\rm sgn}(Y_s)|Y_{s}|^{p-1}b_s+\frac{p(p-1)}{2}|Y_{s}|^{p-2}\|\eta\|^2(s).$$
Then, it follows that
\begin{equation*}
\frac{1}{r_n}\sum_{i\in\DNT} \sigma^p_{i\Delta_n}-\sigma_{t}^p=\Lambda_4^n(t)+\Lambda_5^n(t), \qquad\text{with},
\end{equation*}
\begin{align*}
&\Lambda_4^n(t)=\frac{1}{r_n} \sum_{i\in\DNT}([(t+h_n)/\Delta_n]-i+1) (M_{i\Delta_n}-M_{(i-1)\Delta_n\vee t})\\
&\Lambda_5^n(t)=\frac{1}{r_n} \sum_{i\in\DNT}([(t+h_n)/\Delta_n]-i+1)(A_{i\Delta_n}-A_{(i-1)\Delta_n\vee t}). 
\end{align*}
\subsection{Preliminary Lemmas}
In this section, we establish a series of useful lemmas for the sequel of the proof. First, we  show in Lemma \ref{appoxM}  that it is enough to prove the main results under $\mathbf{(H^2_q)}$ and the following assumption:
\begin{itemize}
	\item[${\mathbf{(SH)_q}}$] $a$,$b$, $\eta_1$, $\eta_2$, and $\int_0^.\int(|y|^q\wedge1) F_s(dy) ds$ are bounded  and there exists $M>0$ such that $F_s([-M,M]^c)=0$ $a.s.$ $\forall s\ge0$.
\end{itemize}

\begin{lemma}
\label{appoxM}
Assume that the conclusions of Theorem \ref{thprincipal} and \ref{thprincipal2} hold for every $(X,\sigma)$ satisfying  $\mathbf{(SH)_q}$ and $\mathbf{(H^2_q)}$ (with $q\in[1,2]$ depending on the statement). Then, the conclusions hold for every $(X,\sigma)$ satisfying  $\mathbf{(H^1_q)}$ and $\mathbf{(H^2_q)}$ with $q\in[1,2]$.
\end{lemma}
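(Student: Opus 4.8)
The plan is to use a standard localization argument. Given $(X,\sigma)$ satisfying $\mathbf{(H^1_q)}$ and $\mathbf{(H^2_q)}$, I would construct, for each integer $m\ge1$, a stopping time $\tau_m$ and a modified pair $(X^{(m)},\sigma^{(m)})$ agreeing with $(X,\sigma)$ on $[0,\tau_m)$ and satisfying the strong assumption $\mathbf{(SH)_q}$ globally, with $\tau_m\uparrow+\infty$ a.s. Concretely, since $a$, $b$, $\eta_1$, $\eta_2$ and the increasing process $t\mapsto\int_0^t\int(|y|^q\wedge1)F_s(dy)\,ds$ are all c\`adl\`ag (hence locally bounded) adapted processes, I would set
$$
\tau_m=\inf\Big\{t\ge0:\ |a_t|\vee|b_t|\vee|\eta_1(t)|\vee|\eta_2(t)|\vee\VA{\sigma_t}\vee\VA{\sigma_{t^-}}\vee\textstyle\int_0^t\int(|y|^q\wedge1)F_s(dy)\,ds\ \ge m\Big\}\wedge\inf\{t:\ \mu([0,t]\times\{|y|>m\})\ge1\},
$$
and then define $b^{(m)}_s=b_s\1_{s<\tau_m}$, $\eta_i^{(m)}(s)=\eta_i(s)\1_{s<\tau_m}$, truncate the jump coefficient by restricting $\mu$ (equivalently $F_s$) to $\{|y|\le m\}$, and adjust $a$ similarly; on $\{s\ge\tau_m\}$ one can extend the coefficients in any bounded adapted way (e.g.\ freeze them at constants) so that the resulting $Y^{(m)}$ is a semimartingale of the same form and $(X^{(m)},\sigma^{(m)})$ satisfies $\mathbf{(SH)_q}$. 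One checks that $\mathbf{(H^2_q)}$ is preserved, since restricting the support of $F_s$ only decreases $\int_{\{|y|\le\eps\}}|y|^qF_s(dy)$.

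Next I would invoke the hypothesis of the lemma to get the CLT conclusions (convergences \eqref{premiereconvergence}, \eqref{convergebis}, \eqref{tioru}, \eqref{partconv}) for each localized pair $(X^{(m)},\sigma^{(m)})$. The key point is then that the statistic $\Sigma(p,\Delta_n,h_n)_t$ at a fixed time $t$, being built from the increments $\Delta_i^nX$ for $(i-1)\Delta_n\in[0,t+h_n]$, coincides for $(X,\sigma)$ and $(X^{(m)},\sigma^{(m)})$ on the event $\{\tau_m>t+h_n+1\}$ (using $h_n\le h_1$), at least for $n$ large enough that $h_n\le 1$; more precisely, on $\{\tau_m>t+2\}$ the two processes $X$ and $X^{(m)}$ have identical paths on $[0,t+2]\supset[0,t+h_n]$ eventually, so $\Sigma(p,\Delta_n,h_n)_t$ and $\sigma_t^p$ agree with their localized versions there. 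Since stable convergence is stable under modification on events of the form above and since $\PRB(\tau_m>t+2)\to1$ as $m\to\infty$, the desired stable convergence for $(X,\sigma)$ follows: for a bounded continuous $f$ and bounded $\F$-measurable $H$, one splits $\esp{Hf(\text{rescaled error})}$ over $\{\tau_m>t+2\}$ and its complement, replaces the error by the localized one on the first event at no cost, passes to the limit in $n$ using the assumed convergence for $(X^{(m)},\sigma^{(m)})$, and finally lets $m\to\infty$; the complementary term is bounded by $2\|f\|_\infty\|H\|_\infty\PRB(\tau_m\le t+2)\to0$. One must also check the limiting objects match: $\varphi_1,\varphi_2$ and the bias term in \eqref{partconv} involve only $\sigma_t$, $\|\eta\|^2(t)$, $b_t$ and $\lim_{s\searrow t}\int yF_s(dy)$, all of which coincide for the two pairs on $\{\tau_m>t\}$, so the (conditionally Gaussian) limits agree on an event of probability tending to $1$, which suffices.

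The main obstacle, and the only genuinely delicate point, is the bookkeeping around the \emph{stable} nature of the convergence rather than mere convergence in law: one needs the limit variable $U$ (and the mixing $\F$-measurable factors $\sqrt{\varphi_1}$, etc.) to be carried consistently across the localizations, i.e.\ defined on a common extension. This is handled by the standard observation that if a sequence converges stably along each set $\Omega_m$ with $\PRB(\Omega_m)\to1$, with limits that are consistent in the obvious sense (the limit restricted to $\Omega_m\cap\Omega_{m'}$ agreeing), then it converges stably on all of $\Omega$; here consistency is automatic because the limit is an explicit $\F$-conditionally Gaussian variable whose conditional law is determined by the $\F$-measurable coefficients above. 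A secondary technical nuisance is ensuring the modified $Y^{(m)}$ is genuinely of the form prescribed in $\mathbf{(H^1_q)}$ with bounded coefficients and compactly supported $F_s^{(m)}$ — this is routine given that truncation and freezing preserve the semimartingale structure and the L\'evy-type decomposition, but it should be stated carefully so that the hypothesis of the lemma genuinely applies to $(X^{(m)},\sigma^{(m)})$.
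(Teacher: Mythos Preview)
Your proposal is correct and follows essentially the same localization argument as the paper: introduce stopping times $\tau_m\uparrow\infty$ making all coefficients bounded and the jump support compact, verify that $(X^{(m)},\sigma^{(m)})$ satisfies $\mathbf{(SH)_q}$ and $\mathbf{(H^2_q)}$, apply the assumed result to the localized pair, and pass from the localized to the original statistic by splitting $\esp{Hf(\cdot)}$ over $\{\tau_m>t+h_1\}$ and its complement. Your write-up is in fact somewhat more careful than the paper's about the stable-convergence bookkeeping (consistency of the $\F$-conditionally Gaussian limits across $m$), but the underlying strategy is identical.
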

\noindent The proof of this lemma is based on a classical localization procedure and is done in the Appendix (see Section \ref{preuveannexe}).\\
As a consequence of the preceding lemma, we now work under $\mathbf{(SH)_q}$. In the following preliminary result, we state a series of useful properties on $\sigma$ under this assumption.
\begin{lemma} \label{lemmecontrol}
Assume $\mathbf{(SH)_2}$. 
\begin{itemize}
	\item[(i)] For every $T>0$ and every $r>0,$
	  \begin{equation} \label{EspSup}
    \esp{\sup_{0\leq t\leq T}(\sigma_t)^r} <\infty.
    \end{equation} 
  \item[(ii)] For every $0 \le s \le t \le  T$  such that $|t-s|\leq 1$, it exists a deterministic constant $C_T > 0$ such that:
    \begin{equation}  \label{difsigmap}
    \esp{|\sigma_t-\sigma_{s}|^r~|~\F_s } \leq C_T|t-s|^{1\wedge \frac{r}{2}},~\quad \forall r>0. 
    \end{equation}
    \begin{equation}  \label{majintsigmaW}
    \esp{ \left|\int_s^t\sigma_udW_u\right| ^q} \leq  C_T|t-s|^{\frac{q}{2}},\quad\forall  q>0.
    \end{equation}
    \begin{equation} \label{majintsigmaW2}
    \esp{ \left|\int_s^t(\sigma_u-\sigma_s)dW_u\right| ^q} \leq  C_T|t-s|^{q\wedge (\frac{q}{2}+1)},\quad\forall  q>0.
\end{equation}
\end{itemize}
\end{lemma}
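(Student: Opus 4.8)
The plan is to treat the four estimates in turn, reducing everything to the structure provided by $\mathbf{(SH)_2}$, namely that $\sigma=|Y|$ with $Y$ a semimartingale whose characteristics (drift $b$, diffusion coefficients $\eta_1,\eta_2$, jump intensity through $\int(|y|^2\wedge1)F_s(dy)$) are bounded, and whose jumps are bounded by $M$. First, for (i), I would write $Y$ via its canonical decomposition and bound $\sigma_t^r=|Y_t|^r$ by a constant times $|Y_0|^r$ plus the $r$-th moment of the supremum of each of the four terms (finite-variation drift, the two continuous martingale parts, and the compensated small-jump plus large-jump integrals). Each is controlled by Burkholder--Davis--Gundy (for the martingale parts) or directly (for the finite-variation part), using the boundedness of the coefficients under $\mathbf{(SH)_2}$; for $r\ge 2$ BDG applies directly and for $0<r<2$ one first passes to $r=2$ by Jensen/Hölder over the finite time horizon $[0,T]$. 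Since $q=2$ here, $\int(|y|^2\wedge1)F_s(dy)$ bounded together with $F_s([-M,M]^c)=0$ makes both the compensated small-jump martingale and the (finitely many, bounded) large jumps have all moments finite. This gives \eqref{EspSup}.

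For \eqref{difsigmap}, I would use $||Y_t|-|Y_s||\le|Y_t-Y_s|$ and apply the same decomposition to the increment $Y_t-Y_s$ on $[s,t]$, now taking conditional expectations given $\F_s$. The drift term contributes $O(|t-s|)$, hence $O(|t-s|^r)$ after raising to the power $r$ — negligible against $|t-s|^{1\wedge r/2}$ since $|t-s|\le 1$; the continuous martingale part contributes $|t-s|^{r/2}$ by conditional BDG and boundedness of $\eta_1,\eta_2$; the jump part likewise, splitting into the compensated-small-jump $L^2$ martingale (giving $|t-s|^{r/2}$ for $r\ge 2$, and $|t-s|^{1/2\cdot r}$ handled via Jensen for $r<2$ — note that $1\wedge r/2 = r/2$ when $r\le 2$) and the large jumps, whose expected number on $[s,t]$ is $O(|t-s|)$, contributing $O(|t-s|)$. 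Taking the worst exponent yields the stated $|t-s|^{1\wedge r/2}$. Estimate \eqref{majintsigmaW} is then immediate from conditional BDG applied to $\int_s^t\sigma_u\,dW_u$ together with (i): its $q$-th moment is bounded by $C\,\esp{(\int_s^t\sigma_u^2\,du)^{q/2}}\le C|t-s|^{q/2}\esp{\sup_{[0,T]}\sigma_u^q}<\infty$, for $q\ge 2$, and for $q<2$ by Jensen from the $q=2$ case.

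The more delicate one is \eqref{majintsigmaW2}, where the extra smoothness of the integrand $\sigma_u-\sigma_s$ must be exploited to gain the improved exponent $q\wedge(q/2+1)$ over the naive $q/2$. Here I would apply conditional BDG to get a bound by $C\,\esp{(\int_s^t|\sigma_u-\sigma_s|^2\,du)^{q/2}}$, then, for $q\ge 2$, use the generalized Minkowski / Hölder inequality to pull the power inside: $\esp{(\int_s^t|\sigma_u-\sigma_s|^2\,du)^{q/2}}\le |t-s|^{q/2-1}\int_s^t\esp{|\sigma_u-\sigma_s|^q}\,du$, and then invoke \eqref{difsigmap} with $r=q$, which gives $\esp{|\sigma_u-\sigma_s|^q}\le C|u-s|^{1\wedge q/2}\le C|t-s|^{1\wedge q/2}$; multiplying out produces $|t-s|^{q/2-1+1+(1\wedge q/2-1)^+\ldots}$ — more carefully, when $q\ge 2$ one gets $|t-s|^{q/2-1}\cdot|t-s|\cdot|t-s|^{1}=|t-s|^{q/2+1}$, and when $q\le 2$ Jensen reduces to the $q=2$ case which gives exponent $2=q\wedge(q/2+1)$ at $q=2$, and the interpolation/Jensen step then gives $|t-s|^{q}$ for $q\le 2$. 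So the exponent is $q$ on $[0,2]$ and $q/2+1$ for $q\ge 2$, i.e. $q\wedge(q/2+1)$ as claimed. The main obstacle, and the step requiring care, is precisely bookkeeping these exponents across the ranges $r<2$ and $r\ge 2$ (and likewise $q<2$, $q\ge 2$) consistently, since BDG, Jensen and the self-referential use of \eqref{difsigmap} inside \eqref{majintsigmaW2} each behave differently on the two ranges; everything else is a routine application of BDG and boundedness of the coefficients under $\mathbf{(SH)_2}$.
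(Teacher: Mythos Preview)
Your plan matches the paper's approach for \eqref{difsigmap}, \eqref{majintsigmaW}, and \eqref{majintsigmaW2} essentially line for line: the split into $r<2$ versus $r\ge 2$ (resp.\ $q$), BDG plus Jensen, and the self-referential use of \eqref{difsigmap} inside \eqref{majintsigmaW2} with the Minkowski/H\"older step are exactly what the paper does.

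For (i), however, you gloss over the one genuinely non-trivial step. After BDG on the compensated jump martingale you need $\esp{\bigl(\int_0^T\!\int y^2\,\mu(ds,dy)\bigr)^{r/2}}<\infty$ for arbitrary $r$. Boundedness of $\int y^2 F_s(dy)$ gives only the first moment of $\int y^2\mu$; the jumps being bounded by $M$ does not by itself control higher moments, since there may be infinitely many small jumps. The paper handles this by an iteration: write $\int y^2\mu=\int y^2(\mu-\nu)+\int y^2\nu$, apply BDG again to the martingale piece to get $(\int y^4\mu)^{r/4}$, and repeat $k_0$ times until $r/2^{k_0}\le 1$, at which point subadditivity of $x\mapsto x^\rho$ for $\rho\le 1$ and the boundedness of $\int y^r F_s(dy)$ (valid because $|y|\le M$) finish. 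You should make this step explicit. A related minor slip: in \eqref{difsigmap} you assert the compensated small-jump martingale contributes $|t-s|^{r/2}$ for $r\ge 2$; in fact (think of a Poisson increment) the correct order is $|t-s|$, obtained by the same iteration. Your final exponent $1\wedge r/2$ survives only because you separately recorded the $O(|t-s|)$ contribution from the large jumps.
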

\noindent The proof of this lemma is based on standard tools and is also done in the appendix (see Section \ref{preuveannexe}).

Finally, the last preliminary result is a corollary of a result by \cite{eagleson} on the stable-CLT for martingale increments adapted to our specific framework (in our case, the subset of concerned $\sigma$-fields are not ordered by the inclusion relation).
\begin{lemma}
\label{fstableconvergence} Let $(\Omega,{\cal F},\PE)$ denote a probability space. For $n\ge1$, let $\zeta_2^n, \zeta_3^n,\ldots,\zeta_{k_n}^n$ denote some martingale increments with respect to the sub-$\sigma$-fields of ${\cal F}$ $\bar{{\cal  F}}_{n,1}\subset\bar{{\cal  F}}_{n,2}\subset\ldots\subset\bar{{\cal  F}}_{n,k_n}$. Set  $S_n=\sum_{i=2}^{k_n} \zeta_i^n$ and ${\cal G}=\cap_{n\ge1}\bar{{\cal  F}}_{n,1}$.  Assume that  $n\rightarrow \bar{{\cal  F}}_{n,k_n}$ is a non-increasing sequence of $\sigma$-fields such that $\cap_{n\ge1}\bar{{\cal  F}}_{n,k_n}={\cal G}$.  Then, if the following conditions hold:
\begin{itemize}
\item[(i)]  There exists a ${\cal G}$-measurable random variable $\eta$ such that
\begin{equation}\label{cond1-tclmart}
\sum_{i=2}^{k_n}\ES[(\zeta_i^n)^2/\bar{{\cal  F}}_{n,{i-1}}]\xrn{\PE}\eta\quad \textnormal{as $n\rightarrow+\infty$},
\end{equation}
\item[(ii)] For every $\varepsilon>0$,
\begin{equation}\label{cond2-tclmart}
\sum_{i=2}^{k_n}\ES[(\zeta_i^n)^2{\bf 1}_{|\zeta_i^n|^2\ge \varepsilon}/\bar{{\cal  F}}_{n,{i-1}}]\xrn{\PE}0\quad \textnormal{as $n\rightarrow+\infty$},
\end{equation}
\end{itemize}
then, $(S_n)$ converges stably to $S$ where $S$ is defined on an extension $(\tilde{\Omega},\tilde{\cal F},\tilde{\PE})$ and such that conditionally on ${\cal F}$, the distribution of $S$ is a centered Gaussian law with variance $\eta$.
\end{lemma}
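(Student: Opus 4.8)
\noindent\emph{Proof plan.} Write $\mathrm{i}$ for the imaginary unit, to distinguish it from the running index $i$. By the classical characterisation of stable convergence, it is enough to prove that for every $u\in\R$ and every bounded $\F$-measurable $H$,
\[
\ES[H\,e^{\mathrm{i}uS_n}]\;\longrightarrow\;\ES\big[H\,e^{-\frac{u^2}{2}\eta}\big],
\]
as $n\nrn$; the right-hand side equals $\tilde{\ES}[H\,e^{\mathrm{i}uS}]$ for $S=\sqrt{\eta}\,Z$ with $Z\sim{\cal N}(0,1)$ independent of $\F$ on the extension, because $\eta$ is $\G$- hence $\F$-measurable. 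Indeed, taking $H=\1_A$ with $A\in\F$, this turns into the weak convergence of the $\PE(\,\cdot\mid A)$-laws of $S_n$ to the announced Gaussian mixture, which is exactly the asserted stable convergence.

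The main step is a reduction to $\G$-measurable test variables; this is where the structural hypotheses on the $\sigma$-fields are used. Since $S_n$ is $\bar{{\cal F}}_{n,k_n}$-measurable, $\ES[H\,e^{\mathrm{i}uS_n}]=\ES\big[\ES[H\mid\bar{{\cal F}}_{n,k_n}]\,e^{\mathrm{i}uS_n}\big]$. As $(\bar{{\cal F}}_{n,k_n})_{n\ge1}$ is non-increasing with $\bigcap_n\bar{{\cal F}}_{n,k_n}=\G$, the reverse martingale convergence theorem yields $\ES[H\mid\bar{{\cal F}}_{n,k_n}]\to\ES[H\mid\G]$ in $L^1$; combined with $|e^{\mathrm{i}uS_n}|\le1$ and the $\G$-measurability of $\eta$, this shows it suffices to prove that for every bounded $\G$-measurable $\xi$ and every $u\in\R$,
\[
\ES[\xi\,e^{\mathrm{i}uS_n}]\;\longrightarrow\;\ES\big[\xi\,e^{-\frac{u^2}{2}\eta}\big]\qquad(n\nrn).
\]
The gain is decisive: since $\G\subseteq\bar{{\cal F}}_{n,1}$ for \emph{every} $n$, the variables $\xi$ and $\eta$ are measurable with respect to the bottom $\sigma$-field of each row, and from now on no relation between the filtrations $(\bar{{\cal F}}_{n,\cdot})$ for different $n$ is needed; this is what removes the absence of nesting.

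Fix such a $\xi$. The claim now falls within the scope of the martingale central limit theorem, whose proof (Eagleson \cite{eagleson}, via the product $\prod_i(1+\mathrm{i}u\zeta_i^n)$) we run row by row. After a standard truncation of the increments at a level $\delta$ (replacing $\zeta_i^n$ by its centred truncation, the change of the sums being controlled by \eqref{cond2-tclmart} as $n\nrn$ then $\delta\to0$) and a stopping of each row at the first index where $\sum_{j\le i}\ES[(\zeta_j^n)^2\mid\bar{{\cal F}}_{n,j-1}]$ or $\sum_{j\le i}(\zeta_j^n)^2$ exceeds a large constant (the probability of stopping early being made arbitrarily small via \eqref{cond1-tclmart}--\eqref{cond2-tclmart}), we may assume all the increments and the sum $\sum_i(\zeta_i^n)^2$ bounded, so that every product below is bounded. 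Then, by iterated conditioning from $i=k_n$ down to $i=1$, using $\ES[\zeta_i^n\mid\bar{{\cal F}}_{n,i-1}]=0$ and the $\bar{{\cal F}}_{n,1}$-measurability of $\xi\,e^{-\frac{u^2}{2}\eta}$,
\[
\ES\Big[\xi\,e^{-\frac{u^2}{2}\eta}\prod_{i=2}^{k_n}(1+\mathrm{i}u\zeta_i^n)\Big]=\ES\big[\xi\,e^{-\frac{u^2}{2}\eta}\big].
\]
On the other hand $e^{\mathrm{i}ux}=(1+\mathrm{i}ux)\exp(-\frac{u^2}{2}x^2+R(x))$ with $|R(x)|\le C|u|^3|x|^3$ for bounded $x$, so $e^{\mathrm{i}uS_n}=\prod_{i=2}^{k_n}(1+\mathrm{i}u\zeta_i^n)\cdot\exp\big(-\frac{u^2}{2}\sum_i(\zeta_i^n)^2+\sum_iR(\zeta_i^n)\big)$; condition \eqref{cond2-tclmart} gives $\max_i|\zeta_i^n|\xrn{\PE}0$, whence $|\sum_iR(\zeta_i^n)|\le C|u|^3\max_i|\zeta_i^n|\sum_i(\zeta_i^n)^2\xrn{\PE}0$, while \eqref{cond1-tclmart} and \eqref{cond2-tclmart} give $\sum_i(\zeta_i^n)^2\xrn{\PE}\eta$ (the compensated sum tends to $0$ in $L^2$ by the truncation bounds and \eqref{cond2-tclmart}). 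Hence the exponential factor tends to $e^{-\frac{u^2}{2}\eta}$ in probability, and by the uniform integrability secured by the truncation,
\[
\ES[\xi\,e^{\mathrm{i}uS_n}]=\ES\Big[\xi\,e^{-\frac{u^2}{2}\eta}\prod_{i=2}^{k_n}(1+\mathrm{i}u\zeta_i^n)\Big]+o(1)=\ES\big[\xi\,e^{-\frac{u^2}{2}\eta}\big]+o(1),
\]
which is the required convergence.

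I expect the conceptual core to be the second paragraph — the downward-martingale reduction to $\G$-measurable test variables, which is precisely what lets the non-nested family $(\bar{{\cal F}}_{n,\cdot})$ be handled — while the only genuinely delicate technical point is the choice, in the last step, of the auxiliary truncations and stopping times: one must arrange that the products become uniformly integrable and, at the same time, check via exactly \eqref{cond1-tclmart}--\eqref{cond2-tclmart} that neither operation alters the limit.
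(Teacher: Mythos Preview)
Your argument is correct and its conceptual core coincides with the paper's: the decisive step is the reduction to $\G$-measurable test variables via the reverse martingale convergence theorem, using that $S_n$ is $\bar{\F}_{n,k_n}$-measurable and $\bigcap_n\bar{\F}_{n,k_n}=\G$. This is exactly what the paper does, in the same order.

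The difference lies in how the $\G$-restricted convergence is established. The paper treats Eagleson's result (Corollary~2 of \cite{eagleson}) as a black box: once $\xi$ is $\G$-measurable and $\eta$ is $\G\subset\bar{\F}_{n,1}$-measurable, that corollary gives the mixing CLT directly. This forces the paper to first assume that \eqref{cond1-tclmart}--\eqref{cond2-tclmart} hold almost surely, and then upgrade to convergence in probability by a subsequence-of-subsequence argument (extracting, along any subsequence, a further subsequence along which the a.s.\ hypotheses hold). You instead reproduce the Eagleson/Hall--Heyde proof via the product $\prod_i(1+\mathrm{i}u\zeta_i^n)$, which works under convergence in probability from the start and so avoids the subsequence step entirely. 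The cost is that you must carry out the standard truncation/stopping bookkeeping explicitly; the benefit is a self-contained argument that handles the stated hypotheses in one pass.
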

\begin{proof} First, we prove the lemma when the convergence in \eqref{cond1-tclmart} and \eqref{cond2-tclmart} holds $a.s.$ Then, the random variable $\eta$ being $\cap_{n\ge1}\bar{{\cal  F}}_{n,1}$-measurable, we deduce  from Corollary 2 of \cite{eagleson} that for every bounded ${\cal G}$-measurable random variable $Z$, for every bounded continuous function $f$ 
 \begin{equation}\label{restricted-conv2}
\ES[Z f(S_n)]\xrightarrow{n\rightarrow+\infty}\int_{\mathbb{R}}\ES[Z f(\sqrt{\eta} u)]\frac{1}{\sqrt{2\pi}}\exp(-\frac{u^2}{2})du.
\end{equation}
Now, let $Y$ denote a bounded ${\cal F}$-measurable random variable. Since $S_n$ is $\bar{{\cal  F}}_{n,k_n}$-measurable,
 $$|\ES[Yf(S_n)]-\ES[\ES[Y/{\cal G}]f(S_n)]|=|\ES[(\ES[Y/\bar{{\cal  F}}_{n,k_n}]-\ES[Y/{\cal G}])f(S_n)]|\le C\ES[|\ES[Y/\bar{{\cal  F}}_{n,k_n}]-\ES[Y/{\cal G}]|].$$
Under the assumptions of the lemma, $n\rightarrow \bar{{\cal  F}}_{n,k_n}$ is a non-increasing sequence and ${\cal G}=\cap_{n\ge1}\bar{{\cal  F}}_{n,k_n}$. Thus, by the convergence theorem for reverse martingales, 
$$\ES[Y/\bar{{\cal  F}}_{n,k_n}]\xrn{n\nrn}\ES[Y/{\cal G}]\quad a.s.$$
The function $f$ and the random variable $Y$ being bounded, it follows from the dominated convergence Theorem that 
$$\ES[Yf(S_n)]-\ES[\ES[Y/{\cal G}]f(S_n)]\xrn{n\nrn}0.$$
Finally, since $\ES[Y/{\cal G}]$ is a bounded ${\cal G}$-measurable  random variable, we deduce  from \eqref{restricted-conv2} that for every  bounded continuous function $f$,
 \begin{equation*}
\ES[Y f(S_n)]\xrightarrow{n\rightarrow+\infty}\int\ES[\ES[Y/{\cal G}] f(\sqrt{\eta} u)]\frac{e^{-\frac{u^2}{2}}}{\sqrt{2\pi}}du=\int\ES[Y f(\sqrt{\eta} u)]\frac{e^{-\frac{u^2}{2}}}{\sqrt{2\pi}}du=\tilde{\ES}[Yf(S)],
\end{equation*}
where  $S$ is defined on an extension $(\tilde{\Omega},\tilde{\cal F},\tilde{\PE})$ and such that conditionally on ${\cal F}$, the distribution of $S$ is a centered Gaussian distribution with variance $\eta$. \\
Assume now that the convergence in \eqref{cond1-tclmart} and \eqref{cond2-tclmart} only holds in probability. Following carefully the preceding proof, we observe that we only have to prove that \eqref{restricted-conv2} still holds: let $Z$ denote a bounded ${\cal G}$-measurable random variable and let $f$ be a bounded continuous function. Then, $n\rightarrow\ES[Zf(S_n)]$ is a bounded sequence.  Let $(\ES[Zf(S_{n_k})])_{k}$ denote a convergent subsequence . Using that it is enough to assume  \eqref{cond2-tclmart} for a countable family $(\varepsilon_k)$ and the fact that the convergence in probability implies the $a.s$ convergence of a subsequence, it follows from a diagonalization procedure that there exists a subsequence $(m_k)$ of $(n_k)$ such that  \eqref{cond1-tclmart} and \eqref{cond2-tclmart} hold $a.s.$ Then, a second application of Corollary 2 of \cite{eagleson} yields for any subsequence $(m_k)$:
$$\lim_{k\rightarrow+\infty}\ES[Zf(S_{m_k})]=\int\ES[Z f(\sqrt{\eta} u)]\frac{e^{-\frac{u^2}{2}}}{\sqrt{2\pi}}du.$$
This concludes the proof of the lemma.

\end{proof}
\subsection{The CLTs for the Brownian martingale terms}
In this section, we focus on the main terms of the decomposition which satisfy a central limit theorem. 
\begin{proposition} \label{L5} 
Assume that $\Delta_n=o(h_n)$ and $\mathbf{(SH)_2}$.\\

\noindent(i). Then, 
\begin{equation}
\label{CLT}
\rho_n\left(\Lambda_2^{n}(t)+\Lambda_4^n(t)\right) \xrightarrow[n \rightarrow +\infty]{{\cal L}-s} f(t,p)U,
\end{equation}
where $U\sim {\cal N}(0,1)$, $U$ is independent of ${\cal F}_{t}$ and
\begin{equation}
\label{ftp+rho}
(f^2(t,p),~\rho_n) =
\begin{cases}
\left(\varphi_1(p,t,\sigma),~\sqrt{r_n}\right)
&\textnormal{if $h_n=o(\sqrt{\Delta_n} )$},\\
\left(\beta^2\varphi_1(p,t,\sigma)+\varphi_2(p,t),~\frac{1}{\sqrt{h_n}} \right)
&\textnormal{if $\frac{\sqrt{\Delta_n}}{h_n}\rightarrow\beta \in \R^*_+$},\\
\left(\frac{1}{3}p^2(\sigma_t)^{2p-2}\|\eta\|^2(t),~\frac{1}{\sqrt{h_n}} \right)
&\textnormal{if $\frac{\sqrt{\Delta_n}}{h_n}\rightarrow 0$}.
\end{cases}
\end{equation}
(ii).
In  case of pure jump process, meaning we assume that $\eta_1=\eta_2=0$,
then, $\Lambda_4=0$ and, for every $t\in[0,T]$, 
\begin{equation}
\label{CLT2}
\sqrt{\frac{h_n}{\Delta_n}}\Lambda_2^{n}(t) \xrightarrow[n \rightarrow +\infty]{{\cal L}-s} f(t,p)U,
\end{equation}
with $f^2(t,p)=\varphi_1(p,t,\sigma).$
\end{proposition}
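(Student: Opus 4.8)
\emph{Plan of proof.} I would write $\rho_n(\Lambda_2^n(t)+\Lambda_4^n(t))$ as a sum of martingale increments and apply Lemma~\ref{fstableconvergence} with limiting $\sigma$-field $\G=\F_t$. First, discard the single boundary index $i=[t/\Delta_n]+1$ (for which $(i-1)\Delta_n\vee t=t$): its contribution to $\rho_n\Lambda_2^n(t)$ is $O_{L^2}(\rho_n/r_n)$ and to $\rho_n\Lambda_4^n(t)$ is $O_{L^2}(\rho_n\sqrt{\Delta_n})$, both $o_P(1)$ in each of the three regimes of \eqref{ftp+rho} because $\Delta_n=o(h_n)$ forces $r_n\to\infty$. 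For the remaining indices put
\[
\zeta_i^n:=\rho_n\frac{\Delta_n}{h_n}\,\sigma^p_{(i-1)\Delta_n}\Big(\big|\tfrac{\Delta_i^n W}{\sqrt{\Delta_n}}\big|^p-m_p\Big)+\rho_n\frac{[(t+h_n)/\Delta_n]-i+1}{r_n}\big(M_{i\Delta_n}-M_{(i-1)\Delta_n}\big),
\]
and let $\bar{\cal F}_{n,\cdot}$ be the increasing family generated by $\F_t$ and the post-$t$ increments of $W,W^2,\mu$ up to $i\Delta_n$, with $\bar{\cal F}_{n,1}:=\F_t$ and terminal member $\bar{\cal F}_{n,k_n}:=\F_t\vee\sigma(\text{increments of }W,W^2,\mu\text{ on }[t,t+h_n])$. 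Since $(h_n)$ is non-increasing and $h_n\to0$, $(\bar{\cal F}_{n,k_n})_n$ is non-increasing with intersection $\F_t=\G$ — precisely the non-nested situation Lemma~\ref{fstableconvergence} is tailored for — and each $\zeta_i^n$ is a genuine martingale increment, since conditionally on $\bar{\cal F}_{n,i-1}\subseteq\F_{(i-1)\Delta_n}$ the variable $\Delta_i^n W/\sqrt{\Delta_n}$ is ${\cal N}(0,1)$ and $M$ is an $(\F_s)$-martingale.

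Next I would verify hypothesis (i) of Lemma~\ref{fstableconvergence} by evaluating $\sum_i\espcond{i-1}{(\zeta_i^n)^2}$, which splits into the square of each summand plus a cross term. For the first summand, $\espcond{i-1}{(|\Delta_i^n W/\sqrt{\Delta_n}|^p-m_p)^2}=m_{2p}-m_p^2$, so this piece equals $\rho_n^2\tfrac{\Delta_n}{h_n}(m_{2p}-m_p^2)\,\tfrac1{r_n}\sum_{i\in\DNT}\sigma_{(i-1)\Delta_n}^{2p}$; right-continuity of $\sigma$ at $t$ makes the average converge a.s.\ to $\sigma_t^{2p}$, and $\rho_n^2\tfrac{\Delta_n}{h_n}\to1,\beta^2,0$ in the three regimes, so this reproduces the $\varphi_1$-contribution in \eqref{ftp+rho}. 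For the second summand, $\espcond{i-1}{(M_{i\Delta_n}-M_{(i-1)\Delta_n})^2}=\espcond{i-1}{\langle M\rangle_{i\Delta_n}-\langle M\rangle_{(i-1)\Delta_n}}$ with $d\langle M\rangle_s=p^2\sigma_s^{2p-2}\|\eta\|^2(s)\,ds$; replacing each conditional expectation by the bracket increment (the difference being a martingale-difference sum which, times $\rho_n^2 r_n^{-2}$, is $o_P(1)$ by an $L^2$ bound using $|t-s|\le1$ and $\mathbf{(SH)_2}$), then using $\sum_i([(t+h_n)/\Delta_n]-i+1)^2\sim r_n^3/3$ and right-continuity of $s\mapsto\sigma_s^{2p-2}\|\eta\|^2(s)$ at $t$, this piece tends to $\rho_n^2\tfrac{h_n}3 p^2\sigma_t^{2p-2}\|\eta\|^2(t)$, i.e.\ to $\varphi_2(p,t)$ in the last two regimes and to $0$ in the first. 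The cross term is the delicate point: freezing the integrand of $M$ at the left endpoint of each cell, its leading part vanishes — the $W$-driven piece because $\esp{|\xi|^p\xi}=0$ ($x\mapsto|x|^px$ being odd), the $W^2$-driven piece because $\langle W,W^2\rangle\equiv0$ — while the remainder is $O(\sqrt{\Delta_n}\,\delta_n)$ per cell with $\delta_n\to0$, by Lemma~\ref{lemmecontrol} and right-continuity of the coefficients; summing over $\DNT$ leaves $\le C\delta_n\rho_n^2\sqrt{\Delta_n}$, which tends to $0$ in all three regimes. Hence $\sum_i\espcond{i-1}{(\zeta_i^n)^2}\to f^2(t,p)$, an $\F_t$- (hence $\G$-) measurable limit.

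Third I would check the conditional Lindeberg condition (ii) by the Lyapunov bound $\sum_i\espcond{i-1}{|\zeta_i^n|^{2+\delta}}\le C\rho_n^{2+\delta}\big(r_n^{-(1+\delta)}+\Delta_n^{\delta/2}h_n\big)\longrightarrow0$ for small $\delta>0$, using the elementary facts $\rho_n^2\le Cr_n$, $\rho_n^2h_n\le C$, $\rho_n\sqrt{\Delta_n}\to0$ together with moment bounds for $|\Delta_i^n W/\sqrt{\Delta_n}|^p$, the Burkholder--Davis--Gundy inequality for $M_{i\Delta_n}-M_{(i-1)\Delta_n}$, and Lemma~\ref{lemmecontrol} for $\sigma$. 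Lemma~\ref{fstableconvergence} then gives $\sum_i\zeta_i^n\xrn[n\rightarrow+\infty]{{\cal L}-s}\sqrt{f^2(t,p)}\,U$ with $U\sim{\cal N}(0,1)$ independent of $\F$, hence of $\F_t$; reinstating the negligible boundary index yields \eqref{CLT}. Part (ii) is the case $\eta_1=\eta_2=0$: then $M\equiv0$, $\Lambda_4^n\equiv0$, $\rho_n=\sqrt{r_n}$, only the first summand survives, $f^2(t,p)=\varphi_1(p,t,\sigma)$, and \eqref{CLT2} follows.

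The step I expect to be the real obstacle is the cross term between $\Lambda_2^n$ and $\Lambda_4^n$ in the regime $\sqrt{\Delta_n}/h_n\to\beta\in\R^*_+$, where both summands feed the limiting variance $\beta^2\varphi_1(p,t,\sigma)+\varphi_2(p,t)$ and a crude Cauchy--Schwarz estimate on the cross term is only $O(1)$; one genuinely needs the exact cancellations above — the parity identity $\esp{|\xi|^p\xi}=0$, which exploits that the Gaussian increment in $\Lambda_2^n$ is the very one driving the $\eta_1$-part of $M$ on that cell, and the orthogonality $\langle W,W^2\rangle\equiv0$. A secondary, more bookkeeping, obstacle is making the intersection $\sigma$-field equal $\F_t$ while keeping the reverse-martingale hypothesis of Lemma~\ref{fstableconvergence}; this is handled by enlarging the terminal $\sigma$-field to $\F_t\vee(\text{increments of the drivers on }[t,t+h_n])$ and using that $(h_n)$ is non-increasing.
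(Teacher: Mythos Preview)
Your approach is essentially the paper's: write $\rho_n(\Lambda_2^n(t)+\Lambda_4^n(t))$ as a martingale-difference array, discard the boundary index, apply Lemma~\ref{fstableconvergence} with $\G=\F_t$, compute the three pieces of the conditional second moment, and check a Lyapunov/fourth-moment condition. Two points deserve comment.

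\emph{Filtration choice.} Your $\bar{\cal F}_{n,i}=\F_t\vee\sigma(\text{post-$t$ increments of }W,W^2,\mu)$ may be too small for $\zeta_i^n$ to be $\bar{\cal F}_{n,i}$-measurable: $\sigma_{(i-1)\Delta_n}$ and the integrand of $M$ involve $b,\eta_1,\eta_2$, which under $\mathbf{(H^1_q)}$ are merely adapted c\`adl\`ag and need not be functionals of $W,W^2,\mu$. The paper avoids this by taking $\bar{\cal F}_{n,i}:=\F_{(i+[t/\Delta_n])\Delta_n}$ for $i<k_n$ and $\bar{\cal F}_{n,k_n}:=\F_{t+h_n}$; then adaptedness is immediate, $(\bar{\cal F}_{n,k_n})_n=(\F_{t+h_n})_n$ is non-increasing because $(h_n)$ is, and $\cap_n\bar{\cal F}_{n,1}=\cap_n\bar{\cal F}_{n,k_n}=\F_t$ by right-continuity of $(\F_t)$. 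This is the simpler fix for the ``secondary obstacle'' you flag.

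\emph{Cross term.} Your treatment is actually cleaner than the paper's. The paper expands $|\Delta_i^nW|^p$ by It\^o's formula into a stochastic and an absolutely continuous part, producing two terms $T_i^{n,1},T_i^{n,2}$ handled separately (the second via an integration by parts). Your ``freeze the integrand of $M$ and use $\ES[(|\xi|^p-m_p)\xi]=0$ plus $\langle W,W^2\rangle=0$'' reaches the same cancellation in one step; the Cauchy--Schwarz bound $C\rho_n^2\sqrt{\Delta_n}\,\delta_n$ on the remainder is exactly the order the paper obtains and is $o(1)$ in all three regimes (with $\delta_n\to0$ by right-continuity and dominated convergence, as in the paper's argument around \eqref{argumentrecurrent}). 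The paper's Lindeberg check uses fourth moments via Cauchy--Schwarz/Chebyshev, but your Lyapunov $2+\delta$ bound is equivalent.
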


\begin{proof}
Actually, in case (ii), the proof is easier since it only deals with $\Lambda_2^n,$ and is more or less included in what follows.\\


\noindent Let $t>0$. Let 
$
\{(\xi_i^n),i=[t/\Delta_n]+2,\ldots,[(t+h_n)/\Delta_n],n\ge1\}
$
be the sequence  of martingale increments defined by: $\xi^n=\xi_i^{n,1}+\xi_i^{n,2}$ with
\begin{align}
\xi_i^{n,1}
&:=\frac{\rho_n}{r_n}\left( \left| \sigma_{(i-1) \Delta_n} \frac{\Delta^n_i W}{\sqrt{\Delta_n}} \right|^p - \espcond{i-1}{ \left| \sigma_{(i-1) \Delta_n} \frac{\Delta^n_i W}{\sqrt{\Delta_n}} \right|^p}\right),  \label{xi_ni}\\
\xi_i^{n,2}
&:=\frac{\rho_n}{r_n}([(t+h_n)/\Delta_n]-i+1) (M_{i\Delta_n}-M_{(i-1)\Delta_n\vee t}). \label{xi_ni2}
\end{align}
We first notice that  $\sum_{i=[\frac{t}{\Delta_n}]+2}^{[\frac{t+h_n}{\Delta_n}]}(\xi_i^{n,1}+\xi_i^{n,2})=\rho_n(\Lambda_2^n(t)+\Lambda_4^n(t))-\varepsilon_n$ where
$\varepsilon_n=\xi_{[\frac{t}{\Delta_n}]+1}^n\rightarrow0$ in probability. Second, let us show that Lemma \ref{fstableconvergence} can be applied to the sequence $(\xi_i^n)$. Set $k_n:=[(t+h_n)/\Delta_n]-[t/\Delta_n]$. For every $i\in\{2,\ldots,k_n\}$,
set $\zeta_{i,n}=\xi_{[t/\Delta_n]+i}$ which is $(\bar{\cal F}_{n,i})$-adapted where $\bar{\cal F}_{n,i}:={\cal F}_{(i+[t/\Delta_n])\Delta_n}$ for every $i\in\{1,\ldots,k_n-1\}$ and $\bar{\cal F}_{n,k_n}:={\cal F}_{t+h_n}$. We observe that the sequence $n\rightarrow \bar{\cal F}_{n,k_n}={\cal F}_{t+h_n}$ is nonincreasing  and that $\cap_{n\ge1} \bar{\cal F}_{n,1}= \cap_{n\ge1} \bar{\cal F}_{n,k_n}={\cal F}_t$ since the filtration $({\cal F}_t)$ is right-continuous. Then, we deduce from Lemma \ref{fstableconvergence} that in order to prove the proposition, it is now enough to check Conditions \eqref{cond1-tclmart} and \eqref{cond2-tclmart}. These conditions will follow from the two following lemmas.
\begin{lemma} \label{var}
Let $f(t,p)$ defined by (\ref{ftp+rho}), then
$$
\sum_{i=[\frac{t}{\Delta_n}]+2}^{[\frac{t+h_n}{\Delta_n}]}\espcond{i-1}{(\xi_i^{n,1}+\xi_i^{n,2})^2} \xrightarrow[n\rightarrow+\infty]{\PRB} f^2(t,p).
$$ 
\end{lemma}

\begin{proof}
Three sums have to be computed:
$$
\sum_{i=[\frac{t}{\Delta_n}]+2}^{[\frac{t+h_n}{\Delta_n}]} \espcond{i-1}{(\xi_i^{n,1})^2}, \qquad
\sum_{i=[\frac{t}{\Delta_n}]+2}^{[\frac{t+h_n}{\Delta_n}]} \espcond{i-1}{(\xi_i^{n,2})^2}, \qquad
\sum_{i=[\frac{t}{\Delta_n}]+2}^{[\frac{t+h_n}{\Delta_n}]} \espcond{i-1}{\xi_i^{n,1} \xi_i^{n,2}}.
$$
(i) First
\begin{equation}  \label{**}
\espcond{i-1}{\left(\xi_i^{n,1}\right)^2}=\left(\frac{\rho_n}{r_n}\right)^2(m_{2p}-m_p^2) (\sigma_{(i-1) \Delta_n})^{2p},
\end{equation}
and since $\sigma$ is c\`ad,
$$
\frac{1}{r_n}\sum_{i=[\frac{t}{\Delta_n}]+2}^{[\frac{t+h_n}{\Delta_n}]} \sigma_{(i-1) \Delta_n}^{2p}\xrightarrow[n\rightarrow+\infty]{} \sigma_t^{2p}\quad a.s.
$$
Thus, by the definition of $\rho_n$,
\begin{equation}\label{crochet1}
\sum_{i=[\frac{t}{\Delta_n}]+2}^{[\frac{t+h_n}{\Delta_n}]}\espcond{i-1}{(\xi_i^{n,1})^2}\xrightarrow[n\rightarrow+\infty]{\PRB}
\begin{cases}
(m_{2p}-m_p^2)\sigma_t^{2p}&\textnormal{if $h_n=o(\sqrt{\Delta_n})$},\\
\beta^2(m_{2p}-m_p^2)\sigma_t^{2p}&\textnormal{if $\sqrt{\Delta_n}/h_n\rightarrow\beta\in\R_+^*$},\\
0&\textnormal{if $\sqrt{\Delta_n}/h_n\rightarrow 0$}.
\end{cases}
\end{equation}
(ii) Second,
\begin{align*}
&\espcond{i-1}{\left(\xi_i^{n,2}\right)^2}=\left(\frac{\rho_n([(t+h_n)/\Delta_n]-i+1)}{r_n}\right)^2
\int_{\CNT}\espcond{i-1}{\psi_s} ds,
\end{align*}
with $\psi_s=p^2|Y_s|^{2p-2}[\eta_1^2(s)+\eta_2^2(s)]$. One observes that
\begin{align*}
&\esp{ \sum_{i=[\frac{t}{\Delta_n}]+2}^{[\frac{t+h_n}{\Delta_n}]}\left(\frac{\rho_n([(t+h_n)/\Delta_n]-i+1)}{r_n}\right)^2
\left|\int_{\CNT}(\espcond{i-1}{\psi_s}-\psi_t) ds\right|},\\
&\qquad\le \Delta_n\sum_{i=[\frac{t}{\Delta_n}]+2}^{[\frac{t+h_n}{\Delta_n}]}\left(\frac{\rho_n([(t+h_n)/\Delta_n]-i+1)}{r_n}\right)^2\esp{\sup_{s\in[t, t+h_n]}|\psi_s-\psi_t|},\\
&\qquad\le C\rho_n^2 h_n \esp{\sup_{s\in[t, t+h_n]}|\psi_s-\psi_t|}.
\end{align*}
The function $\psi$ is c\`ad. Therefore, using  \eqref{EspSup} and the fact that $\eta_1$ and $\eta_2$ are bounded, we deduce from the dominated convergence theorem that for every $t\in[0,T]$,
\begin{equation} \label{argumentrecurrent}
\esp{\sup_{s\in[t, t+h_n]}|\psi_s-\psi_t|} \xrightarrow[n\rightarrow+\infty]{} 0.
\end{equation}
It follows from the definition of $\rho_n$ that 
$$
\sum_{i=[\frac{t}{\Delta_n}]+2}^{[\frac{t+h_n}{\Delta_n}]}\left(\espcond{i-1}{(\xi_i^{n,2})^2}-\left(\frac{\rho_n([(t+h_n)/\Delta_n]-i+1)}{r_n}\right)^2\Delta_n\psi_t\right)\xrightarrow[n\rightarrow+\infty]{\PRB}0.
$$
Thus, since 
$$
\frac{1}{h_n}\sum_{i=[\frac{t}{\Delta_n}]+2}^{[\frac{t+h_n}{\Delta_n}]}\left(\frac{([(t+h_n)/\Delta_n]-i+1)}{r_n}\right)^2\Delta_n\xrightarrow[n\rightarrow+\infty]{}\frac{1}{3},
$$
we obtain that the order of $\sum_{i=[\frac{t}{\Delta_n}]+2}^{[\frac{t+h_n}{\Delta_n}]}\espcond{i-1}{(\xi_i^{n,2})^2}$
is $\frac{1}{3}\rho_n^2h_n\psi_t$ thus
\begin{equation} \label{crochet2}
\sum_{i=[\frac{t}{\Delta_n}]+2}^{[\frac{t+h_n}{\Delta_n}]}\espcond{i-1}{(\xi_i^{n,2})^2} \xrightarrow[n\rightarrow+\infty]{\PRB}
\begin{cases}
0&\textnormal{if $h_n=o(\sqrt{\Delta_n})$}\\
\frac{\psi_t}{3}&\textnormal{if $\sqrt{\Delta_n}/h_n\rightarrow\beta\in\R_+$.}
\end{cases}
\end{equation}
(iii) Finally, we consider the cross products $\espcond{i-1}{\xi_i^{n,1}\xi_i^{n,2}}$. First of all, it is easily seen that, $W$ and $W^2$ being independent, only the term in $W$ of $M$ will play a role. Thus we have:
\begin{align*}
&\espcond{i-1}{\xi_i^{n,1}\xi_i^{n,2}}  \\
&= \alpha_{i,n}(t) \sigma_{(i-1)\Delta_n}^p \espcond{i-1}{\int_{\CNT} p \sigma_s^{p-1}\eta_1(s)dW_s \left(|\Delta_i^n W|^p - \espcond{i-1}{|\Delta_i^n W|^p}\right)}
\end{align*}
with $\alpha_{i,n}(t)=(\rho_n/r_n)^2\Delta_n^{-p/2}([(t+h_n)/\Delta_n]-i+1)$. Now, by It\^o's formula, 
$$
\left|\Delta_i^nW\right|^p
=p\int_{\CN}~{\rm sgn}(W_s-W_{(i-1)\Delta_n})|W_s-W_{(i-1)\Delta_n}|^{p-1}dW_s+\frac{p(p-1)}{2}\int_{\CN}|W_s-W_{(i-1)\Delta_n}|^{p-2}ds.
$$
Then, we have $\espcond{i-1}{\xi_i^{n,1}\xi_i^{n,2}}=T_i^{n,1}+T_i^{n,2}$ with
\begin{align*}
T_i^{n,1} &=p^2\alpha_{i,n}(t)\sigma_{(i-1)\Delta_n}^p\int_{\CNT}\espcond{i-1}{\sigma_s^{p-1}\eta_1(s)~{\rm sgn}(W_s-W_{(i-1)\Delta_n})|W_s-W_{(i-1)\Delta_n}|^{p-1}}ds,\\
T_i^{n,2} &=\frac{p^2(p-1)}{2}\alpha_{i,n}(t)\sigma_{(i-1)\Delta_n}^p      \espcond{i-1}{\int_{\CN}|W_s-W_{(i-1)\Delta_n}|^{p-2}ds\int_{\CNT}\sigma_s^{p-1}\eta_1(s)dW_s}.
\end{align*}

\noindent First, let us focus on $T_i^{n,2}$. By an integration by parts, one obtains that: 
\begin{align*}
&\espcond{i-1}{\int_{\CNT}|W_s-W_{(i-1)\Delta_n\vee t}|^{p-2}ds\int_{\CNT}\sigma_s^{p-1}\eta_1(s)dW_s}\\
&=\int_{\CNT}\espcond{i-1}{\left(\int_{\CNTS} \sigma_u^{p-1}\eta_1(u)dW_u\right)|W_s-W_{(i-1)\Delta_n\vee t}|^{p-2}}ds\\
&=\int_{\CNT}
\espcond{i-1}{\int_{\CNTS} (\sigma_u^{p-1}\eta_1(u)-\sigma_t^{p-1}\eta_1(t))dW_u.|W_s-W_{(i-1)\Delta_n\vee t}|^{p-2}}ds,
\end{align*}
where in the last line we used that for every $s \in [(i-1)\Delta_n,i\Delta_n]$, 
$$
\espcond{i-1}{(W_s-W_{(i-1)\Delta_n\vee t})|W_s-W_{(i-1)\Delta_n\vee t}|^{p-2}}=0.
$$
Then, using Cauchy-Schwarz inequality 
\begin{align*}
&\espcond{i-1}{\left(\int_{\CNTS}\left(\sigma_u^{p-1}\eta_1(u)-\sigma_t^{p-1}\eta_1(t)\right)dW_u\right)|W_s-W_{(i-1)\Delta_n\vee t}|^{p-2}}\\
&\quad \leq 
\sqrt{\espcond{i-1}{\left(\int_{\CNTS} \left(\sigma_u^{p-1}\eta_1(u)-\sigma_t^{p-1}\eta_1(t)\right)dW_u\right)^2}}
.~\sqrt{\espcond{i-1}{|W_s-W_{(i-1)\Delta_n\vee t}|^{2p-4}}},\\
&\quad \leq 
\sqrt{\int_{\CNTS} \espcond{i-1}{\left(
\sup_{u\in[t,t+h_n]}\left|\sigma_u^{p-1}\eta_1(u)-\sigma_t^{p-1}\eta_1(t)\right|^2\right)}du}
.~(s-(i-1)\Delta_n)^{(p-2)/2}.
\end{align*}
Then \eqref{EspSup} yields:
\begin{align*}
\esp{|T_i^{n,2}|}&\le C\alpha_{i,n}(t)\esp{\sup_{u\in[t,t+h_n]}\left|\sigma_u^{p-1}\eta_1(u)-\sigma_t^{p-1}\eta_1(t)\right|^2}^{\frac{ 1}{2}} \int_{\CNT}(s-(i-1)\Delta_n)^{\frac{p-1}{2}}ds,\\
&\le C\frac{\rho_n^2\sqrt{\Delta_n}}{r_n^2}([(t+h_n)/\Delta_n]-i+1) \esp{\sup_{u\in[t,t+h_n]}\left|\sigma_u^{p-1}\eta_1(u)-\sigma_t^{p-1}\eta_1(t)\right|^2}^\demi.
\end{align*}
Thus, an argument similar to \eqref{argumentrecurrent} yields:
\begin{equation} \label{termeneg}
\sum_{i=[\frac{t}{\Delta_n}]+2}^{[\frac{t+h_n}{\Delta_n}]}~~T_i^{n,2}\xrightarrow[n\rightarrow+\infty]{\PRB}0.
\end{equation}
Second, we focus on $T_i^{n,1}$. Using again that $\sigma$ and $\eta$ are c\`ad, one obtains that
$$
\sum _{i=[\frac{t}{\Delta_n}]+2}^{[\frac{t+h_n}{\Delta_n}]} \left[T_i^{n,1}-p^2\alpha_{i,n}(t)
\sigma_t^{2p-1}\eta_1(t)\int_{\CNT}
\espcond{i-1}{~{\rm sgn}(W_s-W_{(i-1)\Delta_n})|W_s-W_{(i-1)\Delta_n}|^{p-1}}ds\right]
\xrightarrow[n\rightarrow+\infty]{\PRB} 0.
$$
Then, since $\espcond{i-1}{~{\rm sgn}(W_s-W_{(i-1)\Delta_n})|W_s-W_{(i-1)\Delta_n}|^{p-1}}=0$
we deduce that
$$\sum_{i=[\frac{t}{\Delta_n}]+2}^{[\frac{t+h_n}{\Delta_n}]}~~ T_i^{n,1}
\xrightarrow[n\rightarrow+\infty]{\PRB} 0.$$
then with \eqref{termeneg} that 

\begin{equation}
\label{crochet3}
\sum_{i=[\frac{t}{\Delta_n}]+2}^{[\frac{t+h_n}{\Delta_n}]}\espcond{i-1}{\xi_i^{n,1}\xi_i^{n,2}}\xrightarrow[n\rightarrow+\infty]{\PRB}0.
\end{equation}

Thus, by \eqref{crochet1}, \eqref{crochet2} and \eqref{crochet3}, we obtain that,
$$
\sum_{i=[\frac{t}{\Delta_n}]+2}^{[\frac{t+h_n}{\Delta_n}]}\espcond{i-1}{(\xi_i^{n,1}+\xi_i^{n,2})^2} \xrightarrow[n\rightarrow+\infty]{\PRB} f^2(t,p).
$$ 
\end{proof}

\begin{lemma}
The following Lindeberg condition holds:
\begin{equation}\label{lindeberg}
\sum_{i=[\frac{t}{\Delta_n}]+2}^{[\frac{t+h_n}{\Delta_n}]}\espcond{i-1}{(\xi_i^{n,1}+\xi_i^{n,2})^2 \1_{|\xi_i^{n,1}+\xi_{i}^{n,2}|^2\ge \varepsilon}} \xrightarrow[n \rightarrow +\infty]{} 0\quad a.s. \quad\forall\varepsilon>0.
\end{equation}
\end{lemma}

\begin{proof}
Let us prove \eqref{lindeberg}. We derive from the Cauchy-Schwarz and Chebyshev inequalities that,
\begin{align*}
\espcond{i-1}{\left(\xi_i^{n,1}+\xi_i^{n,2}\right)^2 \1_{\left\{|\xi_i^{n,1}+\xi_i^{n,2}|^2\ge \varepsilon \right\} } }& \le \espcond{i-1}{\left(\xi_i^{n,1}+\xi_i^{n,2}\right)^4}^{\frac{1}{2}} \left[ \pr{}{ \left\{ |\xi_i^{n,1}+\xi_i^{n,2}|^2\ge \varepsilon \right\}~\Big|~{\cal F}_{(i-1)\Delta_n}}\right]^{\frac{1}{2}},\\
& \le \frac{8}{\varepsilon}\left(\espcond{i-1}{\left(\xi_i^{n,1}\right)^4}+\espcond{i-1}{\left(\xi_i^{n,2}\right)^4}\right).
\end{align*}
On the one hand, using \eqref{xi_ni},
$$
\espcond{i-1}{\left(\xi_i^{n,1}\right)^4}= \frac{\rho_n^4}{r_n^4}\sigma_{(i-1) \Delta_n}^{4p} \esp{(|U|^p-m_p)^4},
$$
and since $\sigma$ is locally bounded, we obtain that there exists $C(\omega)$ such that for all $t \ge 0$,
$$
\sum_{i=[\frac{t}{\Delta_n}]+2}^{[\frac{t+h_n}{\Delta_n}]}\espcond{i-1}{(\xi_i^{n,1})^4 }
\le \frac{C(\omega)}{\varepsilon}\sum_{i=[\frac{t}{\Delta_n}]+2}^{[\frac{t+h_n}{\Delta_n}]}\frac{\rho_n^4}{r_n^4}
= \frac{C(\omega)}{\varepsilon}\frac{\rho_n^4}{r_n^3}.
$$
If $h_n=o(\sqrt{\Delta_n})$ (resp. $\sqrt{\Delta_n}=O(h_n)$), $\rho_n^4/r_n^3=1/r_n$ (resp. $\rho_n^4/r_n^3=\Delta_n^{3/2}/h_n^5$). Thus,\\
$\sum_{i=[\frac{t}{\Delta_n}]+2}^{[\frac{t+h_n}{\Delta_n}]}\espcond{i-1}{(\xi_i^{n,1})^4 }\xrightarrow[n\rightarrow+\infty]{}0\quad a.s.$\\
On the other hand, using (\ref{xi_ni2})
\begin{align*}
\esp{\sum_{i=[\frac{t}{\Delta_n}]+2}^{[\frac{t+h_n}{\Delta_n}]}\espcond{i-1}{\left(\xi_i^{n,2}\right)^4}}&=\left(\frac{\rho_n}{r_n}\right)^4\left([(t+h_n)/\Delta_n]-i+1\right)^4\esp{(M_{i\Delta_n}-M_{(i-1)\Delta_n\vee t})^4}\\
&\quad\le\sum_{i=[\frac{t}{\Delta_n}]+2}^{[\frac{t+h_n}{\Delta_n}]}\left(\frac{\rho_n}{r_n}\right)^4\left([(t+h_n)/\Delta_n]-i+1\right)^4\esp{\left(\int_{\CNT}\psi(s)ds\right)^2}.\\
&\quad\le \sum_{i=[\frac{t}{\Delta_n}]+2}^{[\frac{t+h_n}{\Delta_n}]}\left(\frac{\rho_n}{r_n}\right)^4\left([(t+h_n)/\Delta_n]-i+1\right)^4\Delta_n \int_{\CNT}\esp{\psi(s)^2}ds.
\end{align*}
Since $\psi(s)\le C|\sigma_t|^{2p-2}$, it follows from \eqref{EspSup} that $\sup_{s\in[0,T]}\esp{\psi(s)^2}<+\infty$. Now,

$$\sum_{i=[\frac{t}{\Delta_n}]+2}^{[\frac{t+h_n}{\Delta_n}]}\left(\frac{\rho_n}{r_n}\right)^4\left([(t+h_n)/\Delta_n]-i+1\right)^4\Delta_n^2\le
C\rho_n^4 r_n\Delta_n^2,$$
and one checks that this right-hand member tends to 0 in every cases.
It follows that the Lindeberg condition is fulfilled.
\end{proof}
\noindent These two lemmas conclude the proof of Proposition \ref{L5}.
\end{proof}

\subsection{The remainder terms}
 We focus on $\Lambda_5^n(t),$  recalling:
 $$
 \Lambda_5^n(t)=\frac{1}{r_n} \sum_{i\in\DNT}([(t+h_n)/\Delta_n]-i+1)(A_{i\Delta_n}-A_{(i-1)\Delta_n\vee t}).$$
 We obtain the following results of convergence in probability.
 \begin{proposition}
 \label{lambda5}
 Assume $\mathbf{(SH)_q}$ and $\mathbf{(H^2_q)}$ with $q\in]1,2]$. Then, for every $t\in[0,T]$:
\begin{equation}
\label{sigma5}
\frac{1}{h_n^{1/q}}\Lambda_5^{n}(t)\xrightarrow[n\rightarrow+\infty]{\PRB}0.
\end{equation}
Assume that the previous assumptions hold with  $q=1$ and that $\left(\int_{\{0<|y|\le1\}}y F_t(dy)\right)_{t\ge0}$ is càglàd. Let $(\theta_t^0)$ be defined by 
\begin{equation}\label{thetato}
\theta_t^0:= p\sigma_t^{p-1}\left(b_t-\lim_{s\searrow t}\int_{\{0<|y|\le1\}}y F_s(dy)\right)+\frac{p(p-1)}{2}\sigma_t^{p-2}\|\eta\|^2(t),
\end{equation}
Then for every $t\in[0,T]$,
\begin{equation}
\label{sigma5'}
\frac{1}{h_n}\Lambda_5^{n}(t)\xrightarrow[n\rightarrow+\infty]{\PRB} \frac{\theta_t^0}{2}.
\end{equation}
 \end{proposition}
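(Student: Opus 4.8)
The plan is to split $A$ into a predictable finite--variation (drift) part and a martingale remainder, to treat the two contributions to $\Lambda_5^n(t)$ separately, and in each case to pass to the limit in a weighted Riemann sum over the window $[t,t+h_n]$. By Lemma \ref{appoxM} we may work under $\mathbf{(SH)_q}$ and use the moment controls of Lemma \ref{lemmecontrol}. First I would rewrite the jump part of $A$: for $q\in(1,2]$ set $\kappa_s:=\theta_s+\int_{\R}\big(|Y_s+y|^p-|Y_s|^p-p\,{\rm sgn}(Y_s)|Y_s|^{p-1}y\1_{|y|\le1}\big)F_s(dy)$, which is finite since the integrand is $O(y^2)$ near $0$ and $\int(1\wedge|y|^q)F_s(dy)$ is bounded; then $A_t=\int_0^t\kappa_s\,ds+N_t$ with $N$ the purely discontinuous martingale collecting the compensated small--jump integral and the compensated second--order jump sum (so that $\Delta N_s=\Delta(\sigma^p)_s$ for $|\Delta Y_s|\le1$). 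When $q=1$ the small--jump sum is itself of locally finite variation, so decompensating it brings $A$ to the form $A_t=\int_0^t\theta^0_s\,ds+P_t$, where $P$ is the pure--jump part of $\sigma^p$ and $\theta^0$ is the process of \eqref{thetato}: this is exactly where the term $\lim_{s\searrow t}\int_{\{0<|y|\le1\}}yF_s(dy)$ enters.

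For the drift contribution, Abel summation over $\DNT$ gives $\Lambda_5^{n,\mathrm{drift}}(t)=\frac1{r_n}\sum_{i\in\DNT}\big([(t+h_n)/\Delta_n]-i+1\big)\int_{(i-1)\Delta_n\vee t}^{i\Delta_n}\kappa_s\,ds$. For $q\in(1,2]$ it suffices to note that $\esp{|\kappa_s|}$ is bounded and $\frac1{h_n}\sum_{i\in\DNT}\frac{[(t+h_n)/\Delta_n]-i+1}{r_n}\Delta_n\to\frac12$, whence $\Lambda_5^{n,\mathrm{drift}}(t)=O(h_n)=o(h_n^{1/q})$ because $1/q<1$. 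For $q=1$ and $\eta_1=\eta_2=0$ one needs the precise limit: using the c\`agl\`ad hypothesis on $\int_{\{0<|y|\le1\}}yF_s(dy)$ and the dominated--convergence argument of \eqref{argumentrecurrent} to replace $\theta^0_s$ by $\lim_{s\searrow t}\theta^0_s=\theta^0_t$ block by block, one gets $\frac1{h_n}\Lambda_5^{n,\mathrm{drift}}(t)\xrn{\PRB}\frac12\theta^0_t$, so what is left in the $q=1$ case is to show that the contribution of $P$ to $\Lambda_5^n(t)$ is $o_{\PRB}(h_n)$.

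The delicate point — and what I expect to be the main obstacle — is the martingale contribution $\Lambda_5^{n,\mathrm{mart}}(t)=\frac1{r_n}\sum_{i\in\DNT}\big([(t+h_n)/\Delta_n]-i+1\big)(N_{i\Delta_n}-N_{(i-1)\Delta_n\vee t})$ (and the analogous compensated jump term hidden in $P$ when $q=1$). The weights being deterministic and $N$ a martingale, the cross terms vanish, so $\esp{(\Lambda_5^{n,\mathrm{mart}}(t))^2}=\frac1{r_n^2}\sum_{i\in\DNT}\big([(t+h_n)/\Delta_n]-i+1\big)^2\esp{\langle N\rangle_{i\Delta_n}-\langle N\rangle_{(i-1)\Delta_n\vee t}}\le\esp{\langle N\rangle_{t+h_n}-\langle N\rangle_t}$, and by Lemma \ref{lemmecontrol}(i) and $\mathbf{(SH)_q}$ one has $\langle N\rangle_{t+h_n}-\langle N\rangle_t\le C\sup_{s\le T}(\sigma_s)^{2p-2}\int_t^{t+h_n}\int_{|y|\le1}y^2F_s(dy)\,ds$ on the complement of the event that $Y$ has a jump of size $>\eps$ in $[t,t+h_n]$, an event of probability $\le Ch_n\eps^{-q}$. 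On that good event $\int_{|y|\le\eps}y^2F_s(dy)\le\eps^{2-q}\int_{|y|\le\eps}|y|^qF_s(dy)$, which by $\mathbf{(H^2_q)}$ is uniformly small as $\eps\downarrow0$; carrying this through — letting $\eps=\eps_n\downarrow0$ slowly, with $\eps_n^q/h_n\to\infty$ so the exceptional event still has vanishing probability — is what must yield $\esp{(\Lambda_5^{n,\mathrm{mart}}(t))^2}=o(h_n^{2/q})$, resp.\ $o(h_n^2)$ when $q=1$. Chebyshev then gives $\Lambda_5^{n,\mathrm{mart}}(t)/h_n^{1/q}\xrn{\PRB}0$, and likewise for $P$ in the $q=1$ case, and combining with the drift computation of the previous paragraph produces \eqref{sigma5} and \eqref{sigma5'}. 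The genuine work is this last estimate — extracting from $\mathbf{(H^2_q)}$ enough uniform smallness of the small--jump intensity to push the quadratic variation of the compensated jump part of $A$ over the window below the normalising scale $h_n^{1/q}$; by comparison the drift part, including the one--sided c\`agl\`ad limit in the $q=1$ statement, is routine.
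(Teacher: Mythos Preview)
Your overall strategy --- splitting $A$ into a drift and a purely discontinuous martingale and treating the weighted window sums separately --- is sound and mirrors the paper's decomposition $A=\cLe+\cMe+\cNe$. The drift analysis is correct, and your handling of the $q=1$ case (decompensate, absorb the pure jump part into an $L^1$-small sum using $\mathbf{(H^2_1)}$ and the ``eventually no large jumps'' argument) is essentially right. The gap is in the compensated--jump martingale for $q\in(1,2)$.

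First a minor point: conditioning on the ``good event'' (no jump $>\eps$ in $[t,t+h_n]$) does \emph{not} reduce the predictable bracket $\langle N\rangle$ to integrals over $\{|y|\le\eps\}$; the bracket is computed from the compensator $\nu$ and is unchanged by which jumps actually occur. What you presumably intend is a split $N=N^{\le\eps}+N^{>\eps}$, with $N^{>\eps}$ disposed of via the ``eventually no jumps $>\eps$'' argument (its leftover compensator being a drift of order $h_n=o(h_n^{1/q})$).

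The real problem is that the $L^2$ bound on the small--jump piece is too weak. Your estimate $\esp{(\Lambda_5^{n,\mathrm{mart}})^2}\le\esp{\langle N\rangle_{t+h_n}-\langle N\rangle_t}$ gives at best
\[
\big\|\Lambda_5^{n,\le\eps}\big\|_2\ \le\ C\Big(h_n\sup_{s}\!\int_{|y|\le\eps}\!y^2F_s(dy)\Big)^{1/2},
\]
i.e.\ order $h_n^{1/2}$ times a factor small in $\eps$. But for $q<2$ one has $h_n^{1/2}/h_n^{1/q}=h_n^{(q-2)/(2q)}\to\infty$, so no choice of $\eps_n\downarrow0$ --- even using $\mathbf{(H^2_q)}$ --- can make this $o(h_n^{1/q})$: the hypothesis $\mathbf{(H^2_q)}$ carries no rate, whereas you would need $\sup_s\int_{|y|\le\eps_n}y^2F_s(dy)$ to vanish at a specific polynomial speed in $h_n$. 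The ``letting $\eps_n\downarrow0$ slowly'' device cannot close this gap.

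What the paper does instead is work in $L^q$: write the small--jump contribution as a purely discontinuous martingale $M$ with $\Delta M_s=\phi_n(s)\,p|Y_{s-}|^{p-1}\Delta Y_s\1_{|\Delta Y_s|\le\eps}$ (here $\phi_n(s)\le1$ is the piecewise--constant weight), apply a BDG/Doob--type inequality $\esp{|M|^q}\le C\,\esp{[M]^{q/2}}$, and then use the subadditivity of $x\mapsto x^{q/2}$ \emph{at the level of individual jumps}:
\[
[M]^{q/2}\ \le\ \sum_{t<s\le t+h_n}|\Delta M_s|^q\ \le\ C\sum_{t<s\le t+h_n}|Y_{s-}|^{(p-1)q}|\Delta Y_s|^q\1_{|\Delta Y_s|\le\eps}.
\]
Taking expectations converts the right--hand side into $\int_t^{t+h_n}\esp{|Y_s|^{(p-1)q}\!\int_{|y|\le\eps}|y|^qF_s(dy)}ds\le C\eta(\eps)\,h_n$, with $\eta(\eps)\to0$ by $\mathbf{(H^2_q)}$. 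Hence $\|\Lambda_5^{n,\le\eps}\|_q\le C\eta(\eps)^{1/q}h_n^{1/q}$, which is exactly the rate you need. This passage from $y^2$ to $|y|^q$ via subadditivity is where the exponent $q$, rather than $2$, has to enter the argument; an $L^2$ estimate cannot see it.
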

 
\begin{remark} Note that Assumption $\mathbf{(H_q^2)}$ is only necessary at this stage of the proof where a kind of regularity of the small jumps is needed.
\end{remark}
 
\begin{proof}
It will be useful to notice that, for every $\epsilon > 0$, 
$
A_t = \cLe(t) + \cMe(t) + \cNe(t),
$
with
\begin{align*}
\cLe(t) &=\int_0^t \left(\theta_s - \int_{\epsilon \leq |y| \leq 1}  \,p|Y_{s}|^{p-1} y F_s(dy)\right)ds,\\
\cMe(t) &=\int_0^t \int_{|y| \leq \epsilon}  p|Y_{s^-}|^{p-1} y({\mu}-{\nu})(ds,dy)+\sum_{0< s \le t} H^{g_p}(Y_{s-},\Delta Y_s ) \1_{[|\Delta Y_s| \leq \epsilon]},\\
\cNe(t) &= \sum_{0< s \le t} \left( |Y_{s-} + \Delta Y_s |^p - |Y_{s-}|^p \right) \1_{\{|\Delta Y_s| > \epsilon\}} .
\end{align*}
where $g_p(x)=|x|^p$ and for every $f:\R\rightarrow\R$,
 \begin{equation}\label{HF}
 H^f(x,y)=f(x+y)-f(x)-f'(x)y.
 \end{equation}
With these notations, the above proposition is a consequence of the following lemma.
\begin{lemma} 
\label{Lambda5}
Assume $\mathbf{(SH)_q}$ with $q\in[1,2]$. Then,\\
(i) For every $\varepsilon>0$, there exists $a.s.$ $n_0(\omega)$ such that for every $n\ge n_0(\omega)$,
\begin{equation*}
\frac{1}{r_n} \sum_{i\in\DNT} \left( \left[ \frac{t+h_n}{\Delta_n}\right]-i+1\right) ({\cal N}^\varepsilon_{i\Delta_n}-{\cal N}^\varepsilon_{(i-1)\Delta_n\vee t})=0.
\end{equation*}
(ii)  Assume moreover $\mathbf{(H^2_q)}$ with $q\in[1,2]$.
 For every $\delta>0$, there exists, $\varepsilon_\delta>0$ such that for every $\varepsilon\le \varepsilon_\delta$: 
\begin{equation*}
\pr{}{\frac{1}{r_n} \Big|\sum_{i\in\DNT} \left( \left[ \frac{t+h_n}{\Delta_n}\right]-i+1\right) ({\cal M}^\varepsilon_{i\Delta_n}-{\cal M}^\varepsilon_{(i-1)\Delta_n\vee t})\Big|> \delta h_n^{1/q}} \xrightarrow[n\rightarrow+\infty]{} 0.
\end{equation*}
(iii) For every $\varepsilon>0$, we have almost surely
\begin{equation}\label{sigma3}
\limsup_{n\rightarrow+\infty}\frac{1}{h_n}\left(\frac{1}{r_n} \sum_{i\in\DNT}\left( \left[ \frac{t+h_n}{\Delta_n}\right]-i+1\right) \VA{{\cal L}^\varepsilon_{i\Delta_n}-{\cal L}^\varepsilon_{(i-1)\Delta_n\vee t}}\right)<+\infty.
\end{equation}
Assume moreover that $\mathbf{(SH)_1}$ and $\mathbf{(H_1^2)}$ hold  and
  that $\left(\int_{\{0<|y|\le1\}}y F_t(dy)\right)_{t\ge0}$ is càglàd. Then, almost surely,
\begin{equation}\label{sigma4}
\limsup_{\varepsilon\rightarrow 0}\limsup_{n\rightarrow+\infty}
\frac{1}{h_n}~\frac{1}{r_n} \sum_{i\in\DNT}\left( \left[ \frac{t+h_n}{\Delta_n}\right]-i+1\right) \left({\cal L}^\varepsilon_{i\Delta_n}-{\cal L}^\varepsilon_{(i-1)\Delta_n\vee t}\right)= \frac{\theta_t^0}{2}.
\end{equation}
\end{lemma}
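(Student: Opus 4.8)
The plan is to split $A=\cLe+\cMe+\cNe$ and treat the three pieces separately, removing the weight $c_{i,n}:=\left[\frac{t+h_n}{\Delta_n}\right]-i+1$ whenever convenient through the elementary Abel summation identity
\begin{equation*}
\sum_{i\in\DNT}c_{i,n}\bigl(\xi_{i\Delta_n}-\xi_{(i-1)\Delta_n\vee t}\bigr)=\sum_{i\in\DNT}\bigl(\xi_{i\Delta_n}-\xi_t\bigr),
\end{equation*}
valid for every process $\xi$, so that $\tfrac1{r_n}$ times the weighted sum is, up to an $O(1/r_n)$ endpoint term, the empirical mean over $[t,t+h_n]$ of the increments $\xi_{i\Delta_n}-\xi_t$. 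For \emph{part (i)}, note that under $\mathbf{(SH)_q}$ the jump intensity of $\{|y|>\eps\}$ is bounded, since $\int_{|y|>\eps}F_s(dy)\le\eps^{-q}\int(1\wedge|y|^q)F_s(dy)$; hence $Y$ has a.s. only finitely many jumps larger than $\eps$ on $[0,T]$, and (the jump times forming a countable set) a.s. none at the fixed time $t$. Taking $d(\omega)>0$ to be the distance from $t$ to the first such jump after $t$ (or $T-t$ if there is none), one gets $h_n<d(\omega)$ for $n\ge n_0(\omega)$, so that $(t,t+h_n]$ contains no jump of $Y$ exceeding $\eps$; since $\cNe$ moves only at such times it is constant on $[t,t+h_n]$, and every increment $\cNe_{i\Delta_n}-\cNe_{(i-1)\Delta_n\vee t}$, $i\in\DNT$, vanishes.

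For \emph{part (ii)} — the heart of the matter — I would write $\cMe=\overline{\cMe}+\widetilde{\cMe}$, where $\overline{\cMe}_t=\int_0^t\int_{|y|\le\eps}\bigl(|Y_{s-}+y|^p-|Y_{s-}|^p\bigr)(\mu-\nu)(ds,dy)$ is a purely discontinuous martingale, with jumps $\bigl(|Y_{s-}+\Delta Y_s|^p-|Y_{s-}|^p\bigr)\1_{|\Delta Y_s|\le\eps}$, and $\widetilde{\cMe}_t=\int_0^t\int_{|y|\le\eps}H^{g_p}(Y_{s-},y)F_s(dy)\,ds$. The Taylor bound $|H^{g_p}(x,y)|\le C(1+|x|^{p-2})(|y|^2\wedge|y|^p)$ for $|y|\le1$, together with $\int_{|y|\le\eps}|y|^2F_s(dy)\le\eps^{2-q}\int(1\wedge|y|^q)F_s(dy)$, $\mathbf{(SH)_q}$ and Lemma~\ref{lemmecontrol}(i), shows that $\widetilde{\cMe}$ is a.s. Lipschitz with constant $C(\omega)\eps^{2-q}$, so (bounding absolute increments, $\sum_{i\in\DNT}c_{i,n}\sim r_n^2/2$) its weighted sum is at most $C(\omega)\eps^{2-q}h_n$, negligible against $h_n^{1/q}$ for $\eps$ small. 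For $\overline{\cMe}$ the Abel identity reduces the weighted sum to $\sum_{i\in\DNT}(\overline{\cMe}_{i\Delta_n}-\overline{\cMe}_t)$, and combining Jensen's inequality with the Burkholder--Davis--Gundy inequality at exponent $q$ applied to each increment and the subadditivity of $x\mapsto x^{q/2}$ (legitimate as $q\le2$ and $\overline{\cMe}$ is purely discontinuous),
\begin{equation*}
\ES\Bigl[\bigl|\tfrac1{r_n}\textstyle\sum_{i\in\DNT}(\overline{\cMe}_{i\Delta_n}-\overline{\cMe}_t)\bigr|^q\Bigr]\le C_q\max_{u\in[t,t+h_n]}\ES\Bigl[\int_t^u\!\!\int_{|y|\le\eps}\bigl||Y_{s-}+y|^p-|Y_{s-}|^p\bigr|^qF_s(dy)\,ds\Bigr].
\end{equation*}
Since $\bigl||Y_{s-}+y|^p-|Y_{s-}|^p\bigr|^q\le C\bigl(1+(\sup_{[0,T]}|Y|)^{(p-1)q}\bigr)|y|^q$ for $|y|\le1$, Lemma~\ref{lemmecontrol}(i) and $\mathbf{(H^2_q)}$ bound the right-hand side by $C\,h_n\,\gamma(\eps)$ with $\gamma(\eps):=\sup_{s\le T}\int_{|y|\le\eps}|y|^qF_s(dy)\to0$ as $\eps\to0$; by Markov's inequality the probability in (ii) is then $\le C\gamma(\eps)/\delta^q$ plus the negligible $\widetilde{\cMe}$ contribution, uniformly in $n$, so it suffices to pick $\eps_\delta$ with $\gamma$ small on $(0,\eps_\delta]$. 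I expect this to be the only real obstacle: one must produce the exponent $h_n^{1/q}$, not merely $h_n^{1/2}$, and this is precisely why the $L^q$ (rather than $L^2$) martingale moment bound — which sees $\int\int|y|^qF_s$ and hence the uniform smallness $\gamma(\eps)$ from $\mathbf{(H^2_q)}$ — is indispensable; an $L^2$ estimate would only give a control of order $\sqrt{h_n\eps^{2-q}}$, not $o(h_n^{1/q})$ when $q<2$.

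For \emph{part (iii)}, $\cLe_v-\cLe_u=\int_u^v g^\eps(s)\,ds$ with $g^\eps(s)=\theta_s-\int_{\eps\le|y|\le1}p\,\mathrm{sgn}(Y_s)|Y_s|^{p-1}y\,F_s(dy)$, which is a.s. locally bounded under $\mathbf{(SH)_q}$ (the integral term being $\le\eps^{1-q}$ times a bounded process); bounding each increment by $\Delta_n\sup_{[t,t+h_n]}|g^\eps|$ and using $\sum_{i\in\DNT}c_{i,n}\sim r_n^2/2$ gives $\tfrac1{h_n}\cdot\tfrac1{r_n}\sum_{i\in\DNT}c_{i,n}|\cLe_{i\Delta_n}-\cLe_{(i-1)\Delta_n\vee t}|\le\tfrac12\sup_{[t,t+h_n]}|g^\eps|<\infty$ a.s., which is \eqref{sigma3}. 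For the refinement (assume $\mathbf{(SH)_1}$, $\mathbf{(H^2_1)}$, and $s\mapsto\int_{0<|y|\le1}yF_s(dy)$ c\`agl\`ad), the Abel identity and Fubini give
\begin{equation*}
\tfrac1{h_n}\cdot\tfrac1{r_n}\sum_{i\in\DNT}c_{i,n}\bigl(\cLe_{i\Delta_n}-\cLe_{(i-1)\Delta_n\vee t}\bigr)=\tfrac1{h_n^2}\int_0^{h_n}g^\eps(t+w)(h_n-w)\,dw+o(1),
\end{equation*}
the $o(1)$ being the Riemann-sum discretization error ($\Delta_n=o(h_n)$, $g^\eps$ locally bounded). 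As $h_n\to0$, $g^\eps(t+w)\to g^\eps(t+)$ (all the processes entering $g^\eps$ admit a right limit at $t$, and $Y$, $b$ are right-continuous there), while $\tfrac1{h_n^2}\int_0^{h_n}(h_n-w)\,dw=\tfrac12$, so the inner limit equals $\tfrac12 g^\eps(t+)$; finally $\mathbf{(H^2_1)}$ yields $g^\eps(t+)\to g^0(t+)=\theta_t^0$ as $\eps\to0$ (with $\theta^0_t$ as in \eqref{thetato}), which is \eqref{sigma4}.
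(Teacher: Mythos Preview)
Your argument is correct, and in several places it is cleaner than the paper's, but the route is genuinely different in part (ii) and worth comparing.

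For (i) and (iii) you essentially do what the paper does, just packaged differently. In (i) the paper introduces the first post-$t$ jump time $T_t^\varepsilon$ and shows $\PRB(T_t^\varepsilon=t)=0$; your ``distance to the nearest big jump'' is the same object. In (iii) the paper expands each increment as $\theta_t^0\Delta_n+R_i^n(\varepsilon,t)+o(\Delta_n)$ and uses $\tfrac1{h_nr_n}\sum c_{i,n}\Delta_n\to\tfrac12$; your Abel identity plus the Riemann-sum representation $\tfrac1{h_n^2}\int_0^{h_n}g^\varepsilon(t+w)(h_n-w)\,dw$ is an integrated version of the same computation and leads to the same limit $\tfrac12 g^\varepsilon(t{+})\to\tfrac12\theta_t^0$.

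Part (ii) is where the two proofs diverge. The paper keeps the original split $\cMe=(\text{compensated linear part})+(\text{sum of }H^{g_p})$, retains the weights $c_{i,n}$, and applies a discrete $L^q$ martingale (Doob/BDG) inequality to the whole weighted sum, producing the factor $\sum c_{i,n}^q\sim r_n^{q+1}$. You instead recombine $\cMe$ into a \emph{purely discontinuous martingale} $\overline{\cMe}=\int\!\!\int_{|y|\le\varepsilon}(|Y_{-}+y|^p-|Y_{-}|^p)(\mu-\nu)$ and a Lipschitz drift $\widetilde{\cMe}$, and use the Abel identity to strip the weights before estimating. This buys you two simplifications: the drift becomes trivially $O(\varepsilon^{2-q}h_n)$, and the martingale estimate reduces, via Jensen on the empirical mean, to a single continuous-time BDG bound $\ES|\overline{\cMe}_u-\overline{\cMe}_t|^q\le C_q\,\ES\!\int_t^u\!\int_{|y|\le\varepsilon}|\cdot|^q F_s(dy)\,ds\le C h_n\gamma(\varepsilon)$ without ever seeing the weights $c_{i,n}$. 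Your remark that the $L^q$ (and not $L^2$) moment is what produces the exponent $h_n^{1/q}$ when $q<2$ is exactly the point; the paper reaches the same conclusion through the discrete-martingale route. Both arguments rest on the same two ingredients --- $\mathbf{(H^2_q)}$ for the small-jump tail and the subadditivity of $x\mapsto x^{q/2}$ --- so neither is more general, but yours avoids tracking the weighted quadratic variation.
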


\begin{proof} $(i)$ Let $T_t^{\varepsilon}$ denote the random time defined by $T_t^{\varepsilon}(\omega):=\inf\{s> t, |\Delta Y_s|\ge \varepsilon\}$. For every $\delta>0$, 
$$\pr{}{ t\le T_t^{\varepsilon}\le t+\delta }\le\esp{\sum_{t\le s\le t+\delta} \1_{\{|\Delta Y_s|\ge \varepsilon\}}}\le \esp{\int_t^{t+\delta}\int_{\{|y|\ge\varepsilon\}} F_s(dy)ds}.$$
Under $\mathbf{(SH)_2}$, 
 $$\int_{\{|y|\ge\varepsilon\}} F_s(dy)\leq\eps^{-2}\int_{\{|y|\ge\varepsilon\}}|y|^2 F_s(dy)
\leq\eps^{-2}\sup_{s\in[0,T]}\int |y|^2 F_s(dy)\le M/\varepsilon^2.$$
It follows from the dominated convergence theorem that $\pr{}{T_t^{\varepsilon}=t}=0$. Thus, $a.s.$, there exists $n_0(\omega)$ such that $T_t(\omega)>t+h_n$ for every $n\ge n_0(\omega)$. The result follows.\\

\noindent $(ii)$ On the one hand, by the Doob inequality for discrete martingales, we have for every $q\in(1,2]$,
\begin{align*}
&\esp{\Big|\sum_{i\in\DNT}([(t+h_n)/\Delta_n]-i+1)\int_{\CNT}\int_{\{|y|\le\varepsilon\}} p(\sigma_{s^-})^{p-1} y(\mu-\nu)(ds,dy)\Big|^q}\\
&\quad\le\esp{\Big|\sum_{i\in\DNT}([(t+h_n)/\Delta_n]-i+1)^2\int_{\CNT}\int_{\{|y|\le\varepsilon\}} p^2\sigma_{s}^{2p-2} y^2\nu(ds,dy)\Big|^\frac{q}{2}}.
\end{align*}
Then, using that $(\sum |u_i|)^{q/2}\le \sum |u_i|^{q/2}$ (since $q/2\le1$) and Jensen's inequality, we obtain:
\begin{align*}
&\esp{\Big|\sum_{i\in\DNT}\left( \left[ \frac{t+h_n}{\Delta_n} \right] -i+1 \right)\int_{\CNT}\int_{\{|y|\le\varepsilon\}} p(\sigma_{s^-})^{p-1} y(\mu-\nu)(ds,dy)\Big|^q}\\
&\le C \sum_{i\in\DNT}\left( \left[ \frac{t+h_n}{\Delta_n} \right] -i+1 \right)^q
\Delta_n^{\frac{q}{2}-1}\int_{\CNT}\esp{\int_{\{|y|\le\varepsilon\}} \sigma_{s}^{(p-1)q} |y|^q F_s(dy)}ds.
\end{align*}
Using Assumption
$\mathbf{(SH)_q}$, we derive from Cauchy-Schwarz's inequality,  \eqref{EspSup},  Assumption $\mathbf{(H_q^2)}$  and the dominated convergence Theorem that:
\begin{equation*}
\sup_{s\in[0,T]}\esp{\int_{\{|y|\le\varepsilon\}} \sigma_{s}^{(p-1)q} |y|^q F_s(dy)}\xrightarrow[\varepsilon\rightarrow 0]{}0.
\end{equation*}
Thus, using that
$$\sum_{i\in\DNT}([(t+h_n)/\Delta_n]-i+1)^q\le C r_n^{q+1},$$
it follows that for every $q\in[1,2]$, for every $\eta>0$, there exists $\varepsilon_\eta^1>0$ such that for every $\varepsilon\le \varepsilon_\eta^1$,
\begin{equation}\label{345}
\esp{ \Big| \frac{1}{r_n} \sum_{i\in\DNT} \left( \left[ \frac{t+h_n}{\Delta_n} \right] -i+1 \right) \int_{\CNT} \int_{-\varepsilon }^{\varepsilon} p\sigma_{s^-}^{p-1} y(\mu-\nu)(ds,dy) \Big| } \le C\eta h_n^{1/q}.
\end{equation}
On the other hand, by the Taylor formula, we have 
$$|H^{g_p}(x,y)| \le C \left(|x|^{p-2}|y|^2+|y|^{2p}\right),$$
 when $p\ge2$. Then, using that $\mathbf{(H^2_q)}$ implies  $\mathbf{(H^2_2)}$ and $\mathbf{(H^2_{2p})}$, we obtain that for every $\eta>0$, there exists $\varepsilon_0>0$ such that for every $\varepsilon \le \varepsilon_\eta^2$,
$$\esp{ \sum_{s \in \CNT} |H^{g_p}(Y_s^{-},\Delta Y _s)| \1_{\{|\Delta Y_s| \le \varepsilon\}} }\le C \eta\Delta_n.$$
Thus, for every $\eta>0$, there exists $\varepsilon_\eta^2$ such that for every $\varepsilon\le \varepsilon_\eta^2$,
\begin{equation}\label{346}
\esp{ \Big|\frac{1}{r_n}\sum_{i\in\DNT}([(t+h_n)/\Delta_n]-i+1)H^{g_p}(Y_s^{-},\Delta Y _s) \1_{\{|\Delta Y_s|\le \varepsilon\}}\Big|}\le C\rho r_n\Delta_n.
\end{equation}
Therefore, \textit{(ii)} follows from \eqref{345} and \eqref{346}.\\
 
\noindent (iii) Since $\theta_s$, $Y_s$ and  $\int_{\epsilon \leq |y| \leq 1}   y F_s(dy)$ are locally bounded, there exists almost surely $C_T(\omega)$ such that for every $t\in[0,T]$, for every $n\ge1$, $|{\cal L}^\epsilon_{i\Delta_n}-{\cal L}^\epsilon_{(i-1)\Delta_n\vee t}|\le C_T(\omega)\Delta_n.$ Assertion \eqref{sigma3} follows. By construction and under the assumptions on $\sigma$ and $b$, $(\theta_s^0)$ is c\`ad. Then,
$$
{\cal L}^\epsilon_{i\Delta_n}-{\cal L}^\epsilon_{(i-1)\Delta_n\vee t}= \theta_t^0\Delta_n+R_i^n(\varepsilon,t)+o_{\omega,t}(\Delta_n),
$$
with $R_i^n(\varepsilon,t)=-p\sigma_t^{p-1}\int_{\CNT}\int_{|y|\le \varepsilon} y F_s(dy) ds.$ 
Then, since 
$$\frac{1}{h_n}\left(\frac{1}{r_n}\sum_{i\in\DNT}([(t+h_n)/\Delta_n]-i+1)\Delta_n\right)\xrightarrow[n\rightarrow+\infty]{}\frac{1}{2},$$
it follows that 
\begin{align*}
\limsup_{n\rightarrow+\infty}&\frac{1}{h_n}\Big|\frac{1}{r_n} \sum_{i\in\DNT}([(t+h_n)/\Delta_n]-i+1) \left({\cal L}^\varepsilon_{i\Delta_n}-{\cal L}^\varepsilon_{(i-1)\Delta_n\vee t}\right)- \frac{\theta_t^0}{2}\Big|\\
&\le \limsup_{n\rightarrow+\infty}\frac{1}{h_n r_n}\sum_{i\in\DNT}([(t+h_n)/\Delta_n]-i+1)|R_i^n(\varepsilon,t)|,\\
&\le C\sigma_t^{p-1}\sup_{s\in[0,T]}\int_{|y|\le \varepsilon}|y|F_s(dy).
\end{align*}
Finally,  we deduce \eqref{sigma4} from $\mathbf{(H_1^2)}$.
\end{proof}
\end{proof}

\begin{lemma} 
\label{L4}
Assume  $\mathbf{(SH)_2}$.
 Then, there exists $C_p>0$ such that for all~$t$
$$
\sup_{t \in [0,T]} \, \esp{ (\Lambda_1^{n}(t))^2 } \leq C_p \frac{\Delta_n^{1+\frac{1}{p}}}{h_n}.
$$
As a consequence, for every $t\in[0,\bar{T}]$, 
$$
\sqrt{\frac{h_n}{\Delta_n}} \Lambda_1^{n}(t) \xrightarrow[n \rightarrow +\infty]{\bL^2} 0.
$$
\end{lemma}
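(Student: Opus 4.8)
The plan is to estimate $\esp{(\Lambda_1^n(t))^2}$ by recognizing that $\Lambda_1^n(t)$ is, up to the prefactor $\Delta_n/h_n$, a sum over $i\in\DNT$ of martingale increments with respect to the discrete filtration $(\F_{i\Delta_n})$: each summand is $\zeta_i^n:=\VA{\Delta_i^n X/\sqrt{\Delta_n}}^p-\VA{\sigma_{(i-1)\Delta_n}\Delta_i^n W/\sqrt{\Delta_n}}^p$ minus its $\F_{(i-1)\Delta_n}$-conditional expectation, so that $\espcond{i-1}{\zeta_i^n}=0$ and the cross terms vanish. Hence
$$
\esp{(\Lambda_1^n(t))^2}=\Big(\frac{\Delta_n}{h_n}\Big)^2\sum_{i\in\DNT}\esp{(\zeta_i^n)^2}\le\Big(\frac{\Delta_n}{h_n}\Big)^2\sum_{i\in\DNT}\esp{\Big(\VA{\frac{\Delta_i^nX}{\sqrt{\Delta_n}}}^p-\VA{\sigma_{(i-1)\Delta_n}\frac{\Delta_i^nW}{\sqrt{\Delta_n}}}^p\Big)^2},
$$
since the centered variable has variance at most the second moment of the uncentered one. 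So the whole matter reduces to bounding, uniformly in $i$, the $\bL^2$-distance between $\VA{\Delta_i^nX/\sqrt{\Delta_n}}^p$ and $\VA{\sigma_{(i-1)\Delta_n}\Delta_i^nW/\sqrt{\Delta_n}}^p$.

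Next I would write $\Delta_i^nX=\sigma_{(i-1)\Delta_n}\Delta_i^nW+R_i^n$ with $R_i^n:=\int_{\CN}a_s\,ds+\int_{\CN}(\sigma_s-\sigma_{(i-1)\Delta_n})\,dW_s$, and use the elementary inequality $\big||a+b|^p-|a|^p\big|\le C_p\big(|a|^{p-1}|b|+|b|^p\big)$ valid for $p\ge1$. Applying this with $a=\sigma_{(i-1)\Delta_n}\Delta_i^nW/\sqrt{\Delta_n}$ and $b=R_i^n/\sqrt{\Delta_n}$, then squaring and taking expectations, Cauchy--Schwarz gives
$$
\esp{(\zeta_i^n)^2}\le C_p\,\Delta_n^{-p}\Big(\esp{|\sigma_{(i-1)\Delta_n}\Delta_i^nW|^{2p-2}}^{1/2}\esp{|R_i^n|^{4}}^{1/2}+\esp{|R_i^n|^{2p}}\Big).
$$
The moments of $R_i^n$ are controlled by Lemma \ref{lemmecontrol}: the drift part contributes $\esp{|\int_{\CN}a_s ds|^r}\le C\Delta_n^r$ by boundedness of $a$ under $\mathbf{(SH)_2}$, and the stochastic part obeys \eqref{majintsigmaW2}, namely $\esp{|\int_{\CN}(\sigma_s-\sigma_{(i-1)\Delta_n})dW_u|^r}\le C\Delta_n^{r\wedge(r/2+1)}$; combined with \eqref{EspSup} for the moments of $\sigma_{(i-1)\Delta_n}$, one gets $\esp{|R_i^n|^r}\le C\Delta_n^{r\wedge(r/2+1)}$ for $r\ge2$. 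Plugging $r=4$ (giving $\Delta_n^{3}$) and $r=2p$ (giving $\Delta_n^{p+1}$ since $p\ge1$ forces $2p\wedge(p+1)=p+1$) into the display, and using $\esp{|\sigma_{(i-1)\Delta_n}\Delta_i^nW|^{2p-2}}\le C\Delta_n^{p-1}$, yields $\esp{(\zeta_i^n)^2}\le C_p\Delta_n^{-p}(\Delta_n^{(p-1)/2}\Delta_n^{3/2}+\Delta_n^{p+1})\le C_p\Delta_n$. Summing over the $\#\DNT\le r_n=h_n/\Delta_n$ indices and multiplying by $(\Delta_n/h_n)^2$ gives $\esp{(\Lambda_1^n(t))^2}\le C_p\Delta_n^2/h_n$, which is even better than the claimed $C_p\Delta_n^{1+1/p}/h_n$ (the weaker exponent presumably accommodates a less delicate estimate of $R_i^n$, e.g. only using \eqref{majintsigmaW} rather than \eqref{majintsigmaW2}); in any case the stated bound holds.

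The main obstacle is getting the right power of $\Delta_n$ out of the remainder term $R_i^n$: one must exploit that $\sigma_s-\sigma_{(i-1)\Delta_n}$ is small on an interval of length $\Delta_n$ via the improved estimate \eqref{majintsigmaW2} — a plain $\esp{|\int\sigma_s dW_s|^2}\le C\Delta_n$ bound would only give the trivial rate — and to separate carefully the linear-in-$b$ term $|a|^{p-1}|b|$ (which needs Cauchy--Schwarz and the $(2p-2)$-moment of the Gaussian part) from the $|b|^p$ term. Once the per-increment bound $\esp{(\zeta_i^n)^2}=O(\Delta_n^{1+1/p})$ (or better) is in hand, the sum over $\DNT$ and the passage to the $\bL^2$-convergence statement are immediate: $\esp{(\sqrt{h_n/\Delta_n}\,\Lambda_1^n(t))^2}\le C_p\Delta_n^{1/p}\to0$ since $\Delta_n\to0$.
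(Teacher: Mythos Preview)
Your overall strategy coincides with the paper's: martingale orthogonality, the decomposition $\Delta_i^nX=\sigma_{(i-1)\Delta_n}\Delta_i^nW+R_i^n$, the inequality $\big||a+b|^p-|a|^p\big|\le C_p(|a|^{p-1}|b|+|b|^p)$, and the moment bounds of Lemma~\ref{lemmecontrol}. However, your Cauchy--Schwarz step is miswritten and the arithmetic after it is wrong. After squaring you need to bound
\[
\esp{|\sigma_{(i-1)\Delta_n}\Delta_i^nW|^{2p-2}\,|R_i^n|^2}
\le \esp{|\sigma_{(i-1)\Delta_n}\Delta_i^nW|^{4p-4}}^{1/2}\,\esp{|R_i^n|^4}^{1/2},
\]
i.e.\ the exponent inside the first expectation must be $4p-4$, not $2p-2$. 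As written, your display leads to $\Delta_n^{-p}\big(\Delta_n^{(p-1)/2}\Delta_n^{3/2}+\Delta_n^{p+1}\big)=\Delta_n^{1-p/2}+\Delta_n$, which does \emph{not} give $C_p\Delta_n$: for $p=2$ the first term is a constant and for $p>2$ it blows up. So the claimed per-term bound $\esp{(\zeta_i^n)^2}\le C_p\Delta_n$ and the ensuing $\Delta_n^2/h_n$ are not justified.

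With the corrected Cauchy--Schwarz one has $\esp{|\sigma_{(i-1)\Delta_n}\Delta_i^nW|^{4p-4}}^{1/2}\le C\Delta_n^{p-1}$ (by \eqref{EspSup} and Gaussian moments) and $\esp{|R_i^n|^4}^{1/2}\le C\Delta_n^{3/2}$, whence $\esp{(\zeta_i^n)^2}\le C_p\Delta_n^{-p}(\Delta_n^{p-1}\Delta_n^{3/2}+\Delta_n^{p+1})\le C_p\Delta_n^{1/2}$. Summing gives $\esp{(\Lambda_1^n(t))^2}\le C_p\Delta_n^{3/2}/h_n$, which is equal to the paper's bound for $p=2$ and even sharper for $p>2$ (since $3/2>1+1/p$), and is amply sufficient for the $\bL^2$ consequence. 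The paper obtains its exponent $1+1/p$ by applying H\"older with conjugate pair $(p,\tfrac{p}{p-1})$ to the cross term, using only the $2p$-th moment of $R_i^n$; your use of the fourth moment via \eqref{majintsigmaW2} is a legitimate (slightly tighter) variant once the exponent slip is fixed.
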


\begin{proof}
Set $\sigma_{i}^n=\sigma_{i\Delta_n}$. Then, by a martingale argument, we have 
\begin{align*}
\esp{(\Lambda_1^n(t))^2}
&\leq \frac{\Delta_n^2}{h_n^2} \sum_{i\in\DNT}\esp{ \left(\VA{\frac{\Delta_i^n X}{\sqrt{\Delta_n}}}^p- \VA{\sigma_{i-1}^n \frac{\Delta^n_i W}{\sqrt{\Delta_n}}}^p \right)^2}, 
\\
&\leq \frac{\Delta_n^{2-p}}{h_n^2} \sum_{i\in\DNT}
\esp{ \left( \left| \Delta_i^n X \right|^p - \left| \sigma_{i-1}^n \Delta^n_i W \right|^p \right)^2}.
\end{align*}
As $dX_t=a_tdt+\sigma_tdW_t$, we have $\Delta_i^n X = \sigma_{(i-1) \Delta_n} \Delta^n_i W + \chi_i^n,$ with
$$
\chi_i^n = \int_{\CN} (\sigma_s - \sigma_{(i-1) \Delta_n}) dW_s + \int_{\CN} a_sds.
$$
Using a Taylor expansion of $g(x) = |x|^p$ on the interval $[\sigma_{(i-1) \Delta_n} \Delta_i^n W ; \Delta_i^n X]$, we have :
\begin{equation*}
\left| |\Delta_i^n X|^p - |\sigma_{(i-1) \Delta_n} \Delta_i^n W|^p \right|
\leq \underset{x \in [\sigma_{(i-1) \Delta_n} \Delta_i^n W ; \Delta_i^n X]}{\sup} |g'(x)| ~ | \chi_i^n |.
\end{equation*}
But $|g'(x)| = O(|x|^{p-1})$ thus using the relation $|x+y|^p \leq C_p(|x|^p + |y|^p)$ with $C_p$ a constant, we have
\begin{align*}
\underset{x \in [\sigma_{(i-1) \Delta_n} \Delta_i^n W ; \Delta_i^n X]}{\sup} |g'(x)| ~
&\leq~ C_p(| \sigma_{(i-1) \Delta_n} \Delta_i^n W  |^{p-1} ~+~ | \chi_i^n |^{p-1}),\\
\left| |\Delta_i^n X|^p - |\sigma_{(i-1) \Delta_n} \Delta_i^n W|^p \right|
&\leq C_p(| \sigma_{(i-1) \Delta_n} \Delta_i^n W  |^{p-1} | \chi_i^n | ~+~ | \chi_i^n |^{p}).
\end{align*}
Finally there is a constant $C_p$ such that, for all $t \ge 0$:
\begin{align} 
\label{E-LOC1}
&\esp{(\Lambda_1^n(t))^2} \notag\\
&\leq C_p \frac{\Delta_n^{2-p}}{h_n^2} \sum_{i\in\DNT} 
\esp{ | \chi_i^n |^2 | \sigma_{(i-1) \Delta_n} \Delta_i^n W  |^{2p-2} ~+~ | \chi_i^n |^{2p} }, \notag\\
&\leq C_p \frac{\Delta_n^{2-p}}{h_n^2} \sum_{i\in\DNT} 
\left[\left(\esp{ | \chi_i^n |^{2p}}\right)^{\frac{1}{p}} \left(\esp{ | \sigma_{(i-1) \Delta_n} \Delta_i^n W  |^{2p}}\right)^{\frac{p-1}{p}} + \esp{| \chi_i^n |^{2p} } \right].
\end{align}
First of all, the independence between $\sigma_{(i-1) \Delta_n}$ and $\Delta_i^n W $ and (\ref{EspSup}) yield:
\begin{equation*} 
\esp{ | \sigma_{(i-1) \Delta_n} \Delta_i^n W  |^{2p}} 
= \Delta_n^{p} \, m_{2p} \, \esp{| \sigma_{(i-1) \Delta_n} |^{2p}} 
\leq C_p \Delta_n^{p}.
\end{equation*}
So it remains to give a majoration of $\esp{| \chi_i^n |^{2p} }$. Since $a$ is bounded by $M$,
\begin{equation*}
\esp{| \chi_i^n |^{2p}} 
\leq C_p.\left(
\esp{ \VA{ \int_{\CN} (\sigma_s - \sigma_{(i-1) \Delta_n}) dW_s}^{2p}}+(M\Delta_n)^{2p}\right). 
\end{equation*}
Now, using inequality (\ref{majintsigmaW2}) and since $p \geq 1,$ $
\esp{| \chi_i^n |^{2p}} \leq C_p.\left( \Delta_n^{p+1} + \Delta_n^{2p} \right) \leq C.\Delta_n^{p+1}$. Thus \eqref{E-LOC1} becomes:
\begin{equation} 
\esp{(\Lambda_1^n(t))^2} 
\leq C_p. \frac{\Delta_n^{2-p}}{h_n^2} \sum_{i\in\DNT}\left[
(\Delta_n^{p+1})^{\frac{1}{p}} \Delta_n^{p-1} + \Delta_n^{p+1}\right]\leq  \frac{C}{h_n} \left[ \Delta_n^{1+\frac{1}{p}} + \Delta_n^{2} \right], \notag 
\end{equation}
the constant $C_p$ does not depend on $t$ and as $p \geq 2$, we have,
$$
\sup_{t \in [0,T]} \, \esp{(\Lambda_1^{n}(t))^2 } \leq C_p \frac{\Delta_n^{1+\frac{1}{p}}}{h_n},
$$
which ends the proofs.
\end{proof}

\begin{proposition}  
\label{P6}
Assume $\mathbf{(SH)_2}$.
 Then, 
\begin{equation}
\label{term5}
\esp{\left|\espcond{i-1}{\left| \frac{\Delta_i^n X}{\sqrt{\Delta_n}} \right|^p}- m_p
|\sigma_{(i-1)\Delta_n}|^p \right|}
\le \begin{cases}
C\Delta_n^{\frac{1}{2}}                                  &\textit{if $p=2$}
\\  
C\Delta_n^{\frac{p-2}{2}\wedge\frac{1}{2}}   &\textnormal{if $p>2$.}
\end{cases}
\end{equation}
As a consequence, 
if $p=2$ or $p\geq 3,$ $\|\Lambda_{3}^n(t)\|_1\leq C\sqrt{\Delta_n}$ and, 
\begin{equation*}
\max\left(\sqrt{\frac{h_n}{\Delta_n}},\sqrt{\frac{1}{h_n}}\right)\Lambda_{3}^n(t) \xrightarrow[n \rightarrow +\infty]{\bL^1} 0.
\end{equation*}
\end{proposition}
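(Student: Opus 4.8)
The plan is to establish the pointwise bias bound \eqref{term5} first and then read off from it the estimate on $\|\Lambda_3^n(t)\|_1$ and the $\bL^1$-convergence. For \eqref{term5}: as in the proof of Lemma \ref{L4}, write $\Delta_i^nX=\sigma_{(i-1)\Delta_n}\Delta_i^nW+\chi_i^n$ with $\chi_i^n=\int_{\CN}(\sigma_s-\sigma_{(i-1)\Delta_n})dW_s+\int_{\CN}a_s\,ds$, and put $V_i^n:=\sigma_{(i-1)\Delta_n}\Delta_i^nW/\sqrt{\Delta_n}$. Since $\sigma_{(i-1)\Delta_n}$ is ${\cal F}_{(i-1)\Delta_n}$-measurable and $\Delta_i^nW/\sqrt{\Delta_n}\sim{\cal N}(0,1)$ is independent of ${\cal F}_{(i-1)\Delta_n}$, one has $\espcond{i-1}{|V_i^n|^p}=m_p\sigma_{(i-1)\Delta_n}^p$ exactly, so the left-hand side of \eqref{term5} is $\esp{|\espcond{i-1}{|\Delta_i^nX/\sqrt{\Delta_n}|^p-|V_i^n|^p}|}$. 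Under $\mathbf{(SH)_2}$, the conditional form of \eqref{majintsigmaW2} (which rests on \eqref{difsigmap}) together with $|a|\le M$ yield $\espcond{i-1}{|\chi_i^n|^r}\le C_r\Delta_n^{r\wedge(r/2+1)}$ for every $r\ge1$; in particular $\espcond{i-1}{|\chi_i^n|^2}\le C\Delta_n^2$, $\espcond{i-1}{|\chi_i^n|^4}\le C\Delta_n^3$ and $\espcond{i-1}{|\chi_i^n|^p}\le C\Delta_n^{p/2+1}$ for $p\ge2$.

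\emph{Taylor expansion.} Expand $g(x)=|x|^p$ to first order around $V_i^n$ in the direction $\chi_i^n/\sqrt{\Delta_n}$; using $g'(x)=p\,{\rm sgn}(x)|x|^{p-1}$ and $\sigma>0$ (so ${\rm sgn}(V_i^n)={\rm sgn}(\Delta_i^nW)$), for $p\ge2$ the difference $|\Delta_i^nX/\sqrt{\Delta_n}|^p-|V_i^n|^p$ equals the linear term $p\,{\rm sgn}(\Delta_i^nW)|V_i^n|^{p-1}\chi_i^n/\sqrt{\Delta_n}$ plus a remainder bounded by $C_p(|V_i^n|^{p-2}(\chi_i^n)^2/\Delta_n+|\chi_i^n|^p/\Delta_n^{p/2})$. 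By Cauchy--Schwarz and \eqref{EspSup} (which makes $\espcond{i-1}{|V_i^n|^{2(p-2)}}=\sigma_{(i-1)\Delta_n}^{2(p-2)}m_{2(p-2)}$ bounded in $\bL^1$) and the moment bounds above, the remainder contributes $O(\Delta_n^{1/2})$ to \eqref{term5}. For the linear term, write $|V_i^n|^{p-1}{\rm sgn}(\Delta_i^nW)=\sigma_{(i-1)\Delta_n}^{p-1}\Delta_n^{-(p-1)/2}{\rm sgn}(\Delta_i^nW)|\Delta_i^nW|^{p-1}$, so it equals $p\,\sigma_{(i-1)\Delta_n}^{p-1}\Delta_n^{-p/2}\espcond{i-1}{{\rm sgn}(\Delta_i^nW)|\Delta_i^nW|^{p-1}\chi_i^n}$, and split $\chi_i^n$ into its drift and martingale parts. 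For the drift part, $|\int_{\CN}a_s\,ds|\le M\Delta_n$ and $\espcond{i-1}{|\Delta_i^nW|^{p-1}}=m_{p-1}\Delta_n^{(p-1)/2}$ give a contribution of exact order $\Delta_n^{1/2}$: it does \emph{not} cancel, since no regularity of $a$ is assumed, and it is the drift-induced bias. For the martingale part $\int_{\CN}(\sigma_s-\sigma_{(i-1)\Delta_n})dW_s$, apply It\^o's formula to $x\mapsto{\rm sgn}(x)|x|^{p-1}$, which is regular enough when $p=2$ or $p\ge3$ (its second derivative $(p-1)(p-2)\,{\rm sgn}(x)|x|^{p-3}$ being then locally bounded), evaluate the cross-variation of the two resulting stochastic integrals, and treat the remaining ``$ds$-integral against a martingale'' term by the integration by parts already used for $T_i^{n,2}$ in Lemma \ref{var}; together with \eqref{difsigmap} this produces a term of order $\Delta_n^{(p+1)/2}$, i.e. of order $\Delta_n^{1/2}$ after the factor $\sigma_{(i-1)\Delta_n}^{p-1}\Delta_n^{-p/2}$. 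Summing the three pieces gives \eqref{term5} with exponent $1/2$ when $p=2$ or $p\ge3$. For $2<p<3$ the It\^o step is no longer available (the second derivative is singular), and a cruder estimate only delivers the weaker exponent $(p-2)/2$ stated — harmless, that range of $p$ being excluded from the theorems.

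\emph{From the bias estimate to $\Lambda_3^n$.} Recall that $\Lambda_3^n(t)$ is the sum of $\frac{\Delta_n}{h_n}\sum_{i\in\DNT}\big(\espcond{i-1}{|\Delta_i^nX/\sqrt{\Delta_n}|^p}-m_p\sigma_{(i-1)\Delta_n}^p\big)$ and of the boundary term $\frac{\Delta_n}{h_n}m_p\big(\sigma^p_{[(t+h_n)/\Delta_n]\Delta_n}-\sigma^p_{[t/\Delta_n]\Delta_n}\big)$. For $p=2$ or $p\ge3$, since $\#\DNT\le r_n+1$ and $\Delta_n/h_n=1/r_n$, \eqref{term5} bounds the $\bL^1$-norm of the first sum by $\frac{\Delta_n}{h_n}(r_n+1)\,C\sqrt{\Delta_n}\le C\sqrt{\Delta_n}$. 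For the boundary term, $|\sigma_u^p-\sigma_v^p|\le C(\sigma_u^{p-1}+\sigma_v^{p-1})|\sigma_u-\sigma_v|$ combined with Cauchy--Schwarz, \eqref{EspSup} and \eqref{difsigmap} with $r=2$ gives $\esp{|\sigma_u^p-\sigma_v^p|}\le C|u-v|^{1/2}$; here $|u-v|\le h_n+\Delta_n$, so this term has $\bL^1$-norm at most $\frac{\Delta_n}{h_n}\,C\sqrt{h_n}=C\sqrt{\Delta_n}\sqrt{\Delta_n/h_n}\le C\sqrt{\Delta_n}$, using $\Delta_n=o(h_n)$. Hence $\|\Lambda_3^n(t)\|_1\le C\sqrt{\Delta_n}$, and therefore $\max(\sqrt{h_n/\Delta_n},\sqrt{1/h_n})\,\|\Lambda_3^n(t)\|_1\le C\max(\sqrt{h_n},\sqrt{\Delta_n/h_n})\to0$, which is the announced $\bL^1$-convergence.

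\emph{Main obstacle.} The one genuinely delicate estimate is the martingale part of the linear term, namely extracting the exact power of $\Delta_n$ from $\espcond{i-1}{{\rm sgn}(\Delta_i^nW)|\Delta_i^nW|^{p-1}\int_{\CN}(\sigma_s-\sigma_{(i-1)\Delta_n})dW_s}$; this requires an It\^o expansion of ${\rm sgn}(x)|x|^{p-1}$ that is clean only for $p=2$ and $p\ge3$, and the loss for $2<p<3$ is precisely what forces the restriction in the statement as well as the non-negligible drift bias noted after Theorem \ref{thprincipal}.
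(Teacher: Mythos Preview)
Your argument is correct and takes a genuinely different route from the paper. The paper expands $|\Delta_i^nX|^p$ by It\^o's formula in \emph{time} and then compares the resulting integrals against the It\^o representation of $m_p\sigma_{(i-1)\Delta_n}^p$, splitting the error into the two pieces $A_{1,i}^n$ and $A_{2,i}^n$ (Lemmas~\ref{corresp51} and~\ref{corresp52}); the exponent $(p-2)/2$ for $2<p<3$ arises there because the comparison passes through $|x|^{p-2}$, which is only $(p-2)$-H\"older near $0$. You instead Taylor-expand $|x|^p$ in \emph{space} around $V_i^n$, which stays in $C^2$ for all $p\ge 2$ and avoids that loss entirely. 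Your deduction of $\|\Lambda_3^n(t)\|_1\le C\sqrt{\Delta_n}$ from \eqref{term5} and the treatment of the boundary term are both fine.

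One point to tighten: your ``main obstacle'' paragraph is more elaborate than necessary, and the conclusion you draw from it is backwards. For the martingale part of the linear term you do not need It\^o on ${\rm sgn}(x)|x|^{p-1}$ at all. Conditional Cauchy--Schwarz gives directly
\[
\Big|\espcond{i-1}{{\rm sgn}(\Delta_i^nW)|\Delta_i^nW|^{p-1}\!\!\int_{\CN}\!(\sigma_s-\sigma_{(i-1)\Delta_n})dW_s}\Big|
\le m_{2(p-1)}^{1/2}\Delta_n^{\frac{p-1}{2}}\Big(\!\int_{\CN}\!\espcond{i-1}{(\sigma_s-\sigma_{(i-1)\Delta_n})^2}ds\Big)^{1/2}\le C\Delta_n^{\frac{p+1}{2}},
\]
valid for every $p\ge 2$. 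Combined with your remainder bound (which already uses only $g\in C^2$), your own argument therefore yields the exponent $\tfrac12$ for \emph{all} $p\ge 2$, not merely for $p=2$ or $p\ge 3$; the ``cruder estimate $(p-2)/2$'' you mention never actually occurs in your scheme. So your approach in fact proves slightly more than \eqref{term5} states, and the restriction to $p=2$ or $p\ge3$ is not forced by this proposition in your route.
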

\begin{proof}
We begin the proof by the following remark. Scaling and independence properties of the Brownian motion and the Ito's formula yield
\begin{align*}
m_p 
= \frac{p(p-1)}{2\Delta_n^{\frac{p}{2}}}\int_{\CN} \espcond{i-1}{|W_s-W_{(i-1)\Delta_n}|^{p-2}}ds.
\end{align*}
Keeping in mind this representation of $m_p$, we decompose the integrand of \eqref{term5} as follows:
\begin{equation*}
\espcond{i-1}{\left| \frac{\Delta_i^n X}{\sqrt{\Delta_n}} \right|^p}-  m_p
|\sigma_{(i-1)\Delta_n}|^p = A_{1,i}^n + A_{2,i}^n  \qquad \text{where}
\end{equation*}
\begin{align*}
&A_{1,i}^n =
\espcond{i-1}{\left| \frac{\Delta_i^n X}{\sqrt{\Delta_n}} \right|^p}-\frac{p(p-1)}{2\Delta_n^{\frac{p}{2}}}\int_{\CN} \espcond{i-1}{\left|\int_{(i-1)\Delta_n}^s \sigma_udW_u\right|^{p-2}\sigma^2_s}ds,\\
&A_{2,i}^n =\frac{p(p-1)}{2\Delta_n^{\frac{p}{2}}}
\int_{\CN} \espcond{i-1}{\left|\int_{(i-1)\Delta_n}^s\sigma_udW_u\right|^{p-2} \sigma^2_s -\sigma_{(i-1)\Delta_n}^p \left|\int_{(i-1)\Delta_n}^sdW_u\right|^{p-2}} ds.
\end{align*}
\noindent 

\noindent Then, the result is a consequence of Lemmas \ref{corresp51} and \ref{corresp52} corresponding to $A_{1,i}^n$ and $A_{2,i}^n$ respectively.

\begin{lemma}
\label{corresp51}
Assume  $\mathbf{(SH)_2}$.
 Then,
\begin{equation}\label{pr1}
\esp{|A_{1,i}^n|} \le \begin{cases} C.\Delta_n&\textnormal{if $p=2$,}\\
C.\Delta_n^{(\frac{p}{2}-1)\wedge\frac{1}{2}}&\textnormal{if $p>2$.}
\end{cases}
\end{equation}
\end{lemma}

\noindent \textit{Proof.}
First, we use It\^o's formula to develop $A_i^n$:
\begin{align*}
\left| \frac{\Delta_i^n X}{\sqrt{\Delta_n}} \right|^p
&=\int_{\CN}p.\text{{\rm sgn}}(X_s-X_{(i-1)\Delta_n})
\frac{|X_s-X_{(i-1)\Delta_n}|^{p-1}}{\Delta_n^{\frac{p}{2}}} a_s ds\\
&\quad+\demi p(p-1)\int_{\CN}\frac{|X_s-X_{(i-1)\Delta_n}|^{p-2}}{\Delta_n^{\frac{p}{2}}}\sigma^2_sds + M^n_i,
\end{align*}
with $\espcond{i-1}{M^n_i}=0$. It follows that:
\begin{align*}
A_{1,i}^n
&= \espcond{i-1}{\int_{\CN}p.\text{{\rm sgn}}(X_s-X_{(i-1)\Delta_n})  \frac{|X_s-X_{(i-1)\Delta_n}|^{p-1}}{\Delta_n^{\frac{p}{2}}}a_sds}\\
&\quad +\demi p(p-1) \espcond{i-1}{\int_{\CN}R^n_i(s)\sigma^2_s ds},
\end{align*}
with $R^n_i(s):=\frac{|X_s-X_{(i-1)\Delta_n}|^{p-2}}{\Delta_n^{\frac{p}{2}}}-
\frac{|\int_{(i-1)\Delta_n}^s \sigma_udW_u|^{p-2}}{\Delta_n^{\frac{p}{2}}}$. Now, using that $a$ is bounded, we have
\begin{align*}
\esp{|X_s-X_{(i-1)\Delta_n}|^{p-1}|a_s| } &\le C\left((s-(i-1)\Delta_n)^{p-1} + \esp{ \left|\int_{(i-1)\Delta_n}^s\sigma_u dW_u\right|^{p-1} }\right)\\
&\le C(s-(i-1)\Delta_n)^{p-1} + C(s-(i-1)\Delta_n)^{1\vee\frac{p-1}{2}},
\end{align*}
owing to Inequality (\ref{majintsigmaW}).  Hence, for every $p\ge2$,
$$
\esp{ \int_{\CN}  p \frac{|X_s-X_{(i-1)\Delta_n}|^{p-1}}{\Delta_n^{\frac{p}{2}}}|a_s|ds} \le
 C\Delta_n^{\frac{1}{2}\vee (2-\frac{p}{2})}.$$
\noindent Now, we observe that $R^n_i(s)=0$ when $p=2$ so the proof is ended in this case.\\

\noindent When $p>2$, recall that for every $\bar{q} > 0$ and $\forall (u,v) \in \R^2$,
\begin{equation}\label{elem1}
\VA{|u|^{\bar{q}} -|v|^{\bar{q}} } \le\begin{cases} 
|u-v|^{\bar{q}}&\textnormal{if ${\bar{q}}\le1$}\\
C_{\bar{q}}\left(|u-v||u|^{{\bar{q}}-1}+|u-v|^{{\bar{q}}}\right)&\textnormal{if ${\bar{q}}>1$,}
\end{cases}
\end{equation}
applying it with $\bar{q}=p-2$ yields 
\begin{equation}\label{ineq34}
|R^n_i(s)| \le 
\begin{cases} \frac{1}{\Delta_n^{\frac{p}{2}}}|\int_{(i-1)\Delta_n}^s a_u du|^{p-2} &\textnormal{if $p\le3$}\\
C.\frac{1}{\Delta_n^{\frac{p}{2}}} \left( \VA{\int_{(i-1)\Delta_n}^sa_udu}.\VA{\int_{(i-1)\Delta_n}^s\sigma_udW_u}^{p-3} + \VA{\int_{(i-1)\Delta_n}^s a_udu}^{p-2}\right) &\textnormal{if $p>3$.}
\end{cases}
\end{equation}
First, let $p\in(2,3]$. Since $a$ is uniformly bounded,
$$
|R_i^n(s)| \le C\Delta_n^{-\frac{p}{2}}\left[(s-(i-1)\Delta_n)^{p-2~}\right].
$$ 
Then, $\esp{\sigma_s^3}$  uniformly bounded, Cauchy-Schwarz and inequality (\ref{majintsigmaW}) yield
\begin{equation} \label{lala1}
\esp{ \int_{\CN}|R^n_i(s)\sigma^2_s|ds }\le C\Delta_n^{\frac{p}{2}-1}.
\end{equation}
Assume now that $p>3$. First, for all $s \in[(i-1)\Delta_n,i\Delta_n],$ we derive from $a$
bounded and  Cauchy-Schwarz inequality  that 
\begin{align*}
&\esp{\VA{\int_{(i-1)\Delta_n}^s a_udu} \VA{\int_{(i-1)\Delta_n}^s\sigma_udW_u}^{p-3}\sigma_s^2}\leq C(s-(i-1)\Delta_n)\esp{\VA{\int_{(i-1)\Delta_n}^s\sigma_udW_u}^{2(p-3)}}^{\frac{1}{2}} 
\esp{\VA{\sigma_s^{4}}}^{\frac{1}{2}}.
\end{align*}
Therefore, using inequalities \eqref{majintsigmaW} and \eqref{EspSup}, we have:
$$
\esp{ \VA{\int_{(i-1)\Delta_n}^sa_udu}\sigma_s^2 \left| \int_{(i-1)\Delta_n}^s\sigma_udW_u \right|^{p-3} }\leq C.(s-(i-1)\Delta_n)^{\frac{p-3}{2}+1}.
$$
Thus, we derive from \eqref{ineq34}, the preceding inequality and \eqref{lala1} that when $p>3$,
\begin{align*}
\esp{ \int_{\CN}|R^n_i(s)\sigma^2_s|ds} 
&\leq \frac{C}{\Delta_n^{\frac{p}{2}}}
\int_{\CN} \left[ (s-(i-1)\Delta_n)^{p-2} + (s-(i-1)\Delta_n)^{\frac{p-1}{2}} \right] ds\leq C.\Delta_n^{\demi}.
\end{align*}
\end{proof}
\noindent We now focus on $ A_{2,i}^n$.
\begin{lemma} \label{corresp52}
Assume  $\mathbf{(SH)_2}$.
Then,
\begin{equation} \label{pr2}
\esp{| A_{2,i}^n|} \le\begin{cases} C.\Delta_n^{\frac{1}{2}}&\textnormal{if $p=2$,}\\
 C.\Delta_n^{(\frac{p}{2}-1)\wedge\frac{1}{2}}&\textnormal{if $p>2$.}
 \end{cases}
\end{equation}
\end{lemma}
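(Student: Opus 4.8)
The plan is to imitate the proof of Lemma \ref{corresp51}: turn $A_{2,i}^n$ into the time-integral of a single comparison error, decompose that error into a ``volatility fluctuation'' part and a ``stochastic-integral fluctuation'' part, and bound the $\bL^1$-norm of each part by H\"older's inequality fed with the increment controls of Lemma \ref{lemmecontrol}. Write $s_0=(i-1)\Delta_n$. Using the Gaussian recursion $m_p=(p-1)m_{p-2}$ together with the Brownian-scaling/It\^o representation of $m_p$ recalled just above, one has $m_p\sigma_{s_0}^p=\frac{p(p-1)}{2\Delta_n^{p/2}}\int_{\CN}\espcond{i-1}{\sigma_{s_0}^{p}|W_s-W_{s_0}|^{p-2}}\,ds$, hence $A_{2,i}^n=\frac{p(p-1)}{2\Delta_n^{p/2}}\int_{\CN}\espcond{i-1}{D(s)}\,ds$ with $D(s)=\sigma_s^2\bigl|\int_{s_0}^s\sigma_u\,dW_u\bigr|^{p-2}-\sigma_{s_0}^{p}|W_s-W_{s_0}|^{p-2}$. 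By Jensen and Fubini this gives $\esp{|A_{2,i}^n|}\le\frac{C_p}{\Delta_n^{p/2}}\int_{\CN}\esp{|D(s)|}\,ds$, so it is enough to estimate $\esp{|D(s)|}$ for $s\in[s_0,s_0+\Delta_n]$.

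For $p=2$ one simply has $D(s)=\sigma_s^2-\sigma_{s_0}^2$, and $\esp{|\sigma_s^2-\sigma_{s_0}^2|}\le\esp{|\sigma_s-\sigma_{s_0}|^2}^{1/2}\esp{(\sigma_s+\sigma_{s_0})^2}^{1/2}\le C|s-s_0|^{1/2}$ by \eqref{difsigmap} and \eqref{EspSup}; dividing the resulting $C\Delta_n^{3/2}$ by $\Delta_n$ yields the claimed $C\Delta_n^{1/2}$. For $p>2$ I would split $D(s)=(\sigma_s^2-\sigma_{s_0}^2)\bigl|\int_{s_0}^s\sigma_u\,dW_u\bigr|^{p-2}+\sigma_{s_0}^2\bigl(\bigl|\int_{s_0}^s\sigma_u\,dW_u\bigr|^{p-2}-|\sigma_{s_0}(W_s-W_{s_0})|^{p-2}\bigr)=:I_1(s)+\sigma_{s_0}^2 I_2(s)$. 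For $I_1$, a H\"older split with a pair of conjugate exponents $(m,m')$, $1<m<2$, gives $\esp{|I_1(s)|}\le\esp{|\sigma_s^2-\sigma_{s_0}^2|^m}^{1/m}\esp{\bigl|\int_{s_0}^s\sigma_u\,dW_u\bigr|^{(p-2)m'}}^{1/m'}\le C|s-s_0|^{1/2}\cdot C|s-s_0|^{(p-2)/2}=C|s-s_0|^{(p-1)/2}$, the first factor from one more H\"older step plus \eqref{difsigmap} and \eqref{EspSup}, the second from \eqref{majintsigmaW}. For $I_2$, substitute $\int_{s_0}^s\sigma_u\,dW_u=\sigma_{s_0}(W_s-W_{s_0})+R$ with $R=\int_{s_0}^s(\sigma_u-\sigma_{s_0})\,dW_u$ and apply \eqref{elem1} with $\bar q=p-2$: for $2<p\le3$ it yields $|I_2(s)|\le|R|^{p-2}$ and, via \eqref{majintsigmaW2} with the exponent $2(p-2)\le 2$ kept on $R$ (and \eqref{EspSup}), $\esp{\sigma_{s_0}^2|R|^{p-2}}\le C|s-s_0|^{p-2}$; for $p>3$ it yields $|I_2(s)|\le C\bigl(|R|\,\bigl|\int_{s_0}^s\sigma_u\,dW_u\bigr|^{p-3}+|R|^{p-2}\bigr)$, and a H\"older split combined with \eqref{majintsigmaW}, \eqref{majintsigmaW2} (exponent $2$ on $R$) and \eqref{EspSup} gives $\esp{\sigma_{s_0}^2 I_2(s)}\le C|s-s_0|^{(p-1)/2}$. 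Collecting the two pieces and using $|s-s_0|\le1$ together with $(p-1)/2\ge p-2$ for $p\le3$, we get $\esp{|D(s)|}\le C|s-s_0|^{p-2}$ when $2<p\le3$ and $\esp{|D(s)|}\le C|s-s_0|^{(p-1)/2}$ when $p>3$; plugging into $\esp{|A_{2,i}^n|}\le\frac{C_p}{\Delta_n^{p/2}}\int_{\CN}\esp{|D(s)|}\,ds$ produces $C\Delta_n^{p/2-1}$ for $2<p\le3$ and $C\Delta_n^{1/2}$ for $p>3$, i.e.\ $C\Delta_n^{(p/2-1)\wedge1/2}$.

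The main obstacle is the exponent bookkeeping: each of \eqref{difsigmap}, \eqref{majintsigmaW} and \eqref{majintsigmaW2} is sharp only in a specific regime, so one must in particular keep the power of $R$ at or below $2$ in order to exploit the extra ``$+1$'' in the exponent of \eqref{majintsigmaW2}, and one must treat $\bar q=p-2\le1$ and $\bar q>1$ separately through the two branches of \eqref{elem1}; the loss of rate as $p\downarrow 2$ (the bound degrades to $\Delta_n^{p/2-1}\to\Delta_n^{0}$) is genuine and stems precisely from the branch $|I_2(s)|\le|R|^{p-2}$, which cannot be improved. A subsidiary point is that $\sigma$ and $W$ are not independent here, since the volatility $Y$ carries a $W$-driven component, so one cannot factor any of these expectations and every estimate must pass through H\"older's inequality and the moment bounds of Lemma \ref{lemmecontrol}.
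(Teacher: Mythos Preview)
Your proof is correct and follows essentially the same route as the paper: the decomposition $D(s)=I_1(s)+\sigma_{s_0}^2 I_2(s)$ is exactly the paper's splitting into $B_{1,i}^n+B_{2,i}^n$ at the integrand level, and the subsequent treatment via \eqref{elem1}, H\"older's inequality and the moment bounds \eqref{EspSup}, \eqref{difsigmap}, \eqref{majintsigmaW}, \eqref{majintsigmaW2} matches the paper's argument, the only cosmetic difference being that in the $p>3$ branch of \eqref{elem1} you keep the factor $\bigl|\int_{s_0}^s\sigma_u\,dW_u\bigr|^{p-3}$ whereas the paper keeps $|\sigma_{s_0}(W_s-W_{s_0})|^{p-3}$, which is immaterial since the inequality \eqref{elem1} is symmetric in $u,v$ up to constants.
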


\begin{proof}
In case $p=2$ we deal with $\frac{1}{\Delta_n}\int_{\CN}\left(\sigma_s^2-\sigma_{(i-1)\Delta_n}^2\right)ds$. Hence by Cauchy-Schwarz, (\ref{EspSup}) and (\ref{difsigmap}), we deduce that,
\begin{align}
\esp{\frac{1}{\Delta_n}  \left|\int_{\CN}\left(\sigma_s^2-\sigma_{(i-1)\Delta_n}^2 \right) ds \right|} 
&\leq C \frac{1}{\Delta_n} \int_{\CN} \left[\esp{ |\sigma_s-\sigma_{(i-1)\Delta_n}|^{2}} \right]^{\demi} \le  C \Delta_n^\demi.\label{pr3}
\end{align}
When $p>2,$ first,
\begin{align*}
&\frac{1}{\Delta_n^{\frac{p}{2}}}\int_{\CN}  \espcond{i-1}{\sigma^2_s \VA{\int_{(i-1)\Delta_n}^s \sigma_udW_u}^{p-2}-\sigma_{(i-1)\Delta_n}^p \VA{\int_{(i-1)\Delta_n}^s dW_u}^{p-2}} ds \nonumber= B_{1,i}^n + B_{2,i}^n \qquad\text{with},\\
&B_{1,i}^n =\frac{1}{\Delta_n^{\frac{p}{2}}}\int_{\CN}\espcond{i-1}{(\sigma^2_s-\sigma^2_{(i-1)\Delta_n})\left|\int_{(i-1)\Delta_n}^s \sigma_udW_u\right|^{p-2}}ds,\\
&B_{2,i}^n =\frac{\sigma^2_{(i-1)\Delta_n}}{\Delta_n^{\frac{p}{2}}} \int_{\CN} \espcond{i-1}{\VA{\int_{(i-1)\Delta_n}^s \sigma_udW_u}^{p-2} -
\VA{\int_{(i-1)\Delta_n}^s \sigma_{(i-1)\Delta_n}dW_u}^{p-2}} ds.
\end{align*}
Let us focus on $B_{1,i}^n$ and let $\bar{q}>1$ and $\bar{r}>1$ satisfying $\frac{1}{\bar{q}}+\frac{1}{\bar{r}}=1$ and ${\bar{r}}> 2\vee \frac{2}{p-2}$. Using H\"older inequality, we have
\begin{align*}
&\esp{\VA{\sigma^2_s-\sigma^2_{(i-1)\Delta_n}}.\VA{\int_{(i-1)\Delta_n}^s \sigma_udW_u}^{p-2}}
\leq \left(\esp{\VA{\sigma^2_s-\sigma^2_{(i-1)\Delta_n}}^{\bar{q}}}\right)^{\frac{1}{{\bar{q}}}} . \left(\esp{ \VA{\int_{(i-1)\Delta_n}^s \sigma_udW_u}^{{\bar{r}}(p-2)}}\right)^{\frac{1}{{\bar{r}}}}.
\end{align*}
Then, on the one hand, applying again Holder's inequality applied with $\tilde{p}=2/\bar{q}$($>1$) and $\tilde{q}=\bar{q}/(\bar{q}-2)$, we derive from  (\ref{EspSup}) and (\ref{difsigmap}),
$$
\esp{|\sigma^2_s-\sigma^2_{(i-1)\Delta_n}|^{\bar{q}}}^{\frac{1}{{2\bar{q}}}}
\leq C \esp{|\sigma_s-\sigma_{(i-1)\Delta_n}|^{2}}^{\frac{1}{2}}
\leq C (s-(i-1)\Delta_n)^{\frac{1}{2}},
$$
On the other hand, using (\ref{majintsigmaW}),
$$\left(\esp{ \VA{\int_{(i-1)\Delta_n}^s \sigma_udW_u}^{{\bar{r}}(p-2)}}\right)^{\frac{1}{{\bar{r}}}}\leq
C(s-(i-1)\Delta_n)^{\frac{p-2}{2}}.$$
Thus,
\begin{equation}
\esp{ \VA{\sigma^2_s-\sigma^2_{(i-1)\Delta_n}}.\VA{\int_{(i-1)\Delta_n}^s \sigma_udW_u}^{p-2}}
\leq
C.(s-(i-1)\Delta_n)^{\frac{p-2}{2}+\frac{1}{2}} .
\end{equation}
Hence, we have
\begin{equation}
\label{st3}
\esp{\VA{B_{1,i}^n}}  \leq C\Delta_n^\frac{1}{2}.
\end{equation}
We now study $B_{2,i}^n$. Set $M_s^n=\int_{(i-1)\Delta_n}^s(\sigma_u-\sigma_{(i-1)\Delta_n})dW_u$. By \eqref{elem1},
\begin{align}
&\VA{ \VA{\int_{(i-1)\Delta_n}^s\sigma_udW_u}^{p-2}- \VA{\int_{(i-1)\Delta_n}^s\sigma_{(i-1)\Delta_n}dW_u}^{p-2}}
 \nonumber\\
&\qquad \le
\begin{cases}|M_s^n|^{p-2} &\textnormal{if $p\le 3$}\\
C.\left(|M_s^n|.\VA{\int_{(i-1)\Delta_n}^s\sigma_{(i-1)\Delta_n}dW_u}^{p-3} +|M_s^n|^{p-2}\right) &\textnormal{if $p> 3$}.
\end{cases}\label{part17}
\end{align}

\noindent Hence, if $p\le3$, it follows from \eqref{majintsigmaW2} and Cauchy-Schwarz inequality that
\begin{align} 
\esp{ \VA{B_{2,i}^n} }
&\leq \frac{C}{\Delta_n^{\frac{p}{2}}}\int_{(i-1)\Delta_n}^{i\Delta_n} \esp{|M_s^n|^{2(p-2)}}^\frac{1}{2} .\esp{|\sigma_{(i-1)\Delta_n}|^{4}}^{\frac{1}{2}} ds, \notag\\
&\le \frac{C}{\Delta_n^{\frac{p}{2}}}\int_{(i-1)\Delta_n}^{i\Delta_n} [(s-(i-1)\Delta_n)^{2(p-2)}]^{\frac{1}{2}}ds\le C\Delta_n^{\frac{p}{2}-1}.\label{st2}
\end{align}
Assume now that $p>3$. According to \eqref{part17}, we have two terms to manage with. On the one hand, by Cauchy-Schwarz and \eqref{majintsigmaW2}, we have
\begin{align*}
&\esp{ \sigma_ {(i-1)\Delta_n}^2|M_s^n|. \VA{\int_{(i-1)\Delta_n}^s  \sigma_ {(i-1)\Delta_n}dW_u}^{p-3}}\\ 
&\qquad\le \left( \esp{|M_s^n|^2} \right)^\demi  (s-(i-1)\Delta_n)^{\frac{p-3}{2}} \esp{ |\sigma_{(i-1)\Delta_n}|^{2(p-1)}}^{\frac{1}{2}} \le C(s-(i-1)\Delta_n)^{\frac{p-1}{2}}.
\end{align*}
On the other hand, Cauchy-Schwarz and \eqref{majintsigmaW2} applied with $q=2(p-2)\ge2$ yield
\begin{align*}
\ES\left[\sigma_ {(i-1)\Delta_n}^2|M_s^n|^{p-2}\right] 
&\le \left(\esp{\sigma_ {(i-1)\Delta_n}^4}\right)^{\frac{1}{2}}.\left(\esp{|M_s^n|^{2(p-2)}}\right)^{\frac{1}{2}}\le C(s-(i-1)\Delta_n)^\frac{p-1}{2}.
\end{align*}
Thus, it follows that when $p>3$, 
\begin{equation}\label{st1}
\esp{|B_{2,i}^n|} \le C.\Delta_n^{\frac{1}{2}}.
\end{equation}
Finally, we derive the lemma from \eqref{st3}, \eqref{st2}, \eqref{st1}.
\end{proof}

\subsection{Proof of main Theorems, a synthesis}
Gathering the previous steps, Theorems \ref{thprincipal} and \ref{thprincipal2} are now consequences of the classical following lemma:

\begin{lemma} \label{lemstbl}
Let $(X_n)$ and $(Y_n)$ be some sequences of random variables defined on $(\Omega,{\cal F},\PE)$ with values in a Polish space $E$. Assume that $(X_n)$ converges ${\cal L}-s$ to $X$ and that $(Y_n)$ converges in probability to $Y$. Then, the sequence of random variables $(Z_n=X_n+Y_n)$ converges ${\cal L} - s$ to $X+Y$.
\end{lemma}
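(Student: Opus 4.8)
\noindent The plan is to reduce the statement to two standard facts about ordinary convergence in distribution---Slutsky's lemma (a perturbation tending to $0$ in probability does not alter a weak limit) and the continuous mapping theorem---by means of the characterization of stable convergence through joint laws. Recall that $X_n\to X$ stably if and only if, for every ${\cal F}$-measurable random variable $V$ with values in a Polish space, the pair $(X_n,V)$ converges in distribution to $(X,V)$, where $V$ is carried to the extension $(\tilde{\Omega},\tilde{\cal F},\tilde{\PE})$ with its joint law with $X$. The direct implication follows from the definition by testing against product functions $H\,f$ together with a tightness argument (the sequence $(X_n)$ being tight, as it converges in law on the Polish space $E$); the converse follows by taking $V=H$ and a truncated continuous test function, since $H$ is bounded.

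\noindent With this at hand I would fix an arbitrary ${\cal F}$-measurable, Polish-valued random variable $V$ and establish that $(X_n+Y_n,V)$ converges in distribution to $(X+Y,V)$, which by the characterization is precisely the ${\cal L}-s$ convergence of $(X_n+Y_n)$ to $X+Y$. First, applying stable convergence of $(X_n)$ to the ${\cal F}$-measurable random variable $(Y,V)$ gives $(X_n,Y,V)\Rightarrow(X,Y,V)$. Next, since $Y_n-Y\to 0$ in probability, Slutsky's lemma applied to the perturbation $(0,Y_n-Y,0)$ yields $(X_n,Y_n,V)\Rightarrow(X,Y,V)$. Finally, the map $(x,y,v)\mapsto(x+y,v)$ being continuous (addition is continuous on $E$), the continuous mapping theorem gives $(X_n+Y_n,V)\Rightarrow(X+Y,V)$. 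Since $V$ was arbitrary, this completes the argument.

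\noindent The only genuinely delicate point is conceptual rather than computational: $Y$ is a bona fide random variable, not a deterministic constant, so it cannot simply be substituted into the definition of stable convergence, whose test functions $f$ are non-random. The joint-law characterization used above is exactly the device circumventing this; alternatively one could argue straight from the definition by first reducing to Lipschitz test functions $g$, replacing $g(X_n+Y_n)$ by $g(X_n+Y)$ via uniform continuity of $g$ and bounded convergence, approximating $Y$ by finitely valued random variables $Y^{(k)}\to Y$ in probability (for which $\esp{H\,g(X_n+Y^{(k)})}\to\tilde{\ES}[H\,g(X+Y^{(k)})]$ termwise by stable convergence of $(X_n)$, each $g(\cdot+y_j)$ being bounded continuous and each weight $H\,\1_{\{Y^{(k)}=y_j\}}$ bounded measurable), and finally upgrading to arbitrary bounded continuous $g$ by a portmanteau-type argument (Lipschitz functions being convergence-determining on a Polish space). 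All remaining ingredients are routine.
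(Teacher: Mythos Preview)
The paper does not actually prove this lemma; it is stated as ``classical'' and used without argument. Your proof is correct and follows one of the standard routes: the joint-law characterization of stable convergence, followed by Slutsky and the continuous mapping theorem. The alternative direct argument you sketch (Lipschitz test functions plus approximation of $Y$ by simple random variables) is also valid. One minor remark: the statement places $X_n,Y_n$ in a generic Polish space $E$ yet forms $X_n+Y_n$, so continuity of addition is an implicit hypothesis you rightly invoke; in the paper's applications $E=\R$, so there is no issue.
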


\noindent Indeed, focus for instance on statements \eqref{premiereconvergence} and \eqref{tioru}. By Lemma \ref{appoxM}, it is enough to prove these convergences under $\mathbf{(SH)_q}$ and $\mathbf{(H_q^2)}$.  Then, on the one hand, using Proposition \ref{lambda5} (and the fact that 
$\sup_{n\ge1}(\sqrt{h_n/\Delta_n})h_n^{1/q}<+\infty$ under the assumptions), Lemma \ref{L4} and Proposition \ref{P6} with $p\in\{1/2\}\cup[3,+\infty[$, we deduce respectively that
\begin{align*}
&\sqrt{\frac{h_n}{\Delta_n}}\Lambda_5^{n}(t)\xrightarrow[n\rightarrow+\infty]{\PRB} 0, \quad \sqrt{\frac{h_n}{\Delta_n}}\Lambda_1^{n}(t) \xrightarrow[n\rightarrow+\infty]{\PRB} 0,\\
&\sqrt{\frac{h_n}{\Delta_n}}\Lambda_3^{n}(t)\xrightarrow[n\rightarrow+\infty]{\PRB} 0\quad\textnormal{since $\Delta_n^{\frac{p-2}{2}\wedge \frac{1}{2}}=\Delta_n^\frac{1}{2}$ if $p\ge3$}.
\end{align*}
On the other hand, under the assumptions of Theorems \ref{thprincipal}$(i)$ and \ref{thprincipal2}$(i)$, one deduces from Proposition \ref{L5}$(i)$ and $(ii)$ respectively that,
\begin{equation}
\sqrt{\frac{h_n}{\Delta_n}}\left(\Lambda_2^{n}(t)+\Lambda_4^n(t)\right) \xrightarrow[n \rightarrow +\infty]{{\cal L}-s} f(t,p)U,
\end{equation}
Therefore, \eqref{premiereconvergence} and \eqref{tioru} follow from Lemma \ref{lemstbl} applied  with $Y=0$ and from 
the decomposition of the error stated in Section \ref{sectiondecomp}. Using Proposition \ref{L5}$(i)$ when $\sqrt{\Delta_n}/{h_n}\rightarrow\beta\in\R_+$, the same ideas lead to \eqref{convergebis}. Finally, applying Lemma \ref{lemstbl}
with $Y=\theta_.^0$, \eqref{partconv}
follows from Proposition \ref{L5}$(ii)$ and from the fact that
$$\frac{1}{h_n}\Lambda_5^{n}(t)\xrightarrow[n\rightarrow+\infty]{\PRB} \theta_t^0\quad
 \textnormal{(see Proposition \ref{lambda5})}.$$

\section{Asymptotic confidence interval}
\label{sec4}
Actually, Theorems \ref{thprincipal} and \ref{thprincipal2} allow us to build a confidence region to estimate for all $t$ parameter $\sigma_t.$
Since the variance limits in their second part depend on the unknown parameters
$\eta_1(t)$ and $\eta_2(t),$ we focus on their first part (i) when $h_n/\sqrt{\Delta_n}$ (respectively $\limsup_{n\rightarrow+\infty}h_n^{1/2+1/q}/\sqrt{\Delta_n}<+\infty$).
This confidence region could be defined as follows:
$$
\left\{
\sigma_t,~~\frac{\sqrt{r_n}|\Sigma(p,\Delta_n,h_n)_t-\sigma_t^p|}{\sqrt{\varphi_1(p,t,\sigma_t)}}\leq 1.96
\right\},
$$
and according to (\ref{premiereconvergence}) or (\ref{tioru}) with, for instance, asymptotic probability $0.95,$ we get:
$$
\pr{}{\frac{\sqrt{r_n}|\Sigma(p,\Delta_n,h_n)_t-\sigma_t^p|m_p}{\sigma_t^p\sqrt{m_{2p}-m_p^2}}\leq 1.96 }
\xrightarrow[n\rightarrow +\infty]{} 0.95.
$$
Thus, with  $0.95$ asymptotic confidence, 
\begin{equation}
\label{CI}
\sigma_t^p \in \left[ \frac{m_p\sqrt{r_n} \, \Sigma(p,\Delta_n,h_n)_t}{m_p\sqrt{r_n} +1.96\sqrt{m_{2p}-m_p^2}} ~,~ \frac{m_p\sqrt{r_n} \,\Sigma(p,\Delta_n,h_n)_t}{m_p\sqrt{r_n} -1.96\sqrt{m_{2p}-m_p^2}} \right].
\end{equation}
The confidence  interval length is about $r_n^{-\demi}.$ Actually, the most interesting point is that we obtain an asymptotic  confidence  interval for the relative error:
$$
\pr{}{\left|\frac{\Sigma(p,\Delta_n,h_n)_t}{\sigma_t^p}-1\right|\leq \frac{1.96\sqrt{m_{2p}-m_p^2}}{m_p\sqrt{r_n}}}
\xrightarrow[n\rightarrow \infty]{} 0.95.
$$
\begin{remark}
\label{inflence/p}
Finally, to compare this result with respect to $p$, we have to compare asymptotic confidence
intervals of $\sigma_t,$ depending on $p$, namely
$$\sigma_t \in \left[ \left(\frac{m_p\sqrt{r_n} \, \Sigma(p,\Delta_n,h_n)_t}{m_p\sqrt{r_n} +1.96\sqrt{m_{2p}-m_p^2}}\right)^{\frac{1}{p}} ~,~ \left(\frac{m_p\sqrt{r_n} \,\Sigma(p,\Delta_n,h_n)_t}{m_p\sqrt{r_n} -1.96\sqrt{m_{2p}-m_p^2}}\right)^{\frac{1}{p}} \right],
$$
This interval length is about $r_n^{-\frac{1}{2}}\frac{\sqrt{m_{2p}-m_p^2}}{pm_p},$ and this length order is 
unhappily increasing with  $p,$ so it could be not so good to use $p>2.$
\end{remark}
%
%
\section{Simulations}
\label{sec6}
In this section, we want to test numerically the volatility estimator.
In order to be able to compare the estimations with the true volatility, we do not use some real datas but get our observations from \textit{quasi-exact} simulations of toy models
(by quasi-exact, we mean simulations of the process using an Euler scheme with a very small time-discretization step).

\subsection{A numerical test in a continuous stochastic volatility model}
In this part, we consider the stochastic volatility model proposed in \cite{fouque} where the volatility is an Ornstein-Uhlenbeck process. Denote the price  by $(S_t)$ and by $(\sigma_t)$ the (non-negative) stochastic volatility. Set $X_t:=\log(S_t)$ and $v_t:=\sigma^2_t$. The model is defined by:
\begin{equation*}
\begin{cases}
dX_t=(r-\frac{1}{2}\sigma^2_t)dt+\sigma(t)dW_t^1\\
dv_t=a(m-v_t)dt+\beta(\rho dW^1_t+\sqrt{1-\rho^2}dW_t^2),
\end{cases}
\end{equation*}
where $r,a,\beta$ and $m$ are some positive parameters, $\rho\in[-1,1]$ and the processes $W^1$ and $W^2$ are independent one-dimensional Brownian motions.\\
We set $X_0=\log(50)$, $v_0=m$ and simulate \textit{quasi-exactly} $(X_t,v_t)$ at times $0,1/n,2/n,\ldots,1$ with the following parameters:
\begin{equation*}\label{paramfunc}
 r=0.05,\quad\rho=0,\quad a=1,\quad m=0.05,\quad \textnormal{and}\quad \beta=0.05.
 \end{equation*}
Using the simulated observations $X_0,X_{1/n},\ldots,X_1$, we compute the estimator $\Sigma(p,1/n,h_n)$ on $[h_n,1]$ and compare its value with the true volatility. In Figures \ref{figure1} and \ref{figure2}, we represent the corresponding graphics  for $n=1000$ and $n=10000$ and $h_n=n^{-1/2}$. In all the figures, we choose $p=2$ 
since as shown in the computation of the confidence interval length
in Remark \ref{inflence/p}, to increase $p$ is not a good choice. The process $(\sigma_t)$ is plotted as continuous line whereas the estimator $\Sigma(2,1/n,h_n)$ is plotted as discontinuous line.

\begin{center}
\begin{figure}[htbp]\centering
\includegraphics[width=12cm,height=4cm]{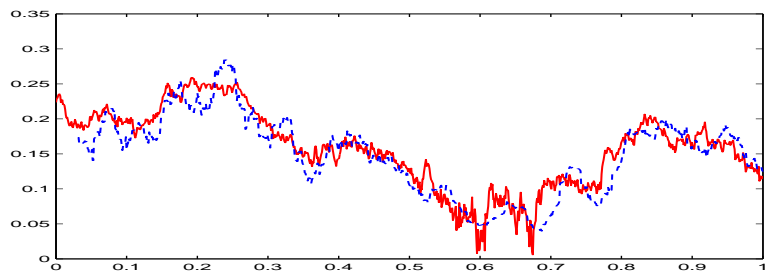}
\caption{$n=1000$, $h_n=n^{-1/2}$.}
\label{figure1}
\end{figure}

\begin{figure}[htbp]\centering
\includegraphics[width=12cm,height=4cm]{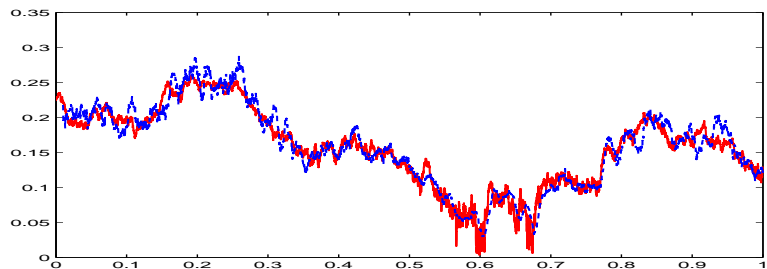}
\caption{$n=10000$, $h_n=n^{-1/2}$.}
\label{figure2}
\end{figure}
\end{center}

By  Remark \ref{rk15}, taking  $r_n=n^{\rho}$ with $\rho\in(0,1/2)$ and $p\in\{2\}\cap(5/2,+\infty)$ (or equivalently $h_n=n^{\rho-1}$), we obtain a rate of order $n^{\rho/2}$. In particular, we can derive that the best rate is obtained in the limit case $\rho=1/2$. This theoretical result is confirmed in the following computation. 
Denote by $E_n(p,h_n)$ the mean relative error defined by: 
\begin{equation}\label{erreurrelative}
E_n(p,h_n):=\frac{1}{n}\sum_{k=1}^n\frac{\left|\Sigma(p,n^{-1},h_n)_{k/n}^{1/p}-\sigma(\frac{k}{n})\right|}{\sigma(\frac{k}{n})}.
\end{equation}
We obtain the following results:
\begin{center}
\begin{tabular}{|*{4}{c|}}
   \hline
    &$E_n(2,n^{-0.4})$ &$E_n(2,n^{-0.5})$&$E_n(2,n^{-0.6})$ \\
    \hline
$n=10^3$ & 18,9\%  &$\bf 16,6\%$ &  18,6\% \\
\hline
$n=10^4$ & 12,2\%    &$\bf 11,0\%$&  12,3\%\\
 \hline 
 \hline
 &$E_n(4,n^{-0.4})$&$E_n(4,n^{-0.5})$&$E_n(4,n^{-0.6})$\\
 \hline
$n=10^3$ &20,3\%&$\bf 17,5\%$& 19,2\%\\
\hline
$n=10^4$ & 13,0\%&$\bf 11,9\%$&12,9\%\\
 \hline 
\end{tabular}
\end{center}
Here, Remark \ref{inflence/p} is confirmed by the fact: the estimations seem to be better with $p=2$ than with $p=4$.

\subsection{A numerical test in a jump model}
In this last part, we assume that the volatility is a jump process solution to a SDE driven by a tempered stable subordinator 
$(Z_t^{(\lambda,\beta)})$ with L\'evy measure $\pi(dy)=\1_{y>0}\exp(-\lambda y)/y^{1+\beta}dy$. This model can be viewed as a particular case of the Barndorff-Nielsen and Shephard model \cite{barndorff} (for other jump volatility models see $e.g.$ \cite{ConTan,vives06}):
\begin{equation*}
\begin{cases}
dX_t=(r-\frac{1}{2}\sigma^2(t))dt+\sigma(t)dW_t^1\\
dv_t=-\mu v_tdt+dZ_t^{(\lambda,\beta)}
\end{cases}
\end{equation*}
with the following choice of parameters:
$$r=0.05,\quad\mu=1,\quad\lambda=1 \quad\textnormal{and}\quad\beta=1/2.$$

This concerns Example (\ref{exlevy}) in Section \ref{sec1}
where Hypotheses $\mathbf{(H^1_q)}$ and $\mathbf{(H^2_q)}$ hold for any $q\geq\beta$ and Theorem \ref{thprincipal2} can
be applied.
As in the preceding example, we simulate  $(X_t,v_t)$ on the interval $[0,1]$ with $X_0=\log(50)$ and $v_0=0.05$. In order to compare the two types of models,  we chose some similar parameters. The main difference between these two models comes from the variations which are stronger in the first case. 
We obtain  a quasi-exact sequel $(X_{k/n},v_{k/n})$ with  $k\in\{0,\ldots,n\}$. In Figures \ref{figure5} and \ref{figure6}, we represent the estimated and true volatilities for some different choices of $h_n=n^{-1/2}$, $n=10^3$ and $n=10^4$.

\begin{center}
\begin{figure}[htbp]\centering
\includegraphics[width=12cm,height=4cm]{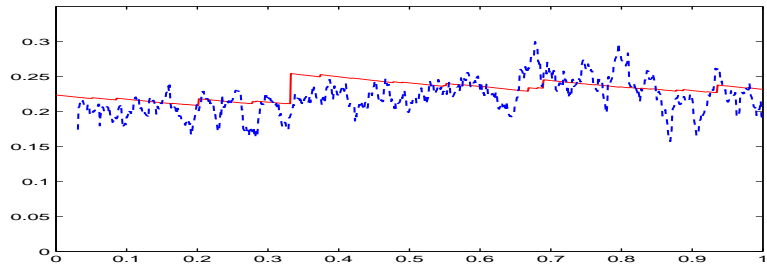}
\caption{$n=1000$, $h_n=n^{-1/2}$.}
\label{figure5}
\end{figure}

\begin{figure}[htbp]\centering
\includegraphics[width=12cm,height=4cm]{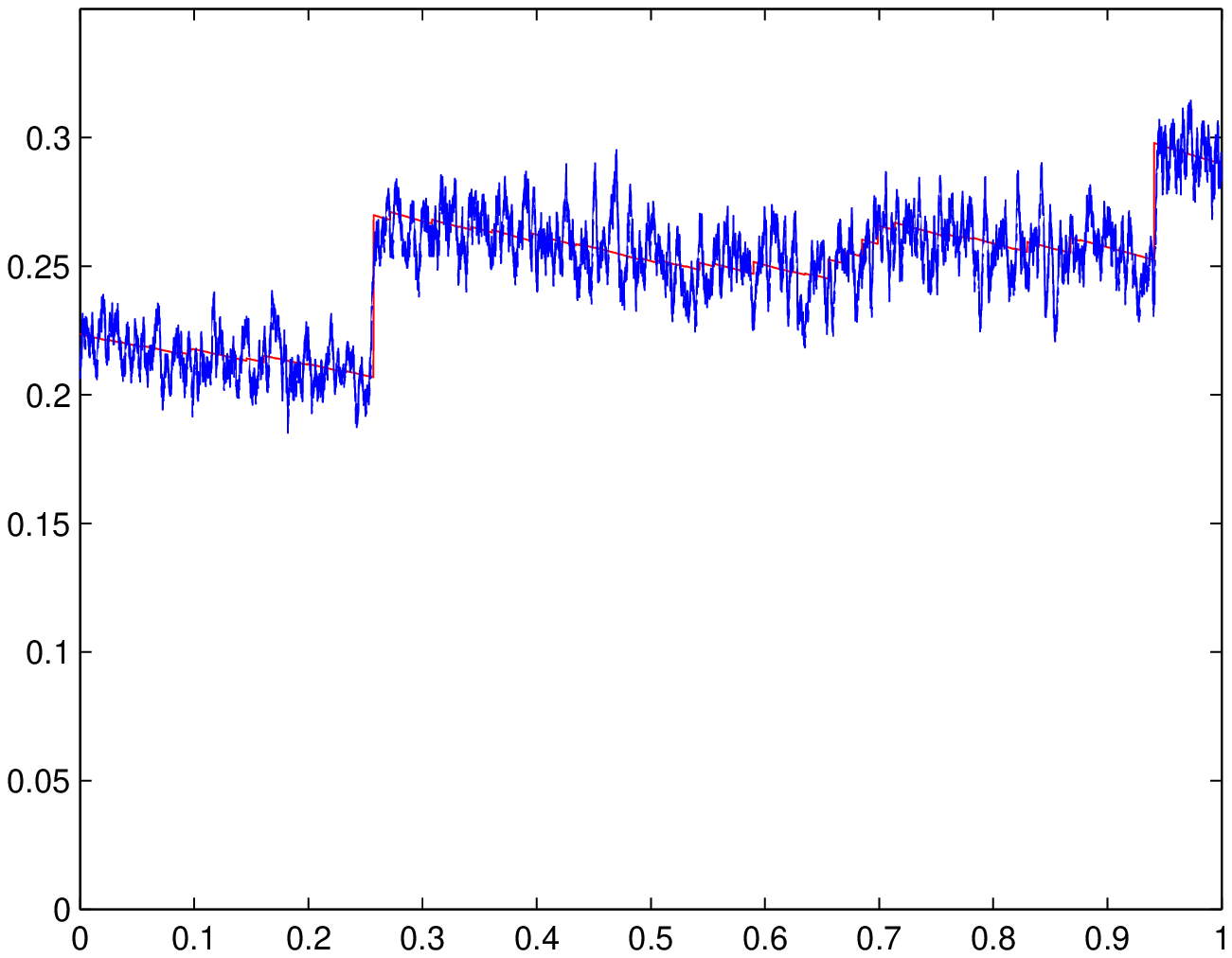}
\caption{$n=10000$, $h_n=n^{-1/2}$.}
\label{figure6}
\end{figure}
\end{center}

For these computations, we obtain the following mean relative errors:
\begin{center}
\begin{tabular}{|*{4}{c|}}
   \hline
    & $E_n(2,n^{-0.4})$ &$E_n(2,n^{-0.5})$&$E_n(2,n^{-0.6})$ \\
    \hline
$n=10^3$ & 13,2\%  &8,3\% &$\bf  6,3\%$ \\
\hline
$n=10^4$ & 9,1\%    & 5,5\%&$\bf  3,2\%$\\ 
\hline 
\hline
         &$E_n(4,n^{-0.4})$&$E_n(4,n^{-0.5})$&$E_n(4,n^{-0.6})$ \\
\hline
$n=10^3$ &15,5\%&11,0\%& $\bf 8,8\%$\\
\hline
$n=10^4$ & 10,1\%&6,6\%&$\bf 3,9\%$\\ 
\hline
\end{tabular}
\end{center}
It seems the best result is obtained with $h_n=n^{-0.6}$, according to Remark \ref{rk15} in case $\eta_1=\eta_2=0$: the best convergence rate is obtained
with $\rho=2/3.$
\section{Appendix}\label{preuveannexe}
\noindent \textbf{Proof of Lemma \ref{appoxM}:}
Since the arguments are almost the same for each statement of the main results, we only  prove \eqref{premiereconvergence} (with $p\in\{2\}\cup[3,+\infty)$ and $t\ge0$.)\\
Let $(X,\sigma)$ satisfy $\mathbf{(H_q^1)}$ and $\mathbf{(H_q^2)}$. Then, there exists a sequence  $(T_M)_{M\ge1}$ of stopping times increasing to $\infty$ such that the processes $(a_t)$, $(b_t)$, $(\eta_1(t))$, $(\eta_2(t))$ and 
$(\int (1\wedge |y|^q) F_t(dy))$ are bounded on $[0,T_M]$. Since $(\int_{|y|\ge1} y F_s(dy))_{s\ge0}$ is locally bounded, we can also assume that $|\Delta Y_t|\le M$ on $[0,T_M]$.
Then, let $X^M$, $\sigma^M$ be defined by $X^M=X_{t\wedge T^M}$ and $\sigma^M=|Y^M|$ where
\begin{align*}
&Y^M_t=\int_0^{t\wedge T_M} b_s ds+\int_0^{t\wedge T_M} \eta_1(s) dW^1_s+\int_0^{t\wedge T_M} \eta_2(s) dW^2_s\\
&+\int_0^{t\wedge T_M}\int_{\{|y|\le1\}} y(\mu-\nu)(ds,dy)+\int_0^{t\wedge T_M}\int_{\{1\le |y|\le M\}}y\mu(ds,dy).
\end{align*}
By construction, $(X^M,Y^M)$ satisfies $\mathbf{(SH)_q}$ and $\mathbf{(H^2_q)}$ for every $M\in\N$.
It follows from the assumptions of the proposition that for every $M\in\N$,
\begin{equation}\label{convergencestoppee}
\sqrt{r_n}\left(\Sigma^M(p,\Delta_n,h_n)_t-(\sigma^M_t)^p\right)) \xrightarrow[n \rightarrow +\infty]{{\cal L}-s} \sqrt{\varphi_1(p,t,\sigma)} U,
\end{equation}
where $\Sigma^M$ is the statistic related to 
$(X^M,\sigma^M)$ as in (\ref{defSigma}),
 $U\sim {\cal N}(0,1)$ and $U$ is independent of ${\cal F}_{t}$. Let us now prove \eqref{premiereconvergence}.
Let $g$ be  a  bounded continuous function on $\R$ and let $H$ be a bounded ${\cal F}$-measurable random variable. Then, for all $M\in\N$,
\begin{align*}
&\esp{H g(\sqrt{r_n}(\Sigma(p,\Delta_n,h_n)_t-(\sigma_t)^p)}-\tilde{\mathbb{E}}[g(\sqrt{\varphi_1(p,t,\sigma)} U)] =\\&
\esp{H g\left(\sqrt{r_n}(\Sigma(p,\Delta_n,h_n)_t-\sigma_t^p)\right)}-
\esp{ H g\left(\sqrt{r_n}(\Sigma^M(p,\Delta_n,h_n)_t-(\sigma^M_t)^p)\right)}\\
&+ \esp{ H g\left(\sqrt{r_n}(\Sigma^M(p,\Delta_n,h_n)_t-(\sigma^M_t)^p)\right) }- \tilde{\mathbb{E}}[ H g\left( \sqrt{\varphi_1(p,t,\sigma^M)} U\right)]\\
&+ \tilde{\mathbb{E}}[H g\left( \sqrt{\varphi_1(p,t,\sigma^M)} U\right)]- \tilde{\mathbb{E}}[H g\left( \sigma_t,\sqrt{\varphi_1(p,t,\sigma)} U\right)].
\end{align*}
Set $B_M=\{\omega, t+h_1<T_M(\omega)\}.$ By construction,  on $B_M$,
 $\sigma^M_t=\sigma_t$ and $\Sigma^M(p,\Delta_n,h_n)_t=\Sigma(p,\Delta_n,h_n)_t$
 for every $n\in\N$. Thus, uniformly in $n$, the  first and third right-hand side terms are bounded by $2\|g\|_\infty\|H\|_\infty \pr{}{B_M^c}$ for every $M\in\N.$
Then, since Assumption $\mathbf{(H_q^1)}$ implies that $T_M\rightarrow+\infty$ $a.s$, $\pr{}{B_M^c} \rightarrow 0$ as $M\rightarrow+\infty$. Now, by \eqref{convergencestoppee}, for every $M\in\N$,
$$
\esp{ H g\left(\sqrt{r_n}(\Sigma^M(p,\Delta_n,h_n)_t-(\sigma^M_t)^p)\right)} \xrightarrow[n\rightarrow+\infty]{} \tilde{\mathbb{E}}[ H g\left( \sqrt{\varphi_1(p,t,\sigma^M)} U\right)]$$
and the result follows.\\

\noindent \textbf{Proof of Lemma \ref{lemmecontrol}}
(i). Let us prove \eqref{EspSup}. Thanks to Jensen's inequality, we can only consider the case $r\ge2$. Since the jumps of $Y$ are bounded, we can compensate the big jumps and write
$$
Y_t=\int_0^t\tilde{b}_udu+\int_0^t\eta_1(s) dW_s+\int_0^t\eta_2(s) dW^2_s+\int_0^t
\int_{\R} y (\mu-\nu)(ds,dy),
$$
where $\tilde{b}_t=b_t+\int_{\{|y|>1\}} y F_t(dy).$ Then,
using $\mathbf{(SH)_2}$ and Burkholder-Davis-Gundy inequality, we have for every $r\ge2$:
$$
\esp{\sup_{t\in[0,T]}|\sigma_t|^r}\le C\left(T^r+T^{r/2}+\esp{\sup_{t\in[0,T]}\left(\int_0^t\int_{\R}
y^2\mu(ds,dy)\right)^{r/2}}\right),
$$
where $C$ is a deterministic constant. Let us focus on the last term of the right-hand side. We can write:
\begin{align*}
&\esp{\sup_{t\in[0,T]}\left(\int_0^t\int_{\R}
y^2\mu(ds,dy)\right)^{\frac{r}{2}}}\\&\qquad \le C\esp{\sup_{t\in[0,T]}\left(\int_0^t\int_{\R}
y^2(\mu-\nu)(ds,dy)\right)^{\frac{r}{2}}} +C\esp{\left(\int_0^T\int_{\R} y^2\nu(ds,dy)\right)^{\frac{r}{2}}}\\
&\qquad\le C \left(\esp{\left(\int_0^T\int_{\R}
y^4\mu(ds,dy)\right)^{\frac{r}{4}}}+T^{\frac{r}{2}}\right),
\end{align*}
where in the last inequality, we again used Burkholder-Davis-Gundy inequality and $\mathbf{(SH)_2}$.
Set $k_0=\min\{k\in\N, r\le 2^k\}$. By an iteration, we obtain
\begin{align*}
&\esp{\sup_{t\in[0,T]}\left(\int_0^t\int_{\R} y^2\mu(ds,dy)\right)^{r/2}}\le C \left(\esp{\left(\int_0^T\int_{\R}
y^{2^{k_0}}\mu(ds,dy)\right)^{r/2^{k_0}}}
+C\sum_{i=1}^{k_0}T^{r/2^{k}}\right).
\end{align*}
Using that $|u+v|^\rho\le |u|^\rho+|v|^\rho$ when $\rho\le 1$, we have
$$
\esp{\sup_{t\in[0,T]}\left(\int_0^t\int_{\R} y^{2^{k_0}}\mu(ds,dy)\right)^{r/2^{k_0}}}\le
\esp{\left(\int_0^T\int_{\R} y^{r}\mu(ds,dy)\right)}\le C_T,
$$
and the result follows. \\
\noindent(ii). For \eqref{difsigmap}, when $r \ge 2$, we obtain by a similar approach:
$$
\esp{|\sigma_t-\sigma_s|^r~|~{\cal F}_s} \le C_T (|t-s|^r+|t-s|^{r/2})+|t-s|),
$$
where $C_T$ is a deterministic constant. This yields the result when $r \ge 2$. When $r<2$ the result follows from the Jensen inequality. Let us  prove \eqref{majintsigmaW}. If $0<q<2$, using Jensen inequality and the concavity of the map $x\mapsto x^{q/2}$, we have
$$
\esp{ \left|\int_s^t\sigma_udW_u\right| ^q}\le \esp{\left|\int_s^t\sigma_u dW_u\right|^{2}}^{\frac{q}{2}} \leq  \left(\int_s^t\esp{\sup_{s\le u\le t}\sigma_u^2}du \right)^{q/2}\le C(t-s)^{\frac{q}{2}},
$$
owing to \eqref{EspSup}. When $q\ge2$, we first derive from Burkholder-Davis-Gundy inequality that
$$
\esp{ \left|\int_s^t\sigma_udW_u\right| ^q} \leq C \esp{\left|\int_s^t\sigma^2_udu\right| ^{q/2}}. 
$$
Then,  Jensen inequality and the convexity of the map $x\mapsto x^{q/2}$ yield
$$
\left( \frac{1}{t-s} \int_s^t \sigma_u^2du \right)^{q/2} \leq \frac{1}{t-s} \int_s^t \sigma_u^q du,
$$
Thus,
$$
\esp{ \left |\int_s^t\sigma_u^2du \right|^{q/2}}\leq (t-s)^{q/2-1}\int_s^t \esp{ \sigma_u^q}du,
$$
and \eqref{majintsigmaW} again follows from \eqref{EspSup}.\\
\noindent Finally, let us prove \eqref{majintsigmaW2}. With similar arguments as previously,
we obtain:
\begin{equation*}
\esp{ \left|\int_s^t(\sigma_u-\sigma_s)dW_u\right| ^q} \leq C\begin{cases}
 \left(\int_s^t\esp {|\sigma_u-\sigma_s|^2}du\right)^{q/2} &\textnormal{if $q\in(0,2]$}\\
 (t-s)^{q/2-1}\int_s^t\esp{|\sigma_u-\sigma_s|^q}du&\textnormal{if $q\ge 2$,}
 \end{cases}
 \end{equation*}
and \eqref{majintsigmaW2} follows from \eqref{difsigmap}. \\

\noindent \textbf{Acknowledgements.} We are deeply grateful  to  the listeners  of our presentations
and to Jean Jacod for their valuable advices.

\end{document}